\newcommand{\longerversion}[1]{}
\newcommand{\longversion}[1]{#1}
\newcommand{\shortversion}[1]{}
\newlength\shlength
\newcommand\xshlongvec[2][0]{\setlength\shlength{#1pt}%
  \stackengine{-5.6pt}{$#2$}{\smash{$\kern\shlength%
    \stackengine{7.55pt}{$\mathchar"017E$}%
      {\rule{\widthof{$#2$}}{.57pt}\kern.4pt}{O}{r}{F}{F}{L}\kern-\shlength$}}%
      {O}{c}{F}{T}{S}}
\def\etal{et~al.{}}
\newcommand{\mtext}[1]{\ensuremath{\mathcal{#1}}}
\newcommand{\cnt}[0]{\ensuremath{\#}}
\newcommand{\cntc}[0]{\ensuremath{\cnt\cdot}}
\DeclareMathOperator{\SP}{ProofStates}
\DeclareMathOperator{\checkord}{CheckOrder}
\DeclareMathOperator{\checkmod}{CheckMod}
\DeclareMathOperator{\gatherproof}{proven}
\DeclareMathOperator{\possord}{ords}
\newcommand{\MAI}[2]{\ensuremath{#1^+_{#2}}}%
\newcommand{\MAR}[2]{\ensuremath{#1^-_{#2}}}%
\newcommand{\MARR}[2]{\ensuremath{#1^\sim_{#2}}}%
\newcommand{\FIX}[1]{#1}
\DeclareMathOperator{\post}{post-order}
\renewcommand{\P}{\ensuremath{\textsc{P}}\xspace}
\newcommand{\NP}{\ensuremath{\textsc{NP}}\xspace}
\newcommand{\SIGMA}[2]{\ensuremath{\Sigma_{\textrm{#1}}^{\textrm{#2}}}}
\tikzstyle{tdnode} = [draw,rounded corners,top color=vertexTopColor,bottom color=vertexBottomColor,minimum size=1.5em]
\tikzstyle{stdnode} = [tdnode, font=\scriptsize]
\tikzstyle{stdnodecompact} = [stdnode, inner sep = 1.5pt, outer sep = 0.1pt]
\tikzstyle{stdnodetable} = [stdnode, inner sep = 0.5pt, outer sep = 0]
\tikzstyle{stdnodenum} = [minimum size=1.5em, font=\scriptsize]
\tikzstyle{tdedge} = [-,draw,thick]
\tikzstyle{tdlabel} = [draw=none, rectangle, fill=none, inner sep=0pt, font=\scriptsize]
\colorlet{vertexTopColor}{white}
\colorlet{vertexBottomColor}{black!10}
\newif\iflong
\newcommand{\restrict}[2]{\ensuremath{#1\cap #2}}
\newcommand{\SB}{\{}%
\newcommand{\SM}{\mid}%
\newcommand{\SE}{\}}%
\def\hy{\hbox{-}\nobreak\hskip0pt}
\newcommand{\mdpa}[1]{\ensuremath{\mathtt{PCNT}_{#1}}}
\newcommand{\ta}[1]{\ensuremath{2^{#1}}}
\newcommand{\Card}[1]{\left|#1\right|}
\newcommand{\CCard}[1]{\|#1\|}
\newcommand{\algo}[1]{\ensuremath{\mathbb{#1}}}
\DeclareMathOperator{\width}{width}
\DeclareMathOperator{\children}{children}
\newcommand{\sharpCONP}[0]{\#\ensuremath{\cdot}co\textsc{NP}}
\newcommand{\sharpNP}[0]{\#\ensuremath{\cdot}\textsc{NP}}
\newcommand{\sharpSigma}[1]{\#\ensuremath{\cdot\Sigma_{#1}P}}
\newcommand{\algorithmfootnote}[2][\footnotesize]{
  \let\old@algocf@finish\@algocf@finish
  \def\@algocf@finish{\old@algocf@finish
    \leavevmode\rlap{\begin{minipage}{\linewidth}
    #1#2
    \end{minipage}}
  }
}
\DeclareMathOperator{\orig}{\algo{A}\hy origins}
\DeclareMathOperator{\origs}{\algo{A}\hy origins}
\newcommand{\origa}[1]{\operatorname{#1\hy origins}}
\newcommand{\origse}[1]{\operatorname{#1\hy origins}}
\DeclareMathOperator{\Ext}{Ext}
\DeclareMathOperator{\Exts}{Exts}
\DeclareMathOperator{\PExt}{SatExt}
\DeclareMathOperator{\pmc}{pasc}
\DeclareMathOperator{\ipmc}{ipasc}
\DeclareMathOperator{\bucket}{=_P}%
\DeclareMathOperator{\buckets}{buckets}
\DeclareMathOperator{\subbuckets}{sub\hy buckets}
\newcommand{\TTT}{\ensuremath{\mathcal{T}}}%
\newcommand{\WWW}{\ensuremath{\mathcal{W}}}%
\newcommand{\por}{\vee}
\newcommand{\eqdef}{\ensuremath{\,\mathrel{\mathop:}=}}
\newcommand{\hsep}{\leftarrow\,}
\newcommand{\ASP}{\textsc{As}\xspace}
\newcommand{\SAT}{\textsc{SAT}\xspace}
\newcommand{\cSAT}{\cnt\textsc{SAT}\xspace}
\newcommand{\cASP}{\cnt\textsc{As}\xspace}
\newcommand{\PASP}{\cnt\textsc{PAs}\xspace}%
\newcommand{\PDASP}{\PASP}%
\newcommand{\AlgA}{\algo{A}}%
\newcommand{\AlgS}{\AlgA}%
\newcommand{\PROJ}{\algo{PROJ}\xspace}
\newcommand{\at}{\text{\normalfont at}}
\newcommand{\bigO}[1]{\ensuremath{{\mathcal O}(#1)}}
\newcommand{\CCC}{\ensuremath{\mathcal{C}}}%
\newcommand{\tuplecolor}[1]{\textcolor{#1}}
\newcommand{\inputPredColor}{orange!55!red}
\newcommand{\outputPredColor}{blue!45!black}
\newcommand{\statePredColor}{green!62!black}
\newcommand{\tabval}{\ensuremath{u}}
\newcommand{\tab}[1]{\ensuremath{\tau_{#1}}}
\newcommand{\att}[1]{\ensuremath{\at_{\hspace{-0.05em}\leq\hspace{-0.05em}#1}}}
\newcommand{\attneq}[1]{\ensuremath{\at_{\hspace{-0.05em}<\hspace{-0.05em}#1}}}
\newcommand{\prog}{\ensuremath{\Pi}}
\newcommand{\progt}[1]{\ensuremath{\prog_{\hspace{-0.05em}\leq\hspace{-0.05em}#1}}}
\newcommand{\progtneq}[1]{\ensuremath{\prog_{\hspace{-0.05em}<\hspace{-0.05em}#1}}}
\newcommand{\dpa}{\ensuremath{\mathtt{DP}}}
\newcommand{\Tab}[1]{\ensuremath{\text{C-Tabs}}}
\def\thyph{\text{-}\penalty0\hskip0pt\relax}
\newcommand{\ATab}[1]{\ensuremath{#1\thyph\text{Comp}}}
\newcommand{\tw}[1]{\mathit{tw}(#1)}
\newcommand{\Nat}{\mathbb{N}} %
\DeclareMathOperator{\type}{type}
\newcommand{\intr}{\textit{int}}
\newcommand{\leaf}{\textit{leaf}}
\newcommand{\rem}{\textit{rem}}
\newcommand{\join}{\textit{join}}
\newtheorem{observation}{Observation}
\newtheorem{example}{Example}
\newtheorem{conjecture}{Conjecture}
\newtheorem{proposition}{Proposition}
\newtheorem{theorem}{Theorem}
\newtheorem{lemma}{Lemma}
\newtheorem{definition}{Definition}
\newtheorem{corollary}{Corollary}}
\newenvironment{restatelemma}[1][\unskip]{%
  \begingroup

}%
{%
  \addtocounter{lemma}{-1}
  \endgroup
}%
\newenvironment{restateproposition}[1][\unskip]{%
  \begingroup

}%
{%
  \addtocounter{proposition}{-1}
  \endgroup
}%
\newenvironment{restatetheorem}[1][\unskip]{%
  \begingroup

}%
{%
  \addtocounter{theorem}{-1}
  \endgroup
}%
\DeclareMathOperator{\pcnt}{pasc}
\DeclareMathOperator{\local}{local}
\DeclareMathOperator{\sipmc}{s-ipasc}
\DeclareMathOperator{\icnt}{ipasc}
\newcommand{\PRIM}{\ensuremath{{\algo{PHC}}}\xspace}
\newcommand{\INC}{\ensuremath{{\algo{INC}}}\xspace}
\renewenvironment{example}{\begin{EXa}}{\hfill\ensuremath{\blacksquare}\end{EXa}}
  \spnewtheorem{EXa}{Example}{\bfseries}{\normalfont}
\begin{document}
\shortversion{\mainmatter}              %
 \title{Treewidth and Counting Projected Answer Sets%
\thanks{This work extends an abstract~\cite{FichteHecher18c} explaining only concepts, and a preliminary workshop paper\longversion{~\cite{FichteHecher18}}, and %
     has been supported by FWF Grant Y698 and DFG Grant HO 1294/11-1. 
     The second author is also affiliated with University of
     Potsdam,~Germany. \longversion{The final publication will be available at Springer proceedings of LPNMR 2019.} 
   }
 }
 \longversion{
 \author{Johannes K. Fichte$^1$%
\and Markus Hecher$^2$\\%
\longversion{\\[3pt]
    $^1$: TU Dresden, Germany, \mailsb\\
    $^2$: TU Wien, Austria, \mailsa\\}
}}
\shortversion{
 \titlerunning{Treewidth and Counting Projected Answer Sets}
 \authorrunning{Fichte and Hecher}
 \tocauthor{Johannes K. Fichte, and Markus Hecher}
 \author{Johannes K. Fichte\inst{1} \and Markus Hecher\inst{2}}
 \institute{TU Dresden, Germany, %
   \email{johannes.fichte@tu-dresden.de} \and TU Wien, Austria, %
   \email{hecher@dbai.tuwien.ac.at}
 }}
\maketitle
\begin{abstract}
  In this paper, we introduce novel algorithms to solve
  \emph{projected answer set counting (\PASP)}. \PASP asks to count
  the number of answer sets with respect to a given set of
  \emph{projected atoms}, where multiple answer sets that are
  identical when restricted to the projected atoms count as only one
  projected answer set.
  Our algorithms exploit small treewidth of the primal graph of the
  input instance by dynamic programming (DP).

  We establish a new algorithm for head-cycle-free (HCF) programs and lift
  very recent results from projected model counting to \PASP when the
  input is restricted to HCF programs.
  Further, we show how established DP algorithms for tight, normal, and disjunctive
  answer set programs can be extended to solve \PASP.
  \FIX{Our algorithms run in polynomial time while requiring double exponential time in the treewidth 
  for tight, normal, and HCF programs, and triple exponential time for disjunctive
  programs.} %

  Finally, we take the exponential time hypothesis (ETH) into account
  and establish lower bounds of bounded treewidth algorithms for
  \PASP. \FIX{Under ETH, one cannot %
  significantly improve our obtained worst-case runtimes.}
\end{abstract}

\section{Introduction}
Answer Set Programming (ASP)~\cite{BrewkaEiterTruszczynski11} is an
active research area of artificial intelligence. It provides a
logic-based declarative modelling language and problem solving
framework\longversion{~\cite{GebserKaminskiKaufmannSchaub12}} for hard
computational problems\longversion{, which has been widely
  applied~\cite{BalducciniGelfondNogueira06a,NiemelaSimonsSoininen99,NogueiraBalducciniGelfond01a,GuziolowskiEtAl13a}}.
In ASP, questions are encoded into rules and constraints that form a
disjunctive (logic) program over atoms. Solutions to the program are
so-called answer sets.
Lately, two computational problems of ASP have received increasing
attention, namely, \cASP~\cite{FichteEtAl17a} and
\PASP~\cite{Aziz15a}.
The problem \cASP asks to \emph{output the number of answer sets} of a given
disjunctive program. When considering computational complexity \cASP
can be classified as \sharpCONP-complete~\cite{FichteEtAl17a}, which
is even harder than counting the models of a Boolean formula.
A natural abstraction of \cASP is to consider projected counting where
we ask to count the answer sets of a disjunctive program with respect
to a given set of \emph{projected atoms} (\PASP).  Particularly, %
\FIX{multiple answer sets that are identical when reduced to the projected
atoms are considered as only one solution}. \FIX{Intuitively, \PASP is needed to count answer sets without counting functionally independent auxiliary atoms.}
Under standard assumptions the problem~\PASP is complete for the class
\sharpSigma{2}.
However, if we take all atoms as projected, then \PASP is again
\sharpCONP-complete and if there are no projected atoms then it is
simply $\Sigma^p_2$-complete.
But some fragments of ASP have lower complexity.  A prominent example
is the class of \emph{head-cycle-free (HCF)}
programs~\cite{Ben-EliyahuDechter94}, which requires the absence of
cycles in a certain graph representation of the program. \FIX{
Deciding whether a HCF program has an answer~set%
~is~%
\NP-complete.}

A way to solve computationally hard problems is to employ
parameterized algorithmics~\cite{CyganEtAl15}, which exploits certain
structural restrictions in a given input instance. Because structural
properties of an input instance often allow for algorithms that solve
problems in polynomial time \FIX{in the} size of the input and exponential
time in a measure of the structure, whereas under standard assumptions
an efficient algorithm is not possible if we consider only the size of
the input.
In this paper, we consider the treewidth of a graph representation
associated with the given input program as structural restriction,
namely the \emph{treewidth of the primal
  graph}~\cite{JaklPichlerWoltran09}.
Generally speaking, treewidth\longversion{\footnote{Google Scholar outputs 18,800
  results employing treewidth (queried: March.\ 27, 2019).}} measures the closeness of a graph to a
tree, based on the observation that problems on trees are often easier
to solve than on arbitrary graphs.
Our results are as follows: 
\FIX{We establish the classical complexity of \PASP and a novel
  algorithm} that solves ASP problems by exploiting treewidth when the
input program is restricted to HCF programs in runtime single
exponential in the treewidth. %
We introduce a framework for counting projected answer sets by
exploiting treewidth. Therefore, we lift recent results from projected
model counting in the domain of Boolean formulas to counting projected
answer sets. We establish algorithms that are (i)~double exponential
in the treewidth if the input is restricted to tight, normal or HCF
programs and (ii)~triple exponential in the treewidth if we allow
disjunctive programs.
Using the exponential time hypothesis (ETH), we establish that
\PASP %
\emph{cannot} \FIX{be solved in time better than double exponential in
  the treewidth for tight, normal, and HCF programs, and not better
  than triple exponential for disjunctive programs,
  respectively.} %

\smallskip\noindent\textbf{Related Work.}
Gebser, Kaufmann and Schaub~\cite{GebserKaufmannSchaub09a} 
considered projected enumeration for
ASP. Aziz~\cite{Aziz15a} introduced techniques to modify modern
\longversion{ASP-solvers  in order to}\shortversion{solvers to} count projected answer sets.
\longversion{%
Jakl, Pichler and Woltran~\cite{JaklPichlerWoltran09} presented DP algorithms that solve ASP
counting in time double exponential in the treewidth.
Pichler~\etal~\cite{PichlerEtAl14} investigated the complexity of extended programs
and also presented DP algorithms for it. We employ ideas from their
algorithms to solve head-cycle-free programs.
}%
Fichte~\etal~\cite{FichteEtAl17a\longversion{,FichteEtAl17b}} %
presented algorithms \FIX{to solve~\cASP} for the full standard
syntax of modern ASP solvers. %
Recently, Fichte~\etal~\cite{FichteEtAl18} gave DP algorithms for projected \cSAT
including lower bounds\FIX{, c.f., Table~\ref{tbl:summary}}.

\vspace{-.7em}
\section{Preliminaries}
\vspace{-.5em}
\textbf{Basics and Combinatorics.} \longversion{For a set~$X$, let
  $\ta{X}$ be the \emph{power set of~$X$} consisting of all
  subsets~$Y$ with $\emptyset \subseteq Y \subseteq X$.}
For given sequence~$\vec s$ and integer~$i>0$, $\vec s_{(i)}$ refers
to the $i$-th element of~$\vec s$
and~$<_{\vec s} \eqdef \SB (\vec s_{(i)},\vec s_{(j)}) \SM 1 \leq i <
j \leq \Card{\vec s}\SE$ denotes its \emph{induced ordering}.
Given \longversion{some integer~$n$ and a family of }finite
sets~$X_1$, $X_2$, $\ldots$, $X_n$\FIX{, the generalized} %
\emph{inclusion-exclusion
  principle}\longversion{~\cite{GrahamGrotschelLovasz95a}} states
that\longversion{ the number of elements in the union over all subsets
  is}
$\Card{\cup^n_{j = 1} X_j} = \Sigma_{I \subseteq \{1, \ldots, n\}, I
  \neq \emptyset} (-1)^{\Card{I}-1} \Card{\cap_{i \in I} X_i}$.

\smallskip\noindent\textbf{Computational Complexity.}
\longversion{We assume familiarity with standard notions in
  computational complexity~\cite{Papadimitriou94} and parameterized
  complexity~\cite{CyganEtAl15}, and use counting complexity classes
  as defined by Durand, Hermann and
  Kolaitis~\cite{DurandHermannKolaitis05}.
}
\shortversion{%
  For parameterized complexity we refer to~\cite{CyganEtAl15} and for
  counting complexity classes to~\cite{DurandHermannKolaitis05}.
}%
Let $\Sigma$ and $\Sigma'$ be finite alphabets,
$I \in \Sigma^*$ \FIX{be an} \emph{instance} and $\CCard{I}$
\longversion{denote the \emph{size} of~$I$.}%
\FIX{\shortversion{denote its \emph{size}.}}
A \emph{witness function} %
 $\mathcal{W}\colon \Sigma^* \rightarrow 2^{{\Sigma'}^*}$ %
maps an instance~$I \in \Sigma^*$ to its \emph{witnesses}. 
A
\emph{parameterized counting
  problem}~$L: \Sigma^* \times \Nat \rightarrow \Nat_0$ is a
function that maps a given instance~$I \in \Sigma^*$ and an
integer~$k \in \Nat$ to the cardinality of its
witnesses~$\Card{\WWW(I)}$.
Let $\mtext{C}$ be a decision complexity class,~e.g., \P. Then,
$\cntc\mtext{C}$ denotes the class of all counting problems whose
witness function~$\WWW$ satisfies (i)~there is a
function~$f: \Nat_0 \rightarrow \Nat_0$ such that for every
instance~$I \in \Sigma^*$ and every $W \in \WWW(I)$ we have
$\Card{W} \leq f(\CCard{I})$ and $f$ is computable in
time~$\bigO{\CCard{I}^c}$ for some constant~$c$ and (ii)~for every
instance~$I \in \Sigma^*$ decision problem~$\WWW(I)$~is~in\longversion{the complexity class}~$\mtext{C}$.
\longversion{Then, $\cntc\P$ is the complexity class consisting of all counting
problems associated with decision problems in \NP.}
\smallskip\noindent\textbf{Answer Set Programming (ASP).}
We follow standard definitions of propositional disjunctive ASP. For
comprehensive foundations, we refer to introductory
literature~\cite{BrewkaEiterTruszczynski11\longversion{,JanhunenNiemela16a}}.
Let $\ell$, $m$, $n$ be non-negative integers such that
$\ell \leq m \leq n$, $a_1$, $\ldots$, $a_n$ be distinct
atoms. \longversion{Moreover, we}\shortversion{We} refer by \emph{literal} to an atom or the negation
thereof. %
A \emph{program}~$\prog$ is a \FIX{finite} set of \emph{rules} of the form
\(
a_1\por \cdots \por a_\ell \hsep a_{\ell+1}, \ldots, a_{m}, \neg
a_{m+1}, \ldots, \neg a_n.
\)
For a rule~$r$, we let $H_r \eqdef \{a_1, \ldots, a_\ell\}$,
$B^+_r \eqdef \{a_{\ell+1}, \ldots, a_{m}\}$, and
$B^-_r \eqdef \{a_{m+1}, \ldots, a_n\}$.
We denote the sets of \emph{atoms} occurring in a rule~$r$ or in a
program~$\prog$ by $\at(r) \eqdef H_r \cup B^+_r \cup B^-_r$ and
$\at(\prog)\eqdef \cup_{r\in\prog} \at(r)$.
Let $\prog$ be a program.
A program~$\prog'$ is a \emph{sub-program of~$\prog$} if~$\prog'\subseteq\prog$.
\longversion{Program~}$\prog$ is \emph{normal} if $\Card{H_r} \leq 1$ for
every~$r \in \prog$.
The \emph{positive dependency digraph}~$D_\prog$ of $\prog$ is the
directed graph defined on the set of atoms
from~$\bigcup_{r\in \prog}H_r \cup B^+_r$, where for every
rule~$r \in \prog$ two atoms $a\in B^+_r$ and~$b\in H_r$ are joined by
an edge~$(a,b)$.
A \emph{head-cycle} of~$D_\prog$ is an $\{a, b\}$-cycle\footnote{Let
  $G=(V,E)$ be a digraph and $W \subseteq V$. Then, a cycle in~$G$ is
  a $W$-cycle if it contains all vertices from~$W$.} for two distinct
atoms~$a$, $b \in H_r$ for some rule $r \in \prog$. 
\FIX{Program~$\prog$ is~\emph{tight} \FIX{(\emph{head-cycle-free}~\cite{Ben-EliyahuDechter94})} if~$D_\prog$ contains no cycle (head-cycle).}

An \emph{interpretation} $I$ is a set of atoms. $I$ \emph{satisfies} a
rule~$r$ if $(H_r\,\cup\, B^-_r) \,\cap\, I \neq \emptyset$ or
$B^+_r \setminus I \neq \emptyset$.  $I$ is a \emph{model} of $\prog$
if it satisfies all rules of~$\prog$, in symbols $I \models \prog$. %
The \emph{Gelfond-Lifschitz
  (GL) reduct} of~$\prog$ under~$I$ is the program~$\prog^I$ obtained
from $\prog$ by first removing all rules~$r$ with
$B^-_r\cap I\neq \emptyset$ and then removing all~$\neg z$ where
$z \in B^-_r$ from the remaining
rules~$r$~\cite{GelfondLifschitz91}. %
$I$ is an \emph{answer set} of a program~$\prog$ if $I$ is a minimal
model of~$\prog^I$. %
Deciding whether a disjunctive program has an answer set is
\SIGMA{2}{P}-complete~\cite{EiterGottlob95}. The problem is called
\emph{consistency} (\ASP) of an ASP program.
If the input is restricted to normal programs, the complexity drops to
\NP-complete~\cite{BidoitFroidevaux91\longversion{,MarekTruszczynski91}}.
A head-cycle-free program~$\prog$ %
can be translated into a normal program in polynomial
time~\cite{Ben-EliyahuDechter94}.
The following well-known characterization of answer sets is often
invoked when considering normal programs~\cite{LinZhao03}.
Given a model~$I$ of a normal program~$\prog$ and an ordering~$\sigma$ of
atoms over~$I$. An atom~$a\in I$ is \emph{proven} if there is a
rule~$r\in\prog$ with $a\in H_r$ where (i)~$B^+_r\subseteq I$,
(ii)~$b <_\sigma a$ for every~$b\in B_r^+$, and
(iii)~$I \cap B^-_r = \emptyset$ and
$I \cap (H_r \setminus \{a\}) = \emptyset$. Then, $I$ is an
\emph{answer set} of~$\prog$ if (i)~$I$ is a model of~$\prog$, and
(ii) every atom~$a \in I$ is proven.
This characterization vacuously extends to head-cycle-free
programs by applying the results of Ben-Eliyahu and Dechter~\cite{Ben-EliyahuDechter94}.
\FIX{Given a program~$\prog$, we assume %
in the following that every atom~$a\in\at(\prog)$ %
occurs in some head of a rule
of~$\prog$\longversion{~\cite{BaralGelfond94}}.}%
\longversion{%
} 
\longversion{%
}%
\longversion{%
}%
\begin{example}%
\label{ex:running1}\label{ex:running}
Consider %
$\prog\eqdef 
\SB 
\overbrace{ a \lor b \hsep}^{r_1};\, %
\overbrace{c \lor e \hsep}^{r_2};\, %
\overbrace{d \lor e \hsep b}^{r_3};\, %
\overbrace{b \hsep e, \neg d}^{r_4};\, %
\overbrace{d \hsep \neg b}^{r_5} %
\SE.$ %
It is easy to see that $\prog$ is a head-cycle-free program.
The set~$A=\{b, c, d\}$ is an answer set of~$\prog$.
Consider the ordering $\sigma =\langle b, c, d\rangle$, from which we
can prove atom~$b$ by rule~$r_1$, atom~$c$ by rule~$r_2$, and atom~$d$
by rule~$r_3$.
Further answer sets are $B=\{a,c,d\}$, $C=\{b,e\}$,
and~$D=\{a,d,e\}$.
\end{example}%
\noindent\textbf{Counting Projected Answer Sets.} %
An instance is a pair~$(\prog,P)$, where $\prog$ is a program and $P\subseteq\at(\prog)$
is a set of \emph{projection atoms}. %
The \emph{projected answer
  sets count} of~$\prog$ with respect to~$P$ is the number of
subsets~$I\subseteq P$ such that $I\cup J$ is an answer set of~$\prog$
for some set~$J\subseteq \at(\prog)\setminus P$.
The \emph{counting projected answer sets problem (\PASP)} asks to
output the projected answer sets count of~$\prog$,~i.e.,
$\Card{ \SB I \cap P \SM I \in S \SE}$ 
where $S$ is the set of all answer sets of~$\prog$.
\FIX{Note that~\cASP is~\PASP, where~$P=\at(\prog)$, and that deciding~\ASP equals~\PASP, where~$P=\emptyset$.}

\begin{example}\label{ex:running0}
  Consider program~$\prog$ from Example~\ref{ex:running} and its four
  answer sets $\{a,c,d\}$, $\{b,c,d\}$, $\{b,e\}$, and~$\{a,d,e\}$, as
  well as the set~$P\eqdef\{d, e\}$ of projection atoms.
  When we project the answer sets to~$P$, we only have the
  three answer sets $\{d\}$, $\{e\}$, and~$\{d, e\}$, i.e.,
  the projected answer sets
  count of~$(\prog,P)$ is 3.
\end{example}
\begin{theorem}[$\star$\footnote{Proofs marked with ``$\star$'' \longversion{are in the appendix}\shortversion{are in extended version at: \url{https://tinyurl.com/ycxe3zqo}}.}]\label{prop:hcfproj}
  The problem~$\PASP$ is \sharpSigma{2}-complete for
  disjunctive programs and \sharpNP-complete for %
  head-cycle-free, normal or tight programs.
\end{theorem}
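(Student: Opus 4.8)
The plan is to derive the two membership claims directly from the definition of the counting classes $\cntc\mtext{C}$ together with the known complexity of answer\hy set checking, and to obtain the two hardness claims by parsimonious reductions from canonical complete problems, exploiting that projection collapses all extensions of a fixed projected assignment into a single counted witness.

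For membership, fix the witness function $\WWW(\prog,P)\eqdef\SB\, I\subseteq P \SM \exists\, J\subseteq\at(\prog)\setminus P \text{ s.t.\ } I\cup J \text{ is an answer set of } \prog\,\SE$, so that $\Card{\WWW(\prog,P)}$ is the projected answer set count. Each witness is a subset of $P$, hence polynomially bounded, which gives condition~(i). For condition~(ii) I analyze the predicate ``$I\in\WWW(\prog,P)$'': guessing a candidate $J$ is one existential layer, and it remains to test whether $M\eqdef I\cup J$ is an answer set, i.e.\ whether $M\models\prog$ (polynomial) and $M$ is a minimal model of $\prog^M$. For disjunctive programs this minimality test is coNP (no $M'\subsetneq M$ models $\prog^M$), so the predicate lies in $\SIGMA{2}{P}$ and $\PASP\in\sharpSigma{2}$. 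For tight, normal and HCF programs answer\hy set checking is polynomial --- for normal programs $\prog^M$ is definite Horn with a unique, polynomially computable minimal model, and for HCF programs I invoke either the polynomial translation to normal programs or the proven\hy atom characterization stated above --- so the predicate lies in $\NP$ and $\PASP\in\sharpNP$.

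For $\sharpSigma{2}$\hy hardness I reduce from the canonical $\sharpSigma{2}$\hy complete problem of counting the assignments to a set $X$ for which $\exists Y\,\forall Z\,\phi(X,Y,Z)$ holds. The disjunctive program freely guesses $X$ and $Y$ by disjunctive facts $x\por\bar x\hsep$, encodes $\forall Z\,\phi$ through the Eiter--Gottlob saturation technique~\cite{EiterGottlob95} so that a guessed pair $(X,Y)$ extends to an answer set iff $\forall Z\,\phi(X,Y,Z)$ holds, and sets $P\eqdef X$. All $Y$ witnessing the same $X$ project to the same $I\subseteq X$, so the projected count equals the number of $X$ with $\exists Y\,\forall Z\,\phi$. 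For $\sharpNP$\hy hardness I reduce from projected propositional model counting~\cite{FichteEtAl18}, the canonical $\sharpNP$\hy complete problem of counting $X$\hy assignments extendable to a model of a CNF $\phi$: guess each variable $v$ by $v\hsep\neg\bar v$ and $\bar v\hsep\neg v$, forbid falsified clauses by integrity constraints, and set $P\eqdef X$. Its positive dependency digraph has no edges, so the program is tight, hence also normal and head\hy cycle\hy free, and projecting the answer sets onto $X$ yields exactly the projected model count.

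The main obstacle is not the encodings, which are classical, but the two quantitative facts underlying completeness: first, that each reduction is genuinely parsimonious once projection is accounted for, i.e.\ that collapsing all extensions $J$ (resp.\ all inner witnesses $Y$) of a fixed $I\subseteq P$ to one projected witness reproduces the target count exactly; and second, that the minimality subcheck for disjunctive programs is coNP\hy hard, so that the witness predicate sits precisely at the $\SIGMA{2}{P}$ level and not lower --- which is exactly what separates the disjunctive case from the tight, normal and HCF cases.
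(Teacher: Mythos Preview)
Your proposal is correct and takes essentially the same approach as the paper: membership via bounding the complexity of the witness predicate ``$I\subseteq P$ extends to an answer set'', and hardness via parsimonious reductions from the counting QBF problems $\cnt\exists$-\SAT and $\cnt\exists\forall$-\SAT using exactly the guess-by-negation encoding and the Eiter--Gottlob saturation. The only cosmetic difference is that the paper phrases membership as a reduction of the witness check to consistency of $\prog$ augmented with constraints fixing the projection atoms (which stays in the same program class), whereas you directly guess $J$ and analyze answer-set checking; these are the same argument in different clothes.
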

\noindent\textbf{Tree Decompositions (TDs).} We follow standard terminology 
on graphs and digraphs\longversion{~\cite{Diestel12,BondyMurty08}}.  For a tree~$T=(N,A,n)$ with
root~$n$ and a node~$t \in N$, we let $\children(t, T)$ be the
sequence of all nodes~$t'$ in arbitrarily but fixed order, which have
an edge~$(t,t') \in A$.
Let $G=(V,E)$ be a graph.
A \emph{tree decomposition (TD)} of graph~$G$ is a pair
$\TTT=(T,\chi)$, where $T=(N,A,n)$ is a rooted tree, $n\in N$ the root,
and $\chi$ a mapping that assigns to each node $t\in N$ a set
$\chi(t)\subseteq V$, called a \emph{bag}, such that the following
conditions hold:
(i) $V=\bigcup_{t\in N}\chi(t)$ and
$E \subseteq\bigcup_{t\in N}\SB \{u,v\} \SM u,v\in \chi(t)\SE$; and (ii)
for each $r, s, t$, such that $s$ lies on the path from $r$ to
$t$, we have $\chi(r) \cap \chi(t) \subseteq \chi(s)$.
Then, $\width(\TTT) \eqdef \max_{t\in N}\Card{\chi(t)}-1$.  The
\emph{treewidth} $\tw{G}$ of $G$ is the minimum $\width({\TTT})$ over
all TDs $\TTT$ of $G$.
For arbitrary but fixed $w \geq 1$, it is feasible in linear time to
decide if a graph has treewidth at most~$w$ and, if so, to compute a
TD of width $w$~\cite{Bodlaender96}.
For simplifications \emph{we always use so-called nice TDs}, which can be
computed in linear time without increasing the width~\cite{\shortversion{BodlaenderKloks96}\longversion{Kloks94a}}
and are defined as follows.
For a node~$t \in N$, we say that $\type(t)$ is $\leaf$ if
$\children(t,T)=\langle \rangle$; $\join$ if
$\children(t,T) = \langle t',t''\rangle$ where
$\chi(t) = \chi(t') = \chi(t'') \neq \emptyset$; $\intr$
(``introduce'') if $\children(t,T) = \langle t'\rangle$,
$\chi(t') \subseteq \chi(t)$ and $|\chi(t)| = |\chi(t')| + 1$; $\rem$
(``removal'') if $\children(t,T) = \langle t'\rangle$,
$\chi(t') \supseteq \chi(t)$ and $|\chi(t')| = |\chi(t)| + 1$. \FIX{If for
every node $t\in N$, $\type(t) \in \{ \leaf, \join, \intr, \rem\}$, and~$\chi(t')=\emptyset$ for root and leaf~$t'$,
the TD is %
\emph{nice}.}

\begin{example}
  Figure~\ref{fig:graph-td} illustrates a graph~$G_1$ and a tree
  decomposition of~$G_1$ of width~$2$. By a
  property\longversion{\footnote{The vertices $e$,$b$,$d$ that are all neighbors to
    each other in~$G_1$.}} of tree decompositions~\cite{\shortversion{BodlaenderKloks96}\longversion{Kloks94a}}, the
  treewidth of~$G_1$ is~$2$.
\end{example}

\vspace{-1.3em}
\section{Dynamic Programming on TDs}
\vspace{-.45em}

In order to use TDs for ASP solving, we need a
dedicated graph representation of ASP
programs~\cite{FichteEtAl17a}.%
The \emph{primal graph}~$G_\prog$
of program~$\prog$ has the atoms of~$\prog$ as vertices and an
edge~$\{a,b\}$ if there exists a rule~$r \in \prog$ and $a,b \in \at(r)$.
\longerversion{The \emph{incidence graph}~$I_\prog$ of $\prog$ is the bipartite
graph that has the atoms and rules of~$\prog$ as vertices and an
edge~$a\, r$ if $a \in \at(r)$ for some rule~$r \in \prog$.}

\begin{figure}[t]%
  \centering
    \begin{tikzpicture}[node distance=7mm,every node/.style={fill,circle,inner sep=2pt}]
\node (a) [label={[text height=1.5ex,yshift=0.0cm,xshift=0.05cm]left:$e$}] {};
\node (d) [above right of=a,label={[text height=1.5ex,yshift=0.09cm,xshift=-0.05cm]right:$b$}] {};
\node (b) [left of=d,yshift=0.0em,xshift=0.6em,label={[text height=1.5ex,yshift=-0.25em]above:$a$}] {};
\node (c) [left of=d,xshift=-1em,label={[text height=1.5ex,yshift=0.09cm,xshift=0.05cm]left:$c$}] {};
\node (e) [right of=a, yshift=-0em,xshift=.75em,label={[text height=1.5ex]right:$d$}] {};
%\node (e) [right of = d,label={[text height=1.5ex]above:$e$}] {};
\draw (a) to (c);
%\draw (a) to (c);
%\draw (b) to (c);
\draw (b) to (d);
%\draw (c) to (d);
\draw (d) to (e);
\draw (e) to (a);
\draw (d) to (a);
\end{tikzpicture}%
    \includegraphics{graph0/td}%
    \caption{Graph~$G_1$ and a tree decomposition of~$G_1$.}
  \longerversion{%
    \begin{subfigure}[c]{0.47\textwidth}
      \centering%
      \begin{tikzpicture}[node distance=7mm,every node/.style={fill,circle,inner sep=2pt}]
\node (a) [label={[text height=1.5ex,yshift=0.0cm,xshift=0.05cm]left:$e$}] {};
\node (d) [above right of=a,label={[text height=1.5ex,yshift=0.09cm,xshift=-0.05cm]right:$b$}] {};
\node (b) [left of=d,yshift=0.0em,xshift=0.6em,label={[text height=1.5ex,yshift=-0.25em]above:$a$}] {};
\node (c) [left of=d,xshift=-1em,label={[text height=1.5ex,yshift=0.09cm,xshift=0.05cm]left:$c$}] {};
\node (e) [right of=a, yshift=-0em,xshift=.75em,label={[text height=1.5ex]right:$d$}] {};
%\node (e) [right of = d,label={[text height=1.5ex]above:$e$}] {};
\draw (a) to (c);
%\draw (a) to (c);
%\draw (b) to (c);
\draw (b) to (d);
%\draw (c) to (d);
\draw (d) to (e);
\draw (e) to (a);
\draw (d) to (a);
\end{tikzpicture}%
      \input{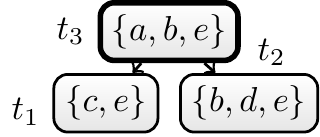}%
      \caption{Graph~$G_1$ and a tree decomposition of~$G_1$.}
      \label{fig:graph-td}
    \end{subfigure}
    \begin{subfigure}[c]{0.5\textwidth}
      \centering \input{graph0/graph_inc}%
      \input{graph0/td_inc}%
      \caption{Graph~$G_2$ and a tree decomposition of~$G_2$.}
      \label{fig:graph-td2}%
    \end{subfqigure}
    \caption{Graphs~$G_1, G_2$ and two corresponding tree
      decompositions.}
  }%
  \label{fig:graph-td}%
\end{figure}
\begin{example}
  Recall program~$\prog$ from Example~\ref{ex:running1} and observe
  that graph~$G_1$ in Figure~\ref{fig:graph-td} is the primal graph~$G_\prog$
  of~$\prog$.
  \longerversion{%
    Further, graph~$G_2$ of Figure~\ref{fig:graph-td2} is the
    incidence graph of~$\prog$.
  }
\end{example}

\noindent Let ${\cal T} = (T, \chi)$ be a TD of primal
graph~$G_\prog$ of a program $\prog$. Further, let $T = (N,\cdot,n)$
and $t \in N$.  The \emph{bag-program} is defined as
$\prog_t \eqdef \SB r \SM r \in \prog, \at(r) \subseteq \chi(t)\}$, the \emph{program below $t$}
as
$\progt{t} \eqdef \SB r \SM r \in \prog_{t'}, t' \in \post(T,t) \SE$,
and the \emph{program strictly below $t$} as
$\progtneq{t}\eqdef \progt{t}\setminus \prog_t$. It holds that
$\progt{n} = \progtneq{n} = \prog$~\cite{FichteEtAl17a}. Analogously, we define the 
\emph{atoms below~$t$} by~$\att{t} \eqdef \cup_{t'\in\post(T,t)}\chi(t')$, and the \emph{atoms strictly below~$t$} by~$\attneq{t}\eqdef \att{t}\setminus\chi(t)$.
\longversion{For an
example we refer to Example~\ref{ex:bagprog}$^\star$.}

Algorithms that decide consistency or solve
\cASP\shortversion{~\cite{FichteEtAl17a}}\longversion{~\cite{FichteEtAl17a,JaklPichlerWoltran09}} proceed by \emph{dynamic
programming (DP)} along the TD (in post-order) where at each
node of the tree information is gathered~\cite{BodlaenderKloks96} in a
table by a \FIX{(local) \emph{table algorithm}~$\AlgA$.} %
More generally, a \emph{table} is a set of rows, where a
\emph{row}~$\vec\tabval$ is a sequence of fixed length. 
Similar as for sequences when addressing the $i$-th element, for a
set~$U$ of rows (table) we let
$U_{(i)}\eqdef\{\vec u_{(i)} \mid \vec u \in U\}$.
The actual length, content, and meaning of the rows depend on the
algorithm~$\AlgA$.  
Since we later traverse the TD
repeatedly running different algorithms, we explicitly state
\emph{$\AlgA$-row} if rows of this \emph{type} are syntactically used
for algorithm~$\AlgA$ and similar %
\emph{$\AlgA$-table} for tables.
In order to access tables computed at certain nodes after a traversal
as well as to provide better readability, we attribute TDs with an additional mapping to store tables.
Formally, a \emph{tabled tree decomposition (TTD)} of graph~$G$ is a
\FIX{triple} $\TTT=(T,\chi,\tau)$, where $(T,\chi)$ is a TD
of~$G$ and $\tau$ maps nodes~$t$ of~$T$ to tables. If not specified
otherwise, we assume that $\tau(t)= \{\}$ for every node~$t$ of~$T$.
\FIX{When a TTD has been computed using algorithm~$\AlgA$ after traversing
the TD, we call the decomposition the $\AlgA$-TTD of
the given instance.}
DP for ASP
performs the following steps: \\[0.21em]
\noindent 1. \FIX{Given program~$\prog$, compute a tree decomposition of the primal graph~$P_\prog$.}\\
\noindent 2. Run algorithm~$\dpa_\AlgA$ (see
  Listing~\ref{fig:dpontd}). 
  It takes a TTD~$\mathcal{T}=(T,\chi, \iota)$
  with~$T=(N,\cdot,n)$ and traverses~$T$ in post-order\footnote{$\text{post-order}(T,n)$ provides the sequence of nodes for tree~$T$ rooted at~$n$.}.
  At each node~$t \in N$ it computes a new $\AlgA$-table~$o(t)$ by
  executing the algorithm~$\AlgA$. Algorithm~$\AlgA$ has a ``local
  view'' on the computation and can access only $t$, the atoms in the
  bag~$\chi(t)$, the bag-program~$\prog_t$, and $\AlgA$-table~$o(t')$ for any child~$t'$ of $t$.\footnote{Note
    that in Listing~\ref{fig:dpontd}, $\AlgA$ takes in addition as
    input set~$P$ and table~$\iota_t$, used later. %
    Later, $P$ represents the projected atoms and
    $\iota_t$ is a table at~$t$ from an earlier traversal. }
    \FIX{Finally, $\dpa_\AlgA$ returns an $\AlgA$-TTD~$(T, \chi, o)$.}\\
\noindent 3. Print the result by interpreting table~$o(n)$ for root~$n$
  of~$T$.
\begin{algorithm}[t]%
  \KwData{%
    Problem instance~$(\prog,P)$, TTD~$\TTT=(T,\chi, \iota)$
    of~$G_\prog$ such that~$n$ is the root of~$T$,
    $\children(t, T) = \langle t_1, \ldots, t_\ell\rangle$. \textbf{Out: }$\AlgA$-TTD~$(T,\chi, o)$, $\AlgA$-table mapping $o$.\hspace{-1em}%
  }%

  \hspace{-0.1em}$o \leftarrow \text{empty mapping}$

  \For{\text{\normalfont iterate} $t$ in \text{\normalfont post-order}(T,n)}{
    \vspace{-0.05em}%

    \hspace{-0.5em}$o(t) \leftarrow {\AlgA}(t,
    \chi(t),\iota(t),(\prog_t,P),\langle o(t_1), \ldots,
    o(t_\ell) \rangle)$\hspace{-0.8em} %
    \vspace{-0.5em} %
  }%
  \Return{$(T, \chi, o)$}
  \captionsetup{format=hang}
  \caption{%
    Algorithm ${\dpa}_{\AlgA}((\prog,P), \TTT)$:
    Dynamic programming on TTD
    $\mathcal{T}$,~c.f.,~\protect\cite{FichteEtAl17a}.
  }%
\label{fig:dpontd}
\end{algorithm}%

Then, the actual computation of algorithm~$\AlgA$ is a somewhat
technical case distinction of the types~$\type(t)$ \FIX{we see} when
considering node~$t$.
Algorithms for counting answer sets of disjunctive
programs\longversion{~\cite{JaklPichlerWoltran09}} and its
extensions~\cite{FichteEtAl17a} have already been established.
Implementations of these algorithms can be useful also for 
solving~\cite{FichteEtAl17a\longversion{,FichteEtAl17b}}, but the running time is
clearly double exponential time in the treewidth in the worst case.
We, however, establish an algorithm ($\PRIM$) that is restricted to
head-cycle-free programs. The runtime of our algorithm is factorial in
the treewidth and therefore faster than previous algorithms. Our
constructions are inspired by ideas used in previous DP algorithms%
\shortversion{.}
\longversion{~\cite{PichlerEtAl14}.}
In the following, we first present the table algorithm for deciding whether a head-cycle-free program has an answer
set (\ASP).
In the end, this algorithm outputs a new TTD,
which we later reuse to solve the actual counting problem.  Note that
the TD itself remains the same, but for readability,
we keep computed tables and nodes aligned.

\longversion{\medskip}\noindent\textbf{Consistency of Head-Cycle-Free Programs.}
We can use algorithm~$\dpa_\PRIM$ to decide the consistency
problem~\ASP for head-cycle-free programs and simply specify our new table
algorithm (\PRIM) that ``transforms'' tables from one node to another.
As graph representation we use the primal graph.  The idea is to
implicitly apply along the TD the characterization of
answer sets by Lin and Zhao~\cite{LinZhao03} extended to head-cycle-free
programs~\cite{Ben-EliyahuDechter94}.
To this end, we store in table~$o(t)$ at each node~$t$ rows of the
form~$\langle I, \mathcal{P}, \sigma\rangle$.
The first position consists of an interpretation~$I$ restricted to the
bag~$\chi(t)$.  For a sequence~$\vec \tabval$, we
write~$\mathcal{I}(\vec \tabval)\eqdef \vec u_{(1)}$ to address the
\emph{interpretation part}.
The second position consists of a set~$\mathcal{P} \subseteq I$ that
represents atoms in~$I$ for which we know that they have already been
proven.
The third position~$\sigma$ is a sequence of the atoms \FIX{in}~$I$ such that
there is a super-sequence~$\sigma'$ of~$\sigma$, which induces an ordering~$<_{\sigma'}$.
Our table algorithm~\PRIM stores interpretation parts always
restricted to bag~$\chi(t)$ and ensures that an interpretation can be extended
to a model of sub-program~$\prog_{\leq t}$.
More precisely, it guarantees that interpretation~$I$ can be extended
to a model~$I'\supseteq I$ of~$\prog_{\leq t}$ and that the atoms
in~$I'\setminus I$ (and the atoms in $\mathcal{P}\subseteq I$) have
already been \emph{proven}, using some induced ordering~$<_{\sigma'}$
where $\sigma$ is a sub-sequence of~$\sigma'$.
In the end, an interpretation~$\mathcal{I}(\vec u)$ of a row~$\vec u$
of the table~$o(n)$ at the root~$n$ proves that there is a
superset~$I' \supseteq \mathcal{I}(\vec u)$ that is an answer set
of~$\prog = \progt{n}$.

Listing~\ref{fig:prim} presents the algorithm~\PRIM.  Intuitively,
whenever an atom~$a$ is introduced ($\intr$), we decide whether we
include~$a$ in the interpretation, determine bag atoms that can be
proven in consequence of this decision, and update the
sequence~$\sigma$ accordingly.
To this end, we define 
for a given interpretation~$I$ and a sequence~$\sigma$ the set
$\gatherproof(I, \sigma, \prog_t) \eqdef \cup_{r\in \prog_t, a\in
  H_r}\SB a \mid B_r^+ \subseteq I, I \cap B^-_r = \emptyset, I \cap
(H_r\setminus \{a\}) = \emptyset, B^+_r <_\sigma a \SE$ where
$B^+_r <_\sigma a$ holds if $b <_\sigma a$ is true for every
$b \in B^+_r$.  Moreover, given a
sequence~$\sigma=\langle \sigma_1, \ldots, \sigma_k\rangle$ and a
set~$A$ of atoms, we compute the potential sequences
involving~$A$. Therefore, we let
$\possord(\sigma, A) \eqdef \SB \sigma \SM A=\emptyset \SE \cup
\bigcup_{a\in A}\{\langle a, \sigma_1, \ldots, \sigma_k\rangle,
\ldots, \langle \sigma_1, \ldots, \sigma_k, a\rangle\SE$.
When removing ($\rem$) an atom~$a$, we only keep those rows where~$a$ has
been proven (contained in~$\mathcal{P}$) and then restrict remaining rows
to the bag (not containing~$a$). In case the node is of
type~$\join$, we combine two rows in two different child tables,
intuitively, we are enforced to agree on interpretations~$I$ and
sequences~$\sigma$. However, concerning individual
proofs~$\mathcal{P}$, it suffices that an atom is proven in \emph{one} of the rows.

\renewcommand{\eqdef}{\leftarrow}
%
% NEW VERSION 
%
%\vspace{-5ex}
 \begin{algorithm}[t]
   \KwData{Node~$t$, bag $\chi_t$, bag-program~$\prog_t$,
     $\langle \tab{1}, \ldots \rangle$ is the sequence of \PRIM-tables
     of~children of~$t$.{~\bf Out:} \PRIM-table~$\tab{t}.\shortversion{\hspace{-5em}}$}
   \lIf(\hspace{-1em})
   {$\type(t) = \leaf$}{%
     $\tab{t} \eqdef \{ \langle
     \tuplecolor{\inputPredColor}{\emptyset}, \tuplecolor{\outputPredColor}{\emptyset}, \tuplecolor{\statePredColor}{\langle\rangle}
     \rangle \}$\label{line:primleaf}%
     % 
 	%\EndIIf 
   }%
  \uElseIf{$\type(t) = \intr$ and $a\hspace{-0.1em}\in\hspace{-0.1em}\chi_t$ is the introduced atom}{
   \vspace{-0.05em}
   \makebox[0cm][l]{}\hspace{-1em}$\tab{t} \eqdef %\compr(
   \{ \langle \tuplecolor{\inputPredColor}{J}, \tuplecolor{\outputPredColor}{\mathcal{P} \cup \gatherproof(J, \sigma', \prog_t)}, \tuplecolor{\statePredColor}{\sigma'} \rangle$\newline
     \makebox[2.62cm][l]{}$|\;\langle \tuplecolor{\inputPredColor}{I}, \tuplecolor{\outputPredColor}{\mathcal{P}}, \tuplecolor{\statePredColor}{\sigma} \rangle\in \tab{1},$ $J \in \{I, \MAI{I}{a}\},$ 
     $J \models \prog_t, $
      $ \sigma' \in \possord(\sigma, \{a\} \cap J)\}%)
      \hspace{-5em}$\label{line:primintr}
   \vspace{-0.05em}
     }\vspace{-0.05em}%
     \uElseIf{$\type(t) = \rem$ and $a \not\in \chi_t$ is the removed atom}{%
       \makebox[2.8cm][l]{\hspace{-1em}$\tab{t} \eqdef %\compr(
       \{ \langle \tuplecolor{\inputPredColor}{\MAR{I}{a}}, \tuplecolor{\outputPredColor}{\MAR{\mathcal{P}}{a}}, \tuplecolor{\statePredColor}{\MARR{\sigma}{a}}
       \rangle$}$|\;\langle \tuplecolor{\inputPredColor}{I}, \tuplecolor{\outputPredColor}{\mathcal{P}}, \tuplecolor{\statePredColor}{\sigma}
       \rangle \in \tab{1}, a \in \mathcal{P} \cup (\{a\} \setminus I) \}%)
       \hspace{-5em}$\label{line:primrem}
       \vspace{-0.1em}
     } %
     \uElseIf{$\type(t) = \join$}{%, and $\Tab{} = \langle \tau', \tau'' \rangle$}{%
       \makebox[2.8cm][l]{\hspace{-1em}$\tab{t} \eqdef %\compr(
       \{ \langle \tuplecolor{\inputPredColor}{I}, \tuplecolor{\outputPredColor}{\mathcal{P}_1 \cup \mathcal{P}_2}, \tuplecolor{\statePredColor}{\sigma}
         \rangle$}$|\;\langle \tuplecolor{\inputPredColor}{I}, \tuplecolor{\outputPredColor}{\mathcal{P}_1}, \tuplecolor{\statePredColor}{\sigma} \rangle \in \tab{1}, \langle \tuplecolor{\inputPredColor}{I}, \tuplecolor{\outputPredColor}{\mathcal{P}_2}, \tuplecolor{\statePredColor}{\sigma} \rangle \in \tab{2}\}\hspace{-5em}$\label{line:primjoin}
       \vspace{-0.1em}
     } 
     \Return $\tab{t}$
     \vspace{-0.25em}
     \caption{Table algorithm~$\PRIM(t, \chi_t, \cdot, (\prog_t, \cdot),
       \langle \tau_1, \ldots \rangle)$.}
 \label{fig:prim}\algorithmfootnote{
\renewcommand{\eqdef}{{\ensuremath{\,\mathrel{\mathop:}=}}}
  $\MARR{\sigma}{\sigma_{i}} \eqdef \langle \sigma_1, \ldots, \sigma_{i-1}, \sigma_{i+1}, \ldots, \sigma_{k}\rangle$ where~$\sigma=\langle\sigma_1, \ldots, \sigma_k\rangle$,
  $\MAI{S}{e} \eqdef S \cup \{e\}$, 
  $\MAR{S}{e} \eqdef S \setminus \{e\}$.}
\end{algorithm}%
\renewcommand{\eqdef}{{\ensuremath{\,\mathrel{\mathop:}=}}}
%
%
%%% Local Variables:
%%% mode: latex
%%% TeX-master: "../kr2018"
%%% End:
%
\shortversion{\vspace{-.25em}}
\begin{example}\label{ex:sat}%
  Recall program~$\prog$ from
  Example~\ref{ex:running}. Figure~\ref{fig:running2} depicts a
  TD~$\TTT=(T,\chi)$ of the primal graph~$G_1$ of $\prog$. Further,
  the figure illustrates a snippet of tables of the
  TTD~$(T,\chi,\tau)$, which we obtain when running $\dpa_{\PRIM}$ on
  program~$\prog$ and TD~$\TTT$ according to Listing~\ref{fig:prim}.
  In the following, we briefly discuss some selected rows of those
  tables.
  Note that for simplicity and space reasons, we write $\tau_j$
  instead of $\tau(t_j)$ and identify rows by their node and
  identifier~$i$ in the figure. For example, the row
  $\vec\tabval_{13.3}=\langle I_{13.3}, \mathcal{P}_{13.3},
  \sigma_{13.3}\rangle\in\tab{13}$ refers to the third row of
  table~$\tab{13}$ for node~$t_{13}$. 
  Node~$t_1$ is of type~$\leaf$. Table~$\tab{1}$ has only one row,
  which consists of the empty interpretation, empty set of proven
  atoms, and the empty sequence (Line~\ref{line:primleaf}).
  Node~$t_2$ is of type~$\intr$ and introduces atom~$a$. Executing
  Line~\ref{line:primintr} results in
  $\tab{2}=\{\langle \emptyset, \emptyset,\langle \rangle \rangle,
  \langle \{a\}, \emptyset, \langle a\rangle\rangle\}$.
  Node~$t_3$ is of type~$\intr$ and introduces~$b$. Then, bag-program
  at node~$t_3$ is $\prog_{t_3}=\{a \vee b \hsep\}$.
  By construction (Line~\ref{line:primintr}) we ensure that
  interpretation~$I_{3.i}$ is a model of~$\prog_{t_3}$ for every
  row~$\langle I_{3.i}, \mathcal{P}_{3.i}, \sigma_{3.i}\rangle$
  in~$\tab{3}$.
  Node~$t_{4}$ is of type~$\rem$.  Here, we restrict the rows such
  that they contain only atoms occurring in bag~$\chi(t_4)=\{b\}$.  To
  this end, Line~\ref{line:primrem} takes only
  rows~$\vec\tabval_{3.i}$ of table~$\tab{3}$ where atoms in~$I_{3.i}$
  are also proven,~i.e., contained in~$\mathcal{P}_{3.i}$.
  In particular, every row in table~$\tab{4}$ originates from at least
  one row in~$\tab{3}$ that either proves~$a\in \mathcal{P}_{3.i}$ or
  where~$a\not\in I_{3.i}$. 
  Basic conditions of a TD ensure that once an atom is removed, it
  will not occur in any bag at an ancestor node. Hence, we also
  encountered all rules where atom~$a$ occurs.
  Nodes~$t_5, t_6, t_7$, and~$t_8$ are symmetric to
  nodes~$t_1, t_2, t_3$, and~$t_4$.
  Nodes~$t_9$ and~$t_{10}$ again introduce atoms. 
  Observe that $\mathcal{P}_{10.4} = \{e\}$ since
  $\sigma_{10.4}$ does not allow to prove~$b$ using atom~$e$.
  However, $\mathcal{P}_{10.5}=\{b,e\}$ as the sequence~$\sigma_{10.5}$
  allows to prove~$b$.
  In particular, in row~$\vec\tabval_{10.5}$ atom~$e$ is used to
  derive~$b$.  As a result, atom~$b$ can be proven, whereas
  ordering~$\sigma_{10.4}=\langle b,e\rangle$ does not serve in
  proving~$b$.
  We proceed similar for nodes~$t_{11}$ and $t_{12}$.
  At node~$t_{13}$ we join tables~$\tab{4}$ and~$\tab{12}$ according
  to Line~\ref{line:primjoin}.
  \FIX{Finally, $\tab{14}\neq \emptyset$, i.e., $\prog$ has an
  answer set; joining
  interpretations~$I$ of yellow marked rows of
  Figure~\ref{fig:running2} leads to~$\{b,e\}$.}
\end{example}
\setlength{\tabcolsep}{0.25pt}
\renewcommand{\arraystretch}{0.75}
\begin{figure}[t]
\centering%
\hspace{-0.75em}
\includegraphics[width=0.913\textwidth]{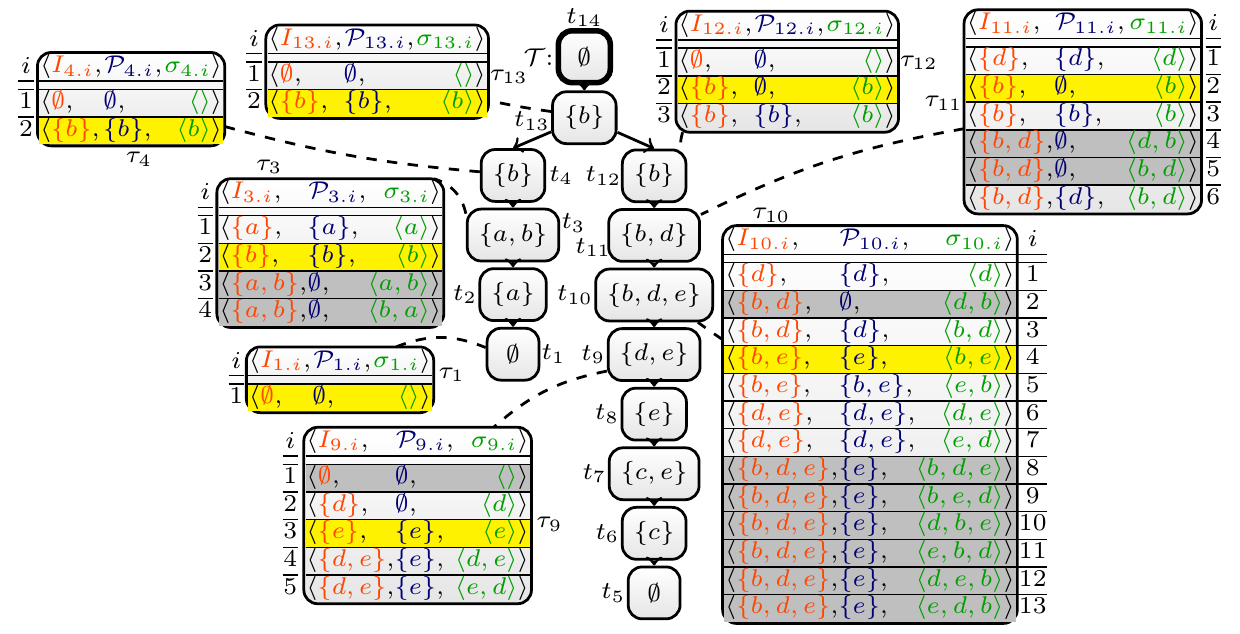}%
\caption{Selected tables of~$\tau$ obtained by~$\dpa_{\PRIM}$ on
  TD~${\cal T}$.} %
\label{fig:running2}
\end{figure}

\newcommand{\llangle}{\ensuremath{\langle\hspace{-2pt}\{\hspace{-0.2pt}}}%
\newcommand{\rrangle}{\ensuremath{\}\hspace{-2pt}\rangle}}
\newcommand{\STab}{\ensuremath{\ATab{\AlgA}}}%
\shortversion{\vspace{-.25em}}
\noindent Next, we provide a notion to reconstruct answer sets from a computed
TTD, which allows for computing for a given row its predecessor rows
in the corresponding child tables,~c.f., \cite{FichteEtAl18}.
Let $\prog$ be a program, $\TTT=(T, \chi, \tau)$ be an~$\AlgA$-TTD
of~$G_\prog$, and $t$ be a node of~$T$ where
$\children(t,T)=\langle t_1, \ldots, t_{\ell}\rangle$. %
Given a sequence $\vec s=\langle s_1, \ldots, s_{\ell} \rangle$, we
let
$\llangle \vec s\rrangle \eqdef \langle \{s_1\}, \ldots, \{s_{\ell}\}
\rangle$. %
For a given $\AlgA$-row~$\vec u$, we define the originating
$\AlgA$-rows of~$\vec u$ in node~$t$ by
$\orig(t,\vec \tabval) \eqdef \SB \vec s \SM \vec s \in \tau(t_1)
\times \cdots \times \tau({t_\ell}), \vec u \in {\AlgA}(t,\chi(t),
\cdot,(\prog_t,\cdot), \llangle \vec s\rrangle) \SE.$ %
We extend this to an $\AlgA$-table~$\rho$ by
$\origs(t,\rho) \eqdef \bigcup_{\vec u \in \rho}\orig(t,\vec u)$.
\shortversion{\vspace{-.25em}}
\begin{example}\label{ex:origins} %
  Consider program~$\prog$ and $\PRIM$-TTD~$(T,\chi,\tab{})$ from Example~\ref{ex:sat}.  We focus
  on~$\vec{\tabval_{1.1}} =\langle\emptyset, \emptyset, \langle
  \rangle\rangle$ of table~$\tab{1}$ of leaf~$t_1$. The
  row~$\vec{\tabval_{1.1}}$ has no preceding row,
  since~$\type(t_1)=\leaf$. Hence, we have
  $\origse{\PRIM}(t_1,\vec{\tabval_{1.1}})=\{\langle \rangle\}$.
  The origins of row~$\vec{\tabval_{11.1}}$ of
  table~$\tab{11}$ are given by
  $\origse{\PRIM}(t_{11},\vec{\tabval_{11.1}})$, which correspond to
  the preceding rows in
  table~$\tab{10}$ that lead to
  row~$\vec{\tabval_{11.1}}$ of
  table~$\tab{11}$ when running
  algorithm~$\PRIM$, i.e.,
  $\origse{\PRIM}(t_{11},\vec{\tabval_{11.1}}) = \{\langle
  \vec{\tabval_{10.1}} \rangle,\langle \vec{\tabval_{10.6}} \rangle,
  \langle \vec{\tabval_{10.7}}
  \rangle\}$. Origins of
  row~$\vec{{\tabval}_{12.2}}$ are given by
  $\origse{\PRIM}(t_{12},\vec{\tabval_{12.2}}) = \{\langle
  \vec{\tabval_{11.2}} \rangle, \langle
  \vec{\tabval_{11.6}} \rangle\}$. Note
  that~$\vec{\tabval_{11.4}}$
  and~$\vec{\tabval_{11.5}}$ are not among those origins, since
  $d$ is not proven.  Observe that
  $\origse{\PRIM}(t_j,\vec\tabval)=\emptyset$ for any
  row~$\vec\tabval\not\in\tab{j}$.
  For node~$t_{13}$ of type~$\join$ and row~$\vec{\tabval_{13.2}}$, 
  $\origse{\PRIM}(t_{13},\vec{\tabval_{13.2}})$ $=
  \{\langle\vec{\tabval_{4.2}},$ $\vec{\tabval_{12.2}} \rangle, \langle\vec{\tabval_{4.2}},$ $\vec{\tabval_{12.3}} \rangle\}$.
\end{example}

\longerversion{\paragraph{Table Algorithm for the Incidence Graph}

\input{algorithms/sinc}%

With the general algorithm in mind (see Figure~\ref{fig:dp-approach}), we are now ready to propose $\INC$, a
new table algorithm for solving \ASP on the semi-incidence graph (see 
Listing~\ref{fig:sinc}). 
As in the general approach, \INC computes and stores witness sets, and their
corresponding counter-witness sets. However, in addition, for each witness set
and counter-witness set, respectively, 
we need to store so-called \emph{satisfiability states}
(or \emph{sat-states}, for short), since the atoms of a rule may no longer be
contained in one single bag of the tree decomposition of the semi-incidence graph. Therefore, we
need to remember in each tree decomposition node, ``how much'' of a rule is already satisfied. The
following describes this in more detail.

By definition of tree decompositions and the semi-incidence graph, for
every atom~$a$ and every rule~$r$ of a program, it is true that if atom~$a$ occurs
in rule~$r$, then $a$ and $r$ occur together in at least one bag of
the tree decomposition. As a consequence, the table algorithm encounters every
occurrence of an atom in any rule. In the end, on removal of~$r$, we
have to ensure that $r$ is among the rules that are already
satisfied. However, we need to keep track of whether a witness satisfies a rule, because not all atoms that occur in a rule
occur together in a bag. Hence, when our algorithm traverses
the tree decomposition and an atom is removed we still need to store this
sat-state, as setting the removed atom to a certain truth
value influences the satisfiability of the rule.
Since the semi-incidence graph contains a clique on every set~$A$ of
atoms that occur together in a weight rule body or
choice rule head, %
those atoms~$A$ occur together in a bag in every tree decomposition of the
semi-incidence graph. For that reason, we do {not} need to
incorporate weight or 
choice rules into the satisfiability state, in
contrast to the table algorithm for the incidence graph discussed later~(c.f. Section~\ref{sec:inc}).

In algorithm~\INC (detailed in Listing~\ref{fig:sinc}), a tuple in the
table~$\tab{t}$ is a triple~$\langle M, \sigma, \CCC \rangle$.  The
set~$M \subseteq \at(\prog)\cap\chi(t)$ represents a witness set. 
The family~$\CCC$ of sets represents counter-witnesses, which we will
discuss in more detail below.
The sat-state~$\sigma$ for $M$ represents rules of $\chi(t)$ satisfied
by a superset of~$M$.  Hence, $M$ witnesses a model~$M'\supseteq M$
where $M' \models \progtneq{t} \cup \sigma$.  
We use the binary operator~$\cup$ to combine sat-states, which ensures
that rules satisfied in at least one operand remain satisfied. For a node~$t$, our algorithm considers a local-program depending on the bag~$\chi(t)$. Intuitively, this provides a local view on the program.

For a node~$t$, our algorithm considers a local-program depending on the bag~$\chi(t)$. Intuitively, this provides a local view on the program.
\begin{definition}\label{def:bagprogram}%
  Let $\prog$ be a program, $\TTT=(\cdot,\chi)$ a tree decomposition of $S(\prog)$,
  $t$ a node of $\TTT$ and ${R} \subseteq \prog_t$.
  The \emph{local-program mapping}~${R}^{(t)}: R \rightarrow prog(\at(R)\cap\chi(t))$ assigns to each rule~$r\in R$ a rule obtained from~$r$
  by %
  removing all
    literals~$a$ and $\neg a$ where $a \not\in \chi(t)$.
  \shortversion{}%
\end{definition}%

\begin{example}
  Observe
  $\prog_{t_1}^{(t_1)} = \{(r_1, b \hsep \neg a), (r_2, a \hsep \neg b)\}$ and
  $\prog_{t_2}^{(t_2)} = \{(r_1, b \hsep \neg a), (r_2, a \hsep \neg b), (r_3, d\hsep)\}$ for $\prog_{t_1}$, $\prog_{t_2}$ of
  Figure~\ref{fig:graph-td2}. %
\end{example}

Since the local-program mapping~$\prog^{(t)}$ depends on the considered
node~$t$, we may have different local-program mappings for node~$t$ and
its child~$t'$. In particular, the mappings~$\{r\}^{(t)}$ and
$\{r\}^{(t')}$ might already differ for a
rule~$r \in \chi(t) \cap \chi(t')$. In consequence for satisfiability
with respect to sat-states, we need to keep track of a representative
of a rule. 

$\SP(\dot{R}, \sigma) \eqdef \{ a \mid (r, s) \in \dot{R}, a \in H_s, a >_\sigma r\}$

$\checkord(\dot{R}, \sigma, \phi, a) \eqdef \text{true iff } a >_\sigma r \implies r\not\in\phi \text{ for any } (r,s) \in\dot{R}$ where $a\in\at(s)$

$\checkmod(\dot{R}, J, \sigma) \eqdef \text{true iff } J \cap \at(s)  = B_s^+ \cup X$, $\Card{X}\leq 1$, and~$X\subseteq H_s$ where $X = \{a\mid a \in \at(s), a >_\sigma r\} \text{ for any } (r,s) \in\dot{R} \text{ with } \sigma = \langle \ldots, r, \ldots \rangle$

Note that in the end~$X \cap H_s\neq \emptyset$ in at least some bag,
since otherwise the rule is not satisfied anyway, but we also have to
enforce that no body atom can in~$X$!}

Next, we provide statements on correctness and a runtime analysis. %

\shortversion{\vspace{-0.4em}}\begin{theorem}[$\star$%
]\label{thm:primcorrectness}
  \longversion{The algorithm~$\dpa_\PRIM$ is correct. %
  In other words, given}\shortversion{Given} a head-cycle-free program~$\prog$ and a
  TTD~${\cal T} = (T,\chi,\cdot)$ of~$G_\prog$
  where~$T=(N,\cdot,n)$ with root~$n$. Then,
  $\dpa_\PRIM((\prog,\cdot),\TTT)$ returns the
  $\PRIM$-TTD~$(T,\chi,\tau)$ such that $\prog$ has an answer set if
  and only if
  $\langle \emptyset, \emptyset, \langle
  \rangle\rangle\in\tau(n)$. \longversion{Further, we can construct all the answer
  sets of~$\prog$ from transitively following the origins
  of~$\tau(n)$.}  %
\end{theorem}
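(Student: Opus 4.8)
The plan is to establish a single characterization invariant for the table $\tau(t)$ at every node $t$ and then read off the claim at the root. Concretely, I would prove by induction along $T$ in post-order (the order in which $\dpa_\PRIM$ fills the tables) that a row $\langle I, \mathcal{P}, \sigma\rangle$ belongs to $\tau(t)$ \emph{if and only if} there exist an interpretation $\hat I \subseteq \att{t}$ with $\hat I \cap \chi(t) = I$ and $\hat I \models \progt{t}$, together with a sequence $\sigma'$ of all atoms of $\hat I$ whose restriction to the atoms in $I$ equals $\sigma$, such that every atom in $\mathcal{P} \cup (\hat I \setminus \chi(t))$ is proven in $\progt{t}$ with respect to $\hat I$ and the induced ordering $<_{\sigma'}$ in the sense of Lin and Zhao~\cite{LinZhao03}. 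Informally, $(I,\mathcal{P},\sigma)$ is the bag-local projection of a partial model $\hat I$ of the program below $t$, its set of already-proven atoms, and a proof ordering.

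The base case is immediate: a leaf $t$ has $\chi(t)=\emptyset$ and $\progt{t}=\emptyset$, so the single row $\langle\emptyset,\emptyset,\langle\rangle\rangle$ is witnessed by $\hat I=\emptyset$. For the inductive step I would check each node type against the invariant. At an introduce node for atom $a$, the algorithm branches on $a\in I$ or $a\notin I$, enforces $J\models\prog_t$ --- and $\prog_t$ consists of exactly the rules that first become fully contained in the bag, hence the only rules newly added to $\progt{t}$ --- inserts $a$ into $\sigma$ at every admissible position via $\possord$, and records newly provable bag atoms through $\gatherproof$; this corresponds exactly to extending $\hat I$ and $\sigma'$ by $a$. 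At a remove node for $a$, retaining only rows with $a\in\mathcal{P}\cup(\{a\}\setminus I)$ commits that $a$ is proven whenever $a\in\hat I$; the connectedness condition of tree decompositions guarantees that every rule mentioning $a$ already appeared in some $\prog_{t''}$ at or below $t$, so no future proof of $a$ is possible and discarding unproven-$a$ rows is sound and complete. At a join node the two child rows agree on $I$ and $\sigma$, so they describe witnesses $\hat I_1,\hat I_2$ that coincide on the bag; their removed atoms are disjoint, $\progt{t}$ is the union of the two programs below, and taking $\mathcal{P}_1\cup\mathcal{P}_2$ records that a bag atom need only be proven in one of the two subtrees.

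At the root $n$ we have $\chi(n)=\emptyset$ and $\progt{n}=\prog$, so the only candidate row is $\langle\emptyset,\emptyset,\langle\rangle\rangle$, and the invariant says it is present iff some $\hat I\subseteq\at(\prog)$ with $\hat I\models\prog$ has all of its atoms --- all of which lie in $\hat I\setminus\chi(n)$ --- proven under a suitable ordering. This is precisely the Lin--Zhao characterization of $\hat I$ being an answer set, which is valid for head-cycle-free programs by Ben-Eliyahu and Dechter~\cite{Ben-EliyahuDechter94}; hence $\prog$ has an answer set iff the empty row is in $\tau(n)$. The reconstruction statement then follows by unrolling $\orig$: each origin tuple names exactly the child rows that generate a given row, so transitively following origins from the root row reassembles a witness $\hat I$ with a consistent global ordering, i.e., an answer set, and the completeness direction of the invariant shows every answer set arises this way.

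The main obstacle is the bookkeeping of the ordering via $\sigma$ and its super-sequence $\sigma'$. I must show that the proving condition $B^+_r<_{\sigma'}a$ is preserved under the local updates, that $\possord$ generates exactly the orderings extendable to a valid $\sigma'$, and --- most delicately --- that at a join node the two super-sequences $\sigma'_1$ and $\sigma'_2$ extending the common $\sigma$ can be merged into a single acyclic global ordering of $\hat I_1\cup\hat I_2$, so that $\mathcal{P}_1\cup\mathcal{P}_2$ never certifies an atom through an inconsistent or cyclic order. The disjointness of the removed atoms across the two branches and the head-cycle-freeness of $\prog$ (ensuring the Lin--Zhao condition is the right notion of proof) are exactly what make this merging legitimate.
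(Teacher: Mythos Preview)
Your overall strategy --- fix a node invariant characterizing the rows of $\tau(t)$ and verify it by case analysis on the node type --- is exactly what the paper does (it introduces \PRIM-solutions up to $t$ and their bag restrictions, then proves separate soundness and completeness lemmas). You also correctly flag the merging of the two super-sequences at a join node as the delicate step, and your argument that the two orders agree on the shared bag and involve disjoint removed atoms, hence admit a common linear extension, is the right one.

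There is, however, a genuine gap in your formal invariant. You require only that every atom in $\mathcal{P}\cup(\hat I\setminus\chi(t))$ be proven, i.e., that $\mathcal{P}$ is \emph{contained in} the set of bag atoms provable from $\progt{t}$ under $\hat I$ and $<_{\sigma'}$. But the algorithm computes $\mathcal{P}$ deterministically as the \emph{full} set of such atoms: at an introduce node the new proof set is $\mathcal{P}\cup\gatherproof(J,\sigma',\prog_t)$, with no freedom to omit provable atoms. Hence your ``if'' direction is false already at a single introduce node. Take $\prog=\{a\leftarrow\}$ and the node introducing $a$: your invariant would place $\langle\{a\},\emptyset,\langle a\rangle\rangle$ in $\tau(t)$ (the witness $\hat I=\{a\}$ vacuously proves every atom in $\emptyset\cup\emptyset$), yet the only row the algorithm actually produces is $\langle\{a\},\{a\},\langle a\rangle\rangle$. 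With the invariant false at interior nodes, the induction cannot be carried through. The paper avoids this by making the condition two-sided --- $a\in\hat{\mathcal P}$ \emph{iff} $a$ is proven --- so that $\mathcal{P}$ is exactly the bag-restriction of the proven set. Your informal gloss (``its set of already-proven atoms'') already says the right thing; you just need to carry the ``exactly'' into the formal statement, after which the induction goes through along the lines you sketch.
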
%
\longversion{\begin{proof}[Proof (Idea)]
  For soundness, we state and establish an invariant %
  for every node~$t\in N$.  
  \longversion{For each
  row~$\vec\tabval=\langle I, \mathcal{P}, \sigma\rangle\in\tau(t)$,
  we have~$I\subseteq\chi(t), \mathcal{P}\subseteq I$, and~$\sigma$ is
  a sequence over atoms in~$I$. 
  Intuitively, we ensure existence of~$I'\supseteq I$ s.t.\
  $I'\models\progt{t}$ and that exactly the atoms in~$\attneq{t}$
  and~$\mathcal{P}$ can be proven using a super-sequence~$\sigma'$
  of~$\sigma$.  By construction, we guarantee that we can decide
  consistency if
  row~$\langle \emptyset, \emptyset, \langle
  \rangle\rangle\in\tau(n)$. Further, we can even reconstruct answer
  sets, by following $\origa{\PRIM}$ of this single row back to the
  leaves.}
  For completeness, we show that we indeed obtain 
  all required rows\longversion{ to
  output all the answer sets of~$\prog$}.
\end{proof}}
\shortversion{\vspace{-.9em}\begin{theorem}[$\star$]}
\longversion{\begin{theorem}}
  \label{thm:primruntime}
  Given a head-cycle-free program~$\prog$ and a TD~${\cal T} = (T,\chi)$ of~$G_\prog$ of width~$k$ with $g$
  nodes. Algorithm~$\dpa_{\PRIM}$ runs in time
  $\mathcal{O} (3^{k}\cdot k !  \cdot g)=\mathcal{O}(2^{k\cdot \text{log}(k)}\cdot g)$.
\end{theorem}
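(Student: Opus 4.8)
The plan is to bound, uniformly over all nodes~$t$, the number of rows stored in $\tab{t}$, and then to charge each node only polynomially more work than the size of the tables it reads and writes; multiplying the per-node cost by the number~$g$ of nodes yields the claimed bound. The correctness argument (Theorem~\ref{thm:primcorrectness}) already supplies the structural invariant I need: every row $\langle I,\mathcal P,\sigma\rangle\in\tab{t}$ satisfies $I\subseteq\chi(t)$, $\mathcal P\subseteq I$, and $\sigma$ is a permutation (linear order) of the atoms in~$I$. So I would first count rows according to this shape. Writing $n\eqdef\Card{\chi(t)}\le k+1$, an $i$-element set~$I$ can be chosen in $\binom{n}{i}$ ways, the proven set $\mathcal P\subseteq I$ in $2^i$ ways, and the order~$\sigma$ in $i!$ ways, so
\[
  \Card{\tab{t}}\ \le\ \sum_{i=0}^{n}\binom{n}{i}\,2^i\,i!\ =\ n!\sum_{i=0}^{n}\frac{2^i}{(n-i)!}\ =\ 2^n\,n!\sum_{j=0}^{n}\frac{2^{-j}}{j!}\ \le\ e^{1/2}\,2^{n}\,n!.
\]
With $n\le k+1$ this is $\bigO{2^{k}\,k!}\subseteq\bigO{3^{k}\,k!}$; one may also read it through the coarser picture that each bag atom is in one of three states (outside~$I$, in $I\setminus\mathcal P$, or in~$\mathcal P$), and~$\sigma$ is one of at most $(k+1)!$ orders. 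Either way $\Card{\tab{t}}=2^{\bigO{k\log k}}$.

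Next I would bound the work done at a single node by a case distinction on $\type(t)$, following Listing~\ref{fig:prim}. A $\leaf$ costs~$\bigO{1}$. For an $\intr$ node each row of the child table spawns at most $2\cdot(\Card{\sigma}+1)\le 2(k+2)$ candidates (two choices of~$J$, and at most $\Card{\sigma}+1$ insertion positions produced by $\possord$), and for each candidate we evaluate $\gatherproof(\cdot)$ and test $J\models\prog_t$, both in time polynomial in~$k$ and~$\CCard{\prog_t}$; hence the cost is $\Card{\tab{1}}\cdot k\cdot\mathrm{poly}$. A $\rem$ node only scans its child table once, costing $\Card{\tab{1}}\cdot\mathrm{poly}$. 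The $\join$ is the only genuinely quadratic step, and it is the main obstacle: it pairs rows of $\tab{1}$ and $\tab{2}$ that agree on $(I,\sigma)$. Bucketing both tables by the key $(I,\sigma)$ (by sorting, in time near-linear in the table sizes, each comparison polynomial in~$k$), the pairs produced inside the buckets of a fixed cardinality~$i$ number at most $\binom{n}{i}\,i!\cdot(2^i)^2$, so summing gives $\sum_i\binom{n}{i}i!\,4^i\le e^{1/4}\,4^{n}n!=\bigO{4^k k!}$.

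Finally I would assemble the pieces. Each node costs $2^{\bigO{k\log k}}\cdot\mathrm{poly}(\CCard{\prog})$ in the worst case (dominated by the $\join$), and there are~$g$ nodes, so $\dpa_{\PRIM}$ runs in $2^{\bigO{k\log k}}\cdot g$ once the polynomial program-size factors are folded in. Since $\log_2\!\big(3^{k}k!\big)=k\log_2 k + \bigO{k}=k\log k\,(1+o(1))$ by Stirling, and likewise $4^{k}k!=2^{\bigO{k\log k}}$, the base of the exponential ($2$, $3$, or~$4$) is immaterial, and all these collapse to the stated $\bigO{3^{k}\cdot k!\cdot g}=\bigO{2^{k\log(k)}\cdot g}$. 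The one point needing care is exactly this reconciliation: the $\join$ genuinely incurs a $4^{k}$ factor rather than $3^{k}$, yet because the dominant $k!=2^{k\log k\,(1+o(1))}$ term swamps every constant-base exponential, the final bound is unaffected; the only essential combinatorial content is the $k!$ that arises from storing an explicit order~$\sigma$ in each row.
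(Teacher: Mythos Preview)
Your proof is correct and mirrors the paper's approach: bound $\Card{\tab{t}}$ by counting the three possible states of each bag atom together with the at most $(k{+}1)!$ orderings, charge each node type separately, and multiply by~$g$. The paper's own sketch is terser---it simply asserts that ``with the help of efficient data structures'' the $\join$ can be handled within $\bigO{3^{d}\cdot d!}$ for $d=k{+}1$, without spelling out how---whereas you work out the $\join$ honestly via bucketing on $(I,\sigma)$ and obtain $\bigO{4^{k}\cdot k!}$, then absorb the base into the dominant $k!=2^{\Theta(k\log k)}$ term. One small caveat: strictly $4^{k}\,k!\notin\bigO{3^{k}\,k!}$, so your argument establishes the second form $2^{\bigO{k\log k}}\cdot g$ of the stated bound rather than the literal first form; this is harmless since the theorem statement itself identifies the two.
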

\longversion{
\begin{proof}[Proof (Sketch)]
  Let~$d = k+1$ be maximum bag size of the tree
  decomposition~$\TTT$. %
  The table~$\tau(t)$ has at most
  $3^{d} \cdot d!$ rows, since for a row~$\langle I, \mathcal{P}, \sigma\rangle$ we have~$d!$ many sequences~$\sigma$, and by construction of algorithm~$\PRIM$, an atom can be either in~$I$, both in~$I$ and~$\mathcal{P}$, or neither in~$I$ nor in~$\mathcal{P}$.
  In total, with the help of efficient data structures, e.g., for nodes~$t$ with~$\type(t)=\join$, one can establish a runtime bound of~$\bigO{{3^{d}\cdot d!}}$.
  Then, we apply this to every node~$t$ of the tree decomposition,
  which resulting in running
  time~$\bigO{{3^{d}\cdot d!} \cdot g}\subseteq \bigO{3^{k}\cdot k!\cdot g}$.
\end{proof}}
\longversion{
In order to obtain an upper bound on width factorial~$k!$, we can simply take
$k! \leq 2^k$ for any fixed width~$k\leq 3$. While in general~$k!$ is obviously not bounded by~$2^k$ for any fixed~$k\geq 4$, \emph{asymptotically}
$k!$ is bounded by~$2^{k^{(c+1)/c}}$ for any fixed positive integer~$c\geq 1$ as stated in Lemma~\ref{prop:kfact}.

\begin{lemma}[$\star$]\label{prop:kfact}
  Given any fixed positive integer~$c \geq 1$ and
  functions~$f(k)\eqdef k!, g(k) \eqdef 2^{k^{(c+1)/c}}$, where~$k$ is a non-negative integer. Then,
  $f \in \bigO{g}$.
\end{lemma}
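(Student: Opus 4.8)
The plan is to reduce the statement to an elementary comparison of growth rates by taking logarithms. By definition, $f \in \bigO{g}$ means there exist a constant $C > 0$ and a threshold $k_0$ such that $k! \leq C \cdot 2^{k^{(c+1)/c}}$ for all $k \geq k_0$. Since both sides are positive, it suffices to take $C = 1$ and prove the bound $k! \leq 2^{k^{(c+1)/c}}$ for all sufficiently large $k$, and then to verify it by comparing base-$2$ logarithms of the two sides.

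First I would bound the left-hand side: from $k! \leq k^k$ we get $\log_2(k!) \leq k \log_2 k$ (Stirling's approximation yields the same leading term, but the crude bound is enough). On the right-hand side, using $(c+1)/c = 1 + 1/c$, I would compute $\log_2\bigl(2^{k^{(c+1)/c}}\bigr) = k^{(c+1)/c} = k \cdot k^{1/c}$. Hence the entire claim reduces, after dividing by $k > 0$, to the single inequality $\log_2 k \leq k^{1/c}$ for all large $k$.

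The remaining step is to establish $\log_2 k \leq k^{1/c}$ for $k \geq k_0(c)$. Since $c \geq 1$ is a fixed positive integer, $1/c$ is a fixed positive constant, and logarithmic growth is dominated by every fixed positive power. To make this explicit I would substitute $t \eqdef k^{1/c}$, so that $k = t^c$ and $\log_2 k = c \log_2 t$; the inequality becomes $c \log_2 t \leq t$, which holds for all $t$ beyond some threshold because $\log_2 t = o(t)$. Translating back furnishes a threshold $k_0 = k_0(c)$ above which the desired bound holds, which completes the argument.

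The main obstacle is, in truth, only bookkeeping: one must check that the threshold $k_0$ and the implied constant depend solely on the fixed parameter $c$, and that $1/c$ stays bounded away from $0$ for each fixed $c$ (which is immediate). There is no genuine difficulty here, as the crux is the textbook fact that $\log_2 k$ is dominated by any positive power $k^{\varepsilon}$; the only care needed is tracking the substitution $t = k^{1/c}$ so that the fixed exponent $1/c$ is visibly a positive constant throughout.
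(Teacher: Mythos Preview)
Your proof is correct. The approach differs from the paper's: the paper proceeds by induction on~$k$ (and, somewhat redundantly, simultaneously on~$c$), peeling off one factor at a time and reducing the induction step to the inequality $2^{(k+1)^{1/c}} \geq k+1$, i.e., $(k+1)^{1/c} \geq \log_2(k+1)$. You instead argue directly via the crude bound $k! \leq k^k$, take logarithms, and reduce immediately to the same core inequality $\log_2 k \leq k^{1/c}$. Both routes hinge on the identical elementary fact that a logarithm is dominated by any fixed positive power; your direct argument is shorter and avoids the bookkeeping of the inductive step, while the paper's induction makes the passage from $k$ to $k+1$ explicit but gains nothing in clarity or generality.
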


In particular, $k!\leq 2^{k^{4/3}}$ for~$k \geq 342$, $k!\leq 2^{k^{5/4}}$ for~$k \geq 34556$, and $k!\leq 2^{k^{6/5}}$ for~$k \geq 3413636$.}

\shortversion{\vspace{-0.5em}}A natural question is whether we can significantly improve this
algorithm for fixed~$k$.  To this end, we take the \emph{exponential
  time hypothesis (ETH)} into account~\cite{ImpagliazzoPaturiZane01}, 
  which states that there is some real~$s > 0$ such that we cannot decide satisfiability of a given
3-CNF formula~$F$ in
time~$2^{s\cdot\Card{F}}\cdot\CCard{F}^{\mathcal{O}(1)}$.

\shortversion{\vspace{-0.2em}
\begin{proposition}[$\star$]}
\longversion{\begin{proposition}}
  Unless ETH fails, consistency of \FIX{head-cycle-free, normal or tight} program~$\prog$
  cannot be decided in time~$2^{o(k)} \cdot \CCard{\prog}^{o(k)}$
  where~$k=\tw{G_\prog}$. %
\end{proposition}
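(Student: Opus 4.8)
The plan is to reduce $3$-SAT to \ASP-consistency of a program $\prog_F$ that is simultaneously tight, normal, and head-cycle-free, so that a single reduction settles all three fragments, and then to turn any hypothetical fast consistency algorithm into a too-fast $3$-SAT algorithm. To make ETH bite, I would first invoke the Sparsification Lemma, after which it suffices to rule out a $2^{o(n)}\cdot\CCard{F}^{\bigO{1}}$ decision procedure for $3$-CNF formulas $F$ with $n$ variables and only $\bigO{n}$ clauses; under ETH no such procedure exists.

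For the reduction, given $F$ with variables $x_1,\dots,x_n$ and clauses $C_1,\dots,C_m$, I would build $\prog_F$ as a guess-and-check program. For each variable $x_i$ add a choice gadget with a fresh atom $\bar x_i$ and the two rules $x_i \hsep \neg\bar x_i$ and $\bar x_i \hsep \neg x_i$; every answer set contains exactly one of $x_i,\bar x_i$, so the gadgets enumerate truth assignments. For each clause $C_j$ add a single constraint $\hsep b_{j,1},b_{j,2},b_{j,3}$ whose body lists the \emph{complement} atoms of the three literals (i.e.\ $\bar x_i$ for a positive literal $x_i$ and $x_i$ for a negated literal $\neg x_i$), so the constraint fires exactly on assignments falsifying $C_j$. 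Since every positive body $B^+_r$ is empty, the positive dependency digraph $D_{\prog_F}$ has no edges, so $\prog_F$ is tight (hence head-cycle-free), and since $\Card{H_r}\le 1$ for all $r$ it is also normal; moreover every atom occurs in the head of some choice rule, meeting the standing assumption. Correctness is immediate from the Lin--Zhao/Fages characterization: with empty positive bodies all atoms are vacuously proven, so the answer sets of $\prog_F$ are precisely the models of the choice gadgets that survive the clause constraints, i.e.\ the satisfying assignments of $F$.

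Finally I would track the parameters: the construction is polynomial and yields $\bigO{n}$ atoms and $\bigO{n}$ rules, so $\CCard{\prog_F}=\bigO{n}$, and composing it with a hypothetical $2^{o(k)}\cdot\CCard{\prog}^{o(k)}$ consistency algorithm decides satisfiability of $F$ in time bounded in $n$ and $k=\tw{G_{\prog_F}}$. The decisive and most delicate step is the treewidth bookkeeping: I must exhibit a tree decomposition of the primal graph $G_{\prog_F}$ whose width is small enough, as a function of $n$ and $\CCard{\prog_F}$, that the composed running time collapses to $2^{o(n)}$ and contradicts ETH. The basic encoding already gives $\tw{G_{\prog_F}}=\bigO{n}$, which is enough to exclude the plain single-exponential regime; extracting the sharper $\CCard{\prog}^{o(k)}$ factor is where the real work lies, since it forces a \emph{treewidth-aware} variant of the encoding that keeps the clause-checking part of bounded width and controls how the width grows relative to $n$, all while preserving emptiness of the positive bodies so that tightness (and hence membership in all three fragments) is not lost. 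Verifying that width bound together with tightness simultaneously is the main obstacle, and it is precisely the point that governs the gap to the $\bigO{2^{k\cdot\mathrm{log}(k)}\cdot g}$ upper bound established for $\dpa_{\PRIM}$.
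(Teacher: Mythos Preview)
Your reduction is the right one and essentially matches the paper's. Two points.

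First, a minor slip: the claim that ``every positive body $B^+_r$ is empty'' is false for your clause constraints, where $B^+_r=\{b_{j,1},b_{j,2},b_{j,3}\}$. Tightness still holds, but for a different reason: constraints have empty \emph{heads} and therefore contribute no edges to $D_{\prog_F}$; together with the empty positive bodies of the choice rules, $D_{\prog_F}$ has no edges at all.

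Second, and this is the real gap: your treewidth analysis aims at the wrong target. Bounding $\tw{G_{\prog_F}}$ by $\bigO{n}$ is neither sufficient nor the point; what you must show is that the reduction \emph{preserves treewidth linearly}, i.e., $\tw{G_{\prog_F}}=\bigO{\tw{G_F}}$ where $G_F$ is the primal graph of the input formula. The lower bound to invoke is not raw ETH on $n$ variables but the known treewidth-parameterized lower bound for \SAT (first line of Table~\ref{tbl:summary}); a treewidth-preserving polynomial reduction then transfers it directly, including the $\CCard{\cdot}^{o(k)}$ factor, with no further work. And your basic encoding already \emph{is} treewidth-preserving: given any tree decomposition $(T,\chi)$ of $G_F$ of width $k_F$, set $\chi'(t)\eqdef\chi(t)\cup\{\bar x_i \mid x_i\in\chi(t)\}$; every choice edge $\{x_i,\bar x_i\}$ and every clause clique lands in some $\chi'(t)$, so $(T,\chi')$ is a tree decomposition of $G_{\prog_F}$ of width at most $2k_F+1$. (If, like the paper, you encode the clause constraint using default negation $\neg x_i$ rather than the fresh atom $\bar x_i$, the $\bar x_i$ become pendant vertices and the width does not grow at all.) There is no need for a separate ``treewidth-aware variant of the encoding''---the entire argument is the reduction plus this one bag-doubling observation, which is why the paper's proof is a single line.
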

\longversion{%
\begin{proof}
  \FIX{Reduction from} SAT to \ASP similar to the proof of Theorem~\ref{prop:hcfproj}.
\end{proof}%
}
\noindent\FIX{In the construction above, we store an arbitrary but fixed ordering~$\sigma$ on
the involved atoms. We believe that we cannot avoid these orderings in
general, since we have to compensate arbitrarily ``bad'' orderings
induced by the decomposition. 
Hence, we claim that \ASP for head-cycle-free programs
is slightly superexponential, rendering our algorithm asymptotically worst-case optimal. Lokshtanov, Marx and Saurabh confirm
such an expectation~\cite{LokshtanovMarxSaurabh11}  whenever orderings are required.}
\begin{conjecture}
  Unless ETH fails, consistency of a head-cycle-free program~$\prog$
  cannot be decided in
  time~$2^{o(k\cdot \text{log}(k))} \cdot \CCard{\prog}^{o(k)}$
  where~$k=\tw{G_\prog}$. %
\end{conjecture}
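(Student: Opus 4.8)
The plan is to establish this conjecture as an ETH lower bound within the ``slightly superexponential'' framework of Lokshtanov, Marx and Saurabh~\cite{LokshtanovMarxSaurabh11}, matching the $2^{\bigO{k\log k}}$ runtime of $\dpa_\PRIM$ (Theorem~\ref{thm:primruntime}) from below. The whole $k!=2^{\Theta(k\log k)}$ factor in that algorithm stems from storing the proof-order~$\sigma$; so any proof must pinpoint a source problem whose witnesses are essentially \emph{orderings} (equivalently, injections or level mappings of a $k$-element set) threaded along a path decomposition, and whose ETH bound is $2^{\Omega(p\log p)}\cdot n^{O(1)}$ for a structural parameter~$p$ that coincides, up to a constant factor, with the treewidth of a naturally associated graph. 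LMS supply exactly such base problems (their $k\times k$ grid / permutation problems), so I would start from one of these rather than from plain $3$-SAT, since a $3$-SAT reduction can only yield the weaker $2^{o(k)}$ bound of the preceding proposition.

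Concretely, I would reduce from an LMS-type ordering problem~$Q$ in which one must select, for each of $p$ ``rows'', one value from~$[p]$ (an object living in a space of size $p^p=2^{\Theta(p\log p)}$) subject to constraints between consecutive rows, realized so that the associated (path-)decomposition has width~$\bigO{p}$. From an instance of~$Q$ I would build a head-cycle-free program~$\prog$ whose founded-model semantics forces the induced proof-order~$\sigma$ over the bag atoms to encode exactly the chosen row-values: the key is the Lin--Zhao characterization already used for~$\PRIM$, where an atom is proven only through positive-body atoms occurring \emph{earlier} in~$\sigma$. By chaining supports through auxiliary ``slot'' atoms one can make the set of admissible orderings coincide with the feasible selections of~$Q$, while inter-row constraints are enforced by constraints (rules whose violation makes a distinguished atom simultaneously required and forbidden). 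I would keep~$\prog$ head-cycle-free by placing all head-disjunctions among atoms that never share a positive cycle, so the reduction stays inside the fragment to which Theorem~\ref{thm:primruntime} applies. The decomposition of~$G_\prog$ would mirror that of the $Q$-instance, giving $\tw{G_\prog}=\bigO{p}$ and $\CCard{\prog}=n^{O(1)}$.

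With such a reduction in hand, a hypothetical algorithm deciding consistency of~$\prog$ in time $2^{o(k\log k)}\cdot\CCard{\prog}^{o(k)}$ with $k=\tw{G_\prog}$ would, after substituting $k=\Theta(p)$ and $\CCard{\prog}=n^{O(1)}$, decide~$Q$ in time $2^{o(p\log p)}\cdot n^{o(p)}$, contradicting the ETH lower bound for~$Q$; here one must either invoke a form of the LMS bound that also forbids the $n^{o(p)}$ slack, or arrange $p=\Omega(n/\mathrm{polylog}\,n)$ so that $n^{o(p)}$ is absorbed into $2^{o(p\log p)}$.

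The main obstacle is the encoding step under the head-cycle-free restriction. The foundedness ordering that the $k!$ factor pays for is exactly a non-circular support order, so the natural way to force ``orderings must be guessed'' is through positive cycles among head atoms---precisely what head-cycle-freeness forbids. I would therefore have to realize the permutation gadget using only acyclic positive dependencies plus disjunction placed off the cycle, and prove that even this restricted fragment inherits the ordering hardness; this is the delicate part, and it is also where the conjecture could fail, since if every such gadget admits a decomposition-local encoding that avoids full orderings, then HCF consistency would in fact be solvable in $2^{\bigO{k}}$ and the conjecture would be false. A secondary, more technical hurdle is matching the $\CCard{\prog}^{o(k)}$ factor of the statement to the $n^{O(1)}$ form of the standard LMS bound, which requires either strengthening the source lower bound or tuning the parameter so the slack vanishes.
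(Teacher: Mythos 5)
The statement you are trying to prove is not proved in the paper at all: it is stated as a \emph{conjecture}, preceded only by an informal plausibility argument (the authors remark that the ordering~$\sigma$ stored by~$\PRIM$ seems unavoidable ``to compensate arbitrarily bad orderings induced by the decomposition'' and point to Lokshtanov, Marx and Saurabh~\cite{LokshtanovMarxSaurabh11} as confirming such expectations ``whenever orderings are required''). So there is no paper proof to compare against; the only provable lower bound the paper actually establishes for head-cycle-free consistency is the weaker $2^{o(k)}\cdot\CCard{\prog}^{o(k)}$ bound, obtained by a simple reduction from \SAT. Your identification of the LMS slightly-superexponential framework as the natural source, and of $k\times k$ permutation-type problems as the right starting point (rather than $3$-SAT, which can only give $2^{o(k)}$), matches the authors' stated intuition exactly.

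However, as a proof your proposal has a genuine gap, and you name it yourself: the pivotal gadget --- an encoding of an LMS ordering problem into a \emph{head-cycle-free} program whose Lin--Zhao proof order is forced to range over $2^{\Theta(k\log k)}$ essentially distinct orderings, while all positive dependencies among disjunctive heads remain acyclic and the treewidth of~$G_\prog$ stays $\bigO{p}$ --- is never constructed. Everything else in your argument (the parameter substitution, the contradiction with the LMS bound) is routine once that gadget exists, so the proposal is a research plan rather than a proof; this is precisely the missing piece that leaves the statement a conjecture in the paper. Your secondary concern is also real: the LMS bounds are typically stated as excluding $2^{o(p\log p)}\cdot n^{O(1)}$ algorithms, whereas the conjecture additionally excludes an $\CCard{\prog}^{o(k)}$ factor, so you would indeed need either a strengthened source bound or the parameter-tuning you sketch. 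In short: right target, right framework, honest self-assessment --- but no proof, because the central HCF-compatible ordering gadget is exactly the open problem.
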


\shortversion{\vspace{-1em}}
\section{Dynamic Programming for~$\PASP$}%

\label{sec:projmodelcounting}

\longversion{
\begin{figure}[t]
\centering
\includegraphics[scale=0.72]{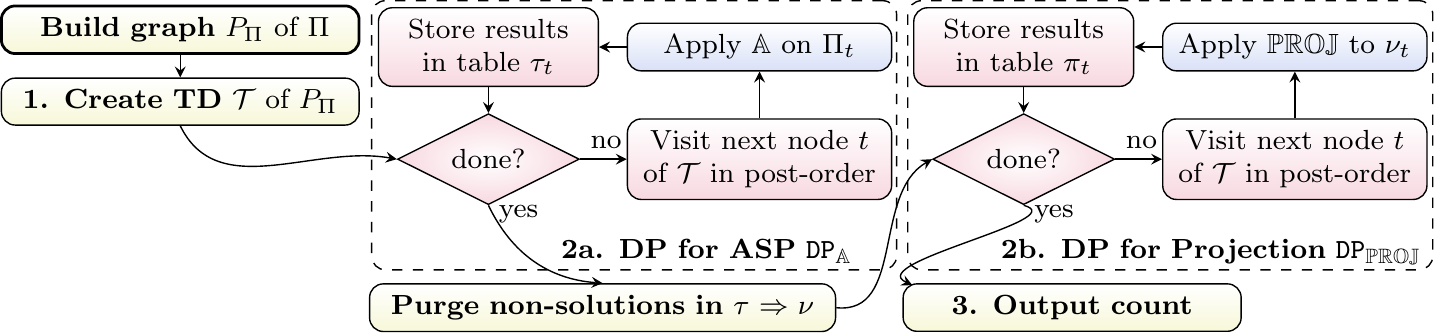}%
\caption{Algorithm~$\mdpa{\AlgA}$ consists of~$\dpa_\AlgA$
  and~$\dpa_\PROJ$. %
}
\label{fig:multiarch}
\end{figure}%
}

In this section, we present our DP
algorithm\footnote{\label{foot:phc}Later we use (among
  others)~\mdpa{\PRIM} where~$\AlgA=\PRIM$.}~\mdpa{\AlgA}, which
allows for solving the projected answer set counting problem (\PASP).
\mdpa{\AlgA} is based on an approach of projected counting for Boolean
formulas~\cite{FichteEtAl18} where TDs are traversed multiple times.
We show that ideas from that approach can be fruitfully extended to
answer set programming.
\longversion{Figure~\ref{fig:multiarch} illustrates the steps of \mdpa{\AlgA}.}
First, we construct the primal graph~$G_\prog$ of the input program~$\prog$
and compute a TD of $\prog$. Then, we traverse the TD a first time by
running $\dpa_\AlgA$ (Step~2a), which outputs a
TTD~$\TTT_{\text{cons}}=(T,\chi,\tau)$, where~$T=(N, \cdot, n)$.
Afterwards, we traverse $\TTT_{\text{cons}}$ in pre-order and remove
all rows from the tables %
that cannot be extended to an answer set (\emph{``Purge
  non-solutions''}).
In other words, we keep only rows~$\vec u$ of table~$\tau(t)$ at
node~$t$, if~$\vec u$ is involved in those rows that are used to
construct an answer set of~$\prog$, and let the resulting TTD\footnote{\label{foot:nu}Table~$\nu(t)$ contains rows
    obtained by recursively following origins of~$\tau(n)$ for root~$n$.\longversion{ Formal details are in Definition~\ref{def:extensions}$^\star$.}}
be~$\TTT_{\text{purged}}=(T,\chi,\nu)$. We refer to $\nu$
as~\emph{purged table mapping}.
In Step~2b ($\dpa_\PROJ$), we traverse $\TTT_{\text{purged}}$ to count
interpretations with respect to the projection atoms and obtain
$\TTT_{\text{proj}}=(T,\chi,\pi)$. From the table~$\pi(n)$ at the root~$n$ of
$T$, we can then read the projected answer sets count of the input instance.
In the following, we only describe the table algorithm $\PROJ$,
since the traversal in $\dpa_\PROJ$ is the same as before.
For \PROJ, %
a row at a node~$t$ is a pair $\langle\rho, c \rangle\in\pi(t)$, where
$\rho \subseteq \nu(t)$ is an $\AlgA$-table and $c$ is a non-negative
integer.
In fact, integer~$c$ stores the number of intersecting solutions
($\ipmc$). However, we aim for the projected answer sets count
($\pmc$), whose computation requires
\shortversion{%
  to extend previous definitions~\cite{FichteEtAl18}.
}
\longversion{%
a few additional
definitions. Therefore, we can simply widen definitions from very
recent work~\cite{FichteEtAl18}.
}
\noindent In the remainder, %
we assume~$(\prog, P)$ to be an instance of~\PASP, $(T, \chi, \tau)$
to be an $\AlgA$-TTD of~$G_\prog$ and the mappings~$\tau$, $\nu$, and
$\pi$ as used above. Further, let~$t$ be a node of~$T$ with~$\children(t,T)=\langle t_1, \ldots, t_\ell\rangle$ and let $\rho \subseteq \nu(t)$.
  The relation~$\bucket \subseteq \rho \times \rho$ considers
  equivalent rows with respect to the projection of its
  interpretations by %
  $\bucket \eqdef \SB (\vec u,\vec v) \SM \vec u, \vec v \in \rho,
  \restrict{\mathcal{I}(\vec u)}{P} = \restrict{\mathcal{I}(\vec
    v)}{P}\SE.$
  \FIX{Let $\buckets_P(\rho)$ be equivalence classes induced
  by~$\bucket$ on~$\rho$,~i.e.,
  $\buckets_P(\rho) \eqdef\, (\rho / \bucket) = \SB [\vec u]_P \SM
  \vec u \in \rho\SE$, where
  $[\vec u]_P = \SB \vec v \SM \vec v \bucket \vec u,\vec v \in
  \rho\}$\longversion{~\cite{Wilder12a}}.
  Further, 
  $\subbuckets_P(\rho) \eqdef \cup_{S \mid \emptyset \neq S \subseteq \buckets_P(\rho)}\{S\}$.}%
\begin{example}\label{ex:equiv} %
  Consider program~$\prog$, set~$P$,
  TTD~$(T,\chi, \tau)$, and table~$\tab{10}$ from
  Example~\ref{ex:running0} and Figure~\ref{fig:running2}.
  Rows~$\vec {u_{10.2}}$ and
  $\vec {u_{10.8}}, \ldots, \vec {u_{10.13}}$ are removed (highlighted gray) during purging,
  since they are not involved in any answer set, resulting in~$\nu_{10}$.
  Then, $\vec{ u_{10.4}} =_P \vec{ u_{10.5}}$ and
  $\vec{ u_{10.6}} =_P \vec{ u_{10.7}}$.  The set~$\nu_{10}/\bucket$ of equivalence classes 
  of $\nu_{10}$
  is~$\buckets_P(\nu_{10})=\SB \{\vec{ u_{10.1}}\}, \{\vec{ u_{10.3}}\}, \{\vec{ u_{10.4}}, \vec{ u_{10.5}}\}, \{\vec{
    u_{10.6}}, \vec{ u_{10.7}}\}\SE$.
\end{example}
%

%\renewcommand{\tab}{Tab}
%\renewcommand{\PRIM}{\ensuremath{\mathtt{W}}}%
%\renewcommand{\PROJ}{\ensuremath{\mathtt{C}}}%

%\dpa_{\INCSAT}$,
%$\dpa_{\INCCOUNTERSAT}$ and $\cw_{\INCSAT, \INCCOUNTERSAT}$
  
\begin{algorithm}[t]
  \KwData{%
	Node~$t$, purged table mapping~$\nu_{t}$, projection atoms~$P$, sequence~$\langle \pi_1, \ldots \rangle$ of $\PROJ$-tables
     of~children of~$t$. \textbf{Out: }\PROJ-table~$\pi_{t}$ of pairs~$\langle \rho,
    c\rangle$, $\rho \subseteq \nu_{t}$, $c \in
    \Nat$.\hspace{-5em}
    %LTD~$\mathcal{T}=(T,\cdot,\langle \tau, \pi \rangle)$, Node~$t$, set~$P$ of projection atoms, $\Tab{}$.
    %\hspace{2em}
    %$\ASol{\AlgS}$: maps~$t$ to
    %$\AlgS$-table. %obtained by~$\AlgS$.\hspace{-5em}
    %
  }%
  %\KwResult{\PROJ-table~$\pi_{t}$ of pairs~$\langle \rho,
  %  c\rangle$, where $\rho \subseteq \nu_{t}$ and $c \in
  %  \Nat$.\hspace{-5em}
    %
  %} %
  % 
  % 
  $\makebox[0em]{}\pi_{t}\hspace{-0.2em}\leftarrow\hspace{-0.2em}\big\SB\langle \rho,
  \ipmc(t,\rho,\langle \pi_{1}, \ldots\rangle) \rangle \big{|}\, \rho \in
  \subbuckets_P(\nu_{t})\big\SE$%\hspace{-5em}$ \;
  \qquad\qquad\qquad\qquad\Return{$\pi_{t}$}\hspace{-5em}
  \vspace{-0.15em}
  \caption{Table algorithm $\PROJ(t, \cdot, \nu_t, (\cdot, P),
       \langle \pi_1, \ldots \rangle)$ for projected
    counting.}
  \label{fig:dpontd3}
\end{algorithm}

%\renewcommand{\tab}{\ensuremath{\tau}}

%\vspace{-1.5em}
%
%%% Local Variables:
%%% mode: latex
%%% TeX-master: "../sat2018"
%%% End:
%

%
%
%
\shortversion{\vspace{-0.1em}}
Later, we require to construct already computed projected counts for
tables of children of a given node~$t$. Therefore, we define the
\emph{stored $\ipmc$} of a table~$\rho \subseteq \nu(t)$ in
table~$\pi(t)$ by
$\sipmc(\pi(t), \rho) \eqdef \Sigma_{\langle \rho, c\rangle \in \pi(t)}
c.$ 
We extend this to a
sequence~$s=\langle \pi(t_1), \ldots, \pi(t_\ell)\rangle$ of tables of
length $\ell$ and a
set~$O = \{\langle \rho_1, \ldots, \rho_\ell\rangle, \langle \rho_1',
\ldots, \rho_\ell'\rangle, \ldots\}$ of sequences of~$\ell$ tables by
$\sipmc(s, O)=\Pi_{i \in \{1, \ldots,
  \ell\}}\sipmc(s_{(i)},O_{(i)}).$
So we select the $i$-th position of the sequence together
with sets of the $i$-th positions. %

Intuitively, when we are at a node~$t$ in algorithm~$\dpa_\PROJ$ we
have already computed~$\pi(t')$ of $\TTT_{\text{proj}}$ for every node~$t'$
below~$t$.
Then, we compute the projected answer sets count
of~$\rho \subseteq \nu(t)$. Therefore, we apply the
inclusion-exclusion principle to the stored projected answer sets count
of origins.
We define $\pcnt(t,\rho, \langle\pi(t_1),\ldots\rangle) \eqdef %
\Sigma_{\emptyset \subsetneq O \subseteq {\origs(t,\rho)}}$
$(-1)^{(\Card{O} - 1)} \cdot \sipmc(\langle \pi(t_1), \ldots\rangle, O)$. %
Intuitively, %
$\pcnt$ determines the $\AlgA$-origins of table~$\rho$, 
goes over all subsets of these origins and looks up 
stored counts ($\sipmc$) in \PROJ-tables of children~$t_i$ of~$t$.

\begin{example}\label{ex:pcnt} %
  Consider again program~$\prog$ and TD~$\TTT$ from
  Example~\ref{ex:running1} and Figure~\ref{fig:running2}. First, we
  compute the projected count $\pcnt(t_{4},\{\vec{ u_{4.1}}\}, \langle\pi(t_{3})\rangle)$
  for row~$\vec{ u_{4.1}}$ of table~$\nu(t_{4})$, where
  $\pi(t_3) \eqdef\allowdisplaybreaks[4] \big\SB
  \langle \{\vec{ u_{3.1}}\}, 1\rangle,$
  $\langle \{\vec{ u_{3.2}}\},1\rangle, \langle \{\vec{u_{3.1}}, \vec{
    u_{3.2}}\},1\rangle\big\SE$ with
  $\vec{u_{3.1}}=\langle \emptyset, \emptyset, \langle\rangle \rangle$
  and~$\vec{u_{3.2}}=\langle \{a\}, \emptyset, \langle a\rangle
  \rangle$.
  Note that~$t_5$ has only the child~$t_4$ and therefore the product in~$\sipmc$
  consists of only one factor. 
  Since
  $\origse{\PRIM}(t_4, \vec{ u_{4.1}}) = \{\langle\vec{
    u_{3.1}}\rangle\}$, only the value of~$\sipmc$ for
  set~$\{\langle\vec{ u_{3.1}}\rangle\}$ is non-zero. Hence, we obtain
  $\pcnt(t_4,\{\vec{ u_{4.1}}\}, \langle\pi(t_3)\rangle)$ $=1$. 
  Next, we compute
  $\pcnt(t_{4},\{\vec{ u_{4.1}}, \vec{u_{4.2}}\}, \langle\pi(t_3)\rangle)$. Observe that
  $\origse{\PRIM}(t_4, \{\vec{ u_{4.1}}, \vec{ u_{4.2}}\}) =
  \{\langle\vec{ u_{3.1}}\rangle, \langle\vec{ u_{3.2}}\rangle\}$. We
  sum up the values of~$\sipmc$ for sets~$\{\vec{ u_{4.1}}\}$
  and~$\{\vec{ u_{4.2}}\}$ and subtract the one for
  set~$\{\vec{ u_{4.1}}, \vec{ u_{4.2}}\}$.  Hence, we obtain
  $\pcnt(t_4,\{\vec{ u_{4.1}}, \vec{ u_{4.2}}\}, \langle\pi(t_3)\rangle)=1+1-1=1$.
\end{example}

\shortversion{\vspace{-.5em}}
\noindent Next, we provide a definition to compute $\ipmc$ %
at a node~$t$ for given table~$\rho\subseteq \nu(t)$ by
computing %
$\pmc$ for children~$t_i$ of~$t$ using stored $\ipmc$
values from tables~$\pi(t_i)$, and subtracting and adding~$\ipmc$ values
for subsets~$\emptyset\subsetneq\varphi\subsetneq\rho$ accordingly.
Formally, $\icnt(t,\rho,s)\eqdef 1$ if $\type(t) = \leaf$ and
otherwise
$\icnt(t,\rho,s)\eqdef \big|\pcnt(t,\rho, s)$ $+
\Sigma_{\emptyset\subsetneq\varphi\subsetneq\rho}(-1)^{\Card{\varphi}}
\cdot \ipmc(t,\varphi, s)\big|$ where
$s = \langle \pi(t_1), \ldots\rangle$.
In other words, if a node is of type~$\leaf$ the $\ipmc$ is one, since
bags of leaf nodes are empty.
Otherwise, we compute the ``non-overlapping'' count of given
table~$\rho\subseteq\nu(t)$ with respect to~$P$, by exploiting %
inclusion-exclusion principle on $\AlgA$-origins of~$\rho$ such that
we count every projected answer set only once. Then we have to %
subtract and add $\ipmc$ values (``all-overlapping'' counts) for
strict subsets~$\varphi$ of~$\rho$, accordingly.
Finally, Listing~\ref{fig:dpontd3} presents table algorithm~\PROJ,
which stores~$\pi(t)$ consisting of every sub-bucket of %
given table~$\nu(t)$ together with its $\ipmc$.

\setlength{\tabcolsep}{0.25pt}
\renewcommand{\arraystretch}{0.91}
\begin{figure*}[t]
\hspace{-0.4em}%
\centering%
\hspace{-0.75em}
\includegraphics[width=0.913\textwidth]{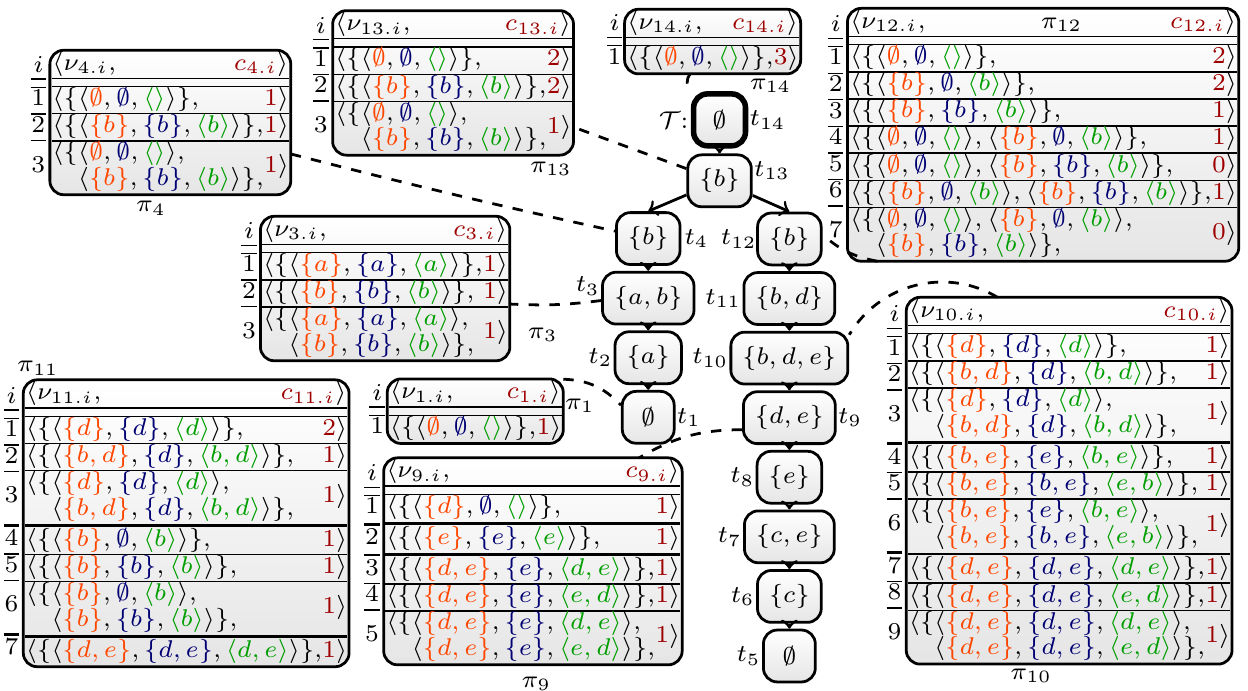}%
\caption{Selected tables of~$\pi$ obtained by~$\dpa_{\algo{PROJ}}$ on
  TD~${\cal T}$ and purged table mapping~$\nu$ (obtained by purging on~$\tau$, c.f, %
  Figure~\ref{fig:running2}).} %
\label{fig:running3}
\end{figure*}

\shortversion{\vspace{-.25em}}
\begin{example} %
  Recall instance~$(\prog,P)$, TD~$\TTT$, and tables~$\tab{1}$,
  $\ldots$, $\tab{14}$ from Examples~\ref{ex:running0}, \ref{ex:sat},
  and Figure~\ref{fig:running2}. Figure~\ref{fig:running3} depicts
  selected tables of~$\pi_1, \ldots, \pi_{14}$ obtained after
  running~$\dpa_\PROJ$ for counting projected answer sets.
  We assume that row $i$ in table $\pi_t$ corresponds to
  $\vec{v_{t.i}} = \langle \rho_{t.i}, c_{t.i} \rangle$
  where~$\rho_{t.i}\subseteq\nu(t)$.
  Recall that %
  there are rows among
  different~$\PRIM$-tables that are removed (highlighted gray in Figure~\ref{fig:running2}) during purging. %
  By purging we avoid to correct stored counters (backtracking)
  whenever a row has no ``succeeding'' row in the
  parent table.
  \noindent Next, we discuss selected rows obtained by
  $\dpa_\PROJ((\prog,P),(T,\chi,\nu))$. Tables $\pi_1$, $\ldots$,
  $\pi_{14}$ are shown in Figure~\ref{fig:running3}.
  Since~$\type(t_1)= \leaf$, we have
  $\pi_1=\langle\{\langle \emptyset , \emptyset, \langle \rangle
  \rangle \}, 1\rangle$.  Intuitively, at~$t_1$ the
  row~$\langle\emptyset, \emptyset, \langle\rangle\rangle$ belongs to~$1$ bucket.
  Node~$t_2$ introduces atom~$a$, which results in
  table~$\pi_2\eqdef\big\SB\langle \{\vec{u_{2.1}}\},
  1\rangle, \langle \{\vec{u_{2.2}}\},
  1\rangle, \langle \{\vec{u_{2.1}}, \vec{u_{2.2}}\},
  1\rangle\big\SE$, where~$\vec{u_{2.1}}=\langle \emptyset, \emptyset, \langle \rangle\rangle$ and~$\vec{u_{2.2}}=\langle \{a\}, \emptyset, \langle a\rangle \rangle$ 
  (derived similarly to table~$\pi_{4}$ as in Example~\ref{ex:pcnt}). 
  Node~$t_{10}$ introduces projected atom~$e$, and
  node~$t_{11}$ removes~$e$.  
  For row~$\vec{v_{11.1}}$ we compute the
  count~$\ipmc(t_{11},\{\vec{\tabval_{11.1}}\},
  \langle\pi_{10}\rangle)$ by means of~$\pcnt$. Therefore, take
  for~$\varphi$ the singleton set~$\{\vec{\tabval_{11.1}}\}$.
  We simply have
  $\ipmc(t_{11},\{\vec{\tabval_{11.1}}\}, \langle\pi_{10}\rangle) =
  \pmc(t_{11},\{\vec{\tabval_{11.1}}\}, \langle\pi_{10}\rangle)$.  To
  compute
  $\pmc(t_{11},\{\vec{\tabval_{11.1}}\},$ $\langle\pi_{10}\rangle)$, we
  take for~$O$ the sets~$\{\vec{u_{10.1}}\}$, $\{\vec{u_{10.6}}\}$,
  $\{\vec{u_{10.7}}\}$, and~$\{\vec{u_{10.6}}, \vec{u_{10.7}}\}$ into
  account, since all other non-empty subsets of origins
  of~$\vec{\tabval_{11.1}}$ in~$\nu_{10}$ do not occur in~$\pi_{10}$.
  Then, we take the sum over the values
  $\sipmc(\langle \pi_{10}\rangle,\{\vec{\tabval_{10.1}}\})=1$,
  $\sipmc(\langle \pi_{10}\rangle,\{\vec{\tabval_{10.6}}\})=1$,
  $\sipmc(\langle \pi_{10}\rangle,\{\vec{\tabval_{10.7}}\})=1$ and
  subtract
  $\sipmc(\langle \pi_{10}\rangle,$ $\{\vec{\tabval_{10.6}},
  \vec{\tabval_{10.7}}\})=1$. This results
  in~$\pmc(t_{11},\{\vec{\tabval_{11.1}}\}, \langle\pi_{10}\rangle) =
  c_{10.1} + c_{10.7}\; + $ $ c_{10.8} - c_{10.9} = 2$. We proceed similarly
  for row~$v_{11.2}$, resulting in~$c_{11.2}=1$.
  Then for row~$v_{11.3}$,
  $\ipmc(t_{11},\{\vec{\tabval_{11.1}},\vec{\tabval_{11.6}}\}, \langle\pi_{10}\rangle) = | %
  \pmc(t_{11},\{\vec{\tabval_{11.1}},\vec{\tabval_{11.6}}\}, \langle\pi_{10}\rangle) - \ipmc(t_{11},\{$ $\vec{\tabval_{11.1}}\}, \langle\pi_{10}\rangle)$ $- \ipmc(t_{11},\{\vec{\tabval_{11.6}}\}, \langle\pi_{10}\rangle) | = |2-c_{11.1}-$ $c_{11.2}|= |2 -$ $2 - 1| =\Card{-1} = 1 = c_{11.3}$.
  Hence, $c_{11.3} = 1$ represents the number of projected answer sets,
  both rows~$\vec{u_{11.1}}$ and~$\vec{u_{11.6}}$ have in common. We
  then use it for table~$t_{12}$.  Node~$t_{12}$ removes projection
  atom~$d$.  For node~$t_{13}$ where $\type(t_{13}) = \join$ one
  multiplies stored $\sipmc$ values for \AlgA-rows in the two children
  of~$t_{13}$ accordingly.  In the end, the projected answer sets count
  of~$\prog$ is~$\sipmc(\langle\pi_{14}\rangle,\vec{u_{14.1}})=3$.
\end{example}
\noindent%
\FIX{Next, we present upper bounds on the runtime of~$\dpa_{\PROJ}$.  
Therefore, let~$\gamma(n)\in\mathcal{O}(n\cdot log\, n \cdot log\, log\,n)$~\cite{\longversion{Knuth1998,}Harvey2016} be the runtime for multiplying two~$n$-bit integers.}

\shortversion{\begin{theorem}[$\star$]}
\longversion{\begin{theorem}}
  \label{thm:runtime}
  $\dpa_{\PROJ}$ runs in time
  $\mathcal{O}(2^{4m}\cdot g \cdot \gamma(\CCard{\prog}))$
  for instance~$(\prog,P)$ and TTD~$\TTT_{\text{purged}} = (T,\chi,\nu)$ of~$G_\prog$ of width~$k$ with $g$
  nodes,  where~$m\eqdef \max_{t\,\text{in}\,T}(|\nu(t)|)$.
\end{theorem}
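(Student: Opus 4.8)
The plan is to bound the total running time as the number $g$ of nodes times the worst-case work $\PROJ$ performs at a single node, and to show that this per-node work is $\mathcal{O}(2^{4m}\cdot\gamma(\CCard{\prog}))$. First I would observe that every counter manipulated by $\dpa_{\PROJ}$ is a (partial) projected answer sets count and hence bounded by the total number of answer sets, i.e.\ by $2^{\Card{\at(\prog)}}\le 2^{\CCard{\prog}}$; so each such counter fits in $\mathcal{O}(\CCard{\prog})$ bits and every addition, subtraction or multiplication of two counters costs $\gamma(\CCard{\prog})$. It then suffices to count the arithmetic operations carried out at a fixed node $t$ with children $t_1,\ldots,t_\ell$, since computing the buckets $\buckets_P(\nu_t)$ and the origins is a cheap bookkeeping step dominated by the counting work below.

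Next I would bound the sizes of the objects $\PROJ$ enumerates at $t$. As $\buckets_P(\nu_t)$ partitions $\nu_t$, there are at most $\Card{\nu_t}\le m$ buckets, hence $\Card{\subbuckets_P(\nu_t)}\le 2^{m}-1<2^{m}$ and the output table $\pi_t$ contains at most $2^{m}$ rows $\langle\rho,c\rangle$. For each such $\rho$ the value $c=\ipmc(t,\rho,\langle\pi_1,\ldots\rangle)$ is obtained from (i)~the inclusion--exclusion recursion over the strict subsets $\emptyset\subsetneq\varphi\subsetneq\rho$, of which there are at most $2^{\Card{\rho}}\le 2^{m}$, contributing $\le 2^m$ additions, and (ii)~one evaluation of $\pcnt(t,\varphi,\langle\pi_1,\ldots\rangle)$ per subset $\varphi$ encountered.

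The crux is evaluating $\pcnt(t,\varphi,\cdot)=\sum_{\emptyset\subsetneq O\subseteq\origs(t,\varphi)}(-1)^{\Card{O}-1}\sipmc(\langle\pi_1,\ldots\rangle,O)$. For a $\leaf$, $\intr$ or $\rem$ node there is a single child and $\origs(t,\varphi)\subseteq\nu(t_1)$ has at most $m$ elements, so the sum ranges over at most $2^{m}$ sets $O$ and costs $\mathcal{O}(2^{m})$; combined with the $2^{m}$ sub-buckets $\rho$ and the $2^{m}$ subsets $\varphi$ this stays within $2^{3m}$. The main obstacle is the $\join$ node: there $\origs(t,\varphi)\subseteq\nu(t_1)\times\nu(t_2)$ is a set of \emph{pairs} and can contain up to $m^{2}$ elements, so naively summing over all subsets $O$ would cost $2^{m^{2}}$ and ruin the bound. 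I would resolve this by grouping the subsets $O$ by their coordinate projections $(O_{(1)},O_{(2)})$: since $\sipmc(\pi_i,O_{(i)})$ is non-zero only when $O_{(i)}$ is one of the at most $2^{m}$ stored sub-buckets of $\pi_i$, the signed sum factors as $\sum_{\rho_1,\rho_2}\sipmc(\pi_1,\rho_1)\,\sipmc(\pi_2,\rho_2)\cdot c(\rho_1,\rho_2)$, where $c(\rho_1,\rho_2)$ is a sign-weighted count of the subsets of $\origs(t,\varphi)\cap(\rho_1\times\rho_2)$ whose projections are exactly $\rho_1$ and $\rho_2$. There are at most $2^{m}\cdot 2^{m}=2^{2m}$ such pairs, and the coefficients $c(\rho_1,\rho_2)$ can be pre-tabulated from the bipartite incidence of $\origs(t,\varphi)$, so each $\pcnt$ evaluation costs $\mathcal{O}(2^{2m})$.

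Putting the pieces together, at a $\join$ node the work is dominated by $2^{m}$ choices of $\rho$, times $2^{m}$ subsets $\varphi$ in the recursion, times $\mathcal{O}(2^{2m})$ per $\pcnt$ evaluation, i.e.\ $\mathcal{O}(2^{4m})$ arithmetic operations, while all other node types are cheaper. Multiplying by the per-operation cost $\gamma(\CCard{\prog})$ and by the number $g$ of nodes yields the claimed $\mathcal{O}(2^{4m}\cdot g\cdot\gamma(\CCard{\prog}))$. The step I expect to be most delicate is exactly the $\join$ reorganization: one must verify that regrouping the signed sum over subsets of origin pairs by their coordinate projections reproduces $\pcnt$ faithfully and that the coefficients $c(\rho_1,\rho_2)$ are computable within budget, so that the potential $2^{m^{2}}$ blow-up is genuinely avoided.
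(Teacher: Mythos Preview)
Your overall strategy coincides with the paper's: bound the work at a single node and multiply by~$g$. The paper likewise counts at most $2^{m}$ sub-bucket rows~$\rho$, for each of these charges at most $2^{m}$ subsets~$\varphi$ in the $\ipmc$ recursion, and then attributes a factor $2^{m}\cdot 2^{m}$ to evaluating $\pcnt$, obtaining $\mathcal{O}(2^{m}\cdot 2^{m}\cdot 2^{m}\cdot 2^{m}\cdot\gamma(\CCard{\prog}))$ per node and the stated bound overall.

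The one substantive difference is your treatment of the $\join$ case. The paper's proof simply asserts that for $\pcnt$ one ``considers each subset of the origins of~$\varphi$ for each child table, which are at most $2^{m}\cdot 2^{m}$'' and leaves it at that; it does not discuss the issue you raise, namely that $\origs(t,\varphi)\subseteq\nu(t_1)\times\nu(t_2)$ is a set of \emph{pairs} of size up to $m^{2}$, so that the literal inclusion--exclusion sum in the definition of $\pcnt$ has up to $2^{m^{2}}$ terms. Your regrouping by the coordinate projections $(O_{(1)},O_{(2)})$, exploiting that $\sipmc$ factors as a product over children, is precisely the mechanism the paper relies on implicitly (the phrase ``for each child table'' is the only hint), and it is the right way to turn the terse $2^{m}\cdot 2^{m}$ claim into an actual argument. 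In that sense you are more careful than the paper here: the paper's proof takes this step for granted, whereas you isolate it and correctly flag that verifying the sign-aggregated coefficients $c(\rho_1,\rho_2)$ stay within budget is the delicate point.
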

\longversion{\begin{proof}
  Let~$d = k+1$ be maximum bag size of the TD~$\TTT$. For each
  node~$t$ of $T$, we consider the table $\nu(t)$ of $\TTT_{\text{purged}}$.
  Let TDD~$(T,\chi,\pi)$ be the output of~$\dpa_\PROJ$. In worst case,
  we store in~$\pi(t)$ each subset~$\rho \subseteq \nu(t)$ together
  with exactly one counter. Hence, we have at most $2^{m}$ many rows
  in $\rho$.
  In order to compute $\ipmc$ for~$\rho$, we consider every
  subset~$\varphi \subseteq \rho$ and compute~$\pcnt$. Since
  $\Card{\rho}\leq m$, we have at most~$2^{m}$ many subsets $\varphi$
  of $\rho$. Finally, for computing $\pcnt$, we consider in the worst
  case each subset of the origins of~$\varphi$ for each child table,
  which are at most~$2^{m}\cdot 2^{m}$ because of nodes~$t$
  with~$\type(t)=\join$.
  In total, we obtain a runtime bound
  of~$\bigO{2^{m} \cdot 2^{m} \cdot 2^{m}\cdot 2^{m} \cdot
    \gamma(\CCard{\prog})} \subseteq \bigO{2^{4m} \cdot
    \gamma(\CCard{\prog}})$ due to multiplication of two $n$-bit
  integers for nodes~$t$ with~$\type(t)=\join$ at costs~$\gamma(n)$.
  Then, we apply this to every node of~$T$ %
  resulting in
  runtime~$\bigO{2^{4m} \cdot g \cdot \gamma(\CCard{\prog})}$.
\end{proof}}

\shortversion{\begin{corollary}[$\star$]}
\longversion{\begin{corollary}}\label{cor:runtime}
  Given an instance $(\prog,P)$ of \PASP where $\prog$ is
  head-cycle-free and~$k=\tw{G_\prog}$. Then, $\mdpa{\PRIM}$ runs in
  time~$\mathcal{O}(2^{3^{k+1.27}\cdot k!}\cdot \CCard{\prog}\cdot
  \gamma(\CCard{\prog}))$.
\end{corollary}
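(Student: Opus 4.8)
The plan is to compose the two runtime bounds already established: Theorem~\ref{thm:primruntime} for the first traversal $\dpa_\PRIM$ and Theorem~\ref{thm:runtime} for the projection traversal $\dpa_\PROJ$, and then to control the single quantity $m=\max_{t}|\nu(t)|$ that governs the cost of $\dpa_\PROJ$. Recall that $\mdpa{\PRIM}$ first runs $\dpa_\PRIM$, then purges the non-solution rows (following origins of $\tau(n)$) to obtain the purged mapping $\nu$, and finally runs $\dpa_\PROJ$ on $(T,\chi,\nu)$. First I would note that $\nu(t)\subseteq\tau(t)$ at every node $t$, so $m$ is bounded by the maximal number of $\PRIM$-rows per node. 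From the proof of Theorem~\ref{thm:primruntime}, with maximal bag size $d=k+1$, this count is at most $3^{d}\cdot d!$; and since a row $\langle I,\mathcal{P},\sigma\rangle$ only orders the atoms actually contained in $I$, a sharper count gives $\bigO{2^{k}\cdot k!}\subseteq\bigO{3^{k}\cdot k!}$, so that $m\le 3^{k}\cdot k!$ (the $\mathrm{poly}(k)$ slack is absorbed into the exponential-factorial term).

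Next I would substitute this into Theorem~\ref{thm:runtime}, which states that $\dpa_\PROJ$ runs in time $\bigO{2^{4m}\cdot g\cdot\gamma(\CCard{\prog})}$ on a TTD with $g$ nodes. The heart of the estimate is the exponent $4m$: plugging in $m\le 3^{k}\cdot k!$ yields $4m\le 4\cdot 3^{k}\cdot k!$, and since $4\le 3^{1.27}$ (indeed $3^{1.27}\approx 4.04$) this gives $4m\le 3^{k+1.27}\cdot k!$ and hence $2^{4m}\le 2^{3^{k+1.27}\cdot k!}$. This is exactly where the constant $1.27$ in the statement comes from: it is $\log_3 4$ rounded up, used to fold the factor $4$ of the $2^{4m}$ bound into the base-$3$ exponent.

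It then remains to collect the remaining factors. As the decomposition is nice, its number of nodes is linear in $|\at(\prog)|$, so $g\in\bigO{\CCard{\prog}}$. The first traversal costs $\bigO{3^{k}\cdot k!\cdot g}$ by Theorem~\ref{thm:primruntime}, and the intermediate purge is one further pass that follows origins and so costs time polynomial in the total table size, again $\bigO{3^{k}\cdot k!\cdot g}$; both are dominated by the doubly-exponential-in-$k$ cost $2^{3^{k+1.27}\cdot k!}$ of $\dpa_\PROJ$. Multiplying through, and using that $g\cdot\gamma(\CCard{\prog})\in\bigO{\CCard{\prog}\cdot\gamma(\CCard{\prog})}$, yields the claimed bound $\bigO{2^{3^{k+1.27}\cdot k!}\cdot\CCard{\prog}\cdot\gamma(\CCard{\prog})}$.

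The main obstacle I expect is the interplay between the bound on $m$ and the exponent bookkeeping: one must argue that the per-node table size is genuinely of order $3^{k}\cdot k!$ (rather than the looser $3^{k+1}\cdot(k+1)!$), since otherwise the surplus factor of roughly $3(k+1)$ would no longer fit inside the additive $1.27$ in the tower; and one must check that the $\mathrm{poly}(k)$ and $\mathrm{poly}(\CCard{\prog})$ overhead (node count $g$, the purge pass, and the multiplication cost $\gamma$) really is subsumed by the stated bound instead of inflating the leading double exponential.
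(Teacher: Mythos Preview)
Your proposal is correct and follows essentially the same route as the paper: bound the per-node table size by $3^{k}\cdot k!$ via Theorem~\ref{thm:primruntime}, plug this in as $m$ in Theorem~\ref{thm:runtime}, absorb the factor $4$ into the exponent as $3^{1.27}$ (since $\log_3 4\approx 1.262$), and note that computing the TD, the first pass, and the purge are all dominated. The only detail you leave implicit that the paper spells out is the $2^{\mathcal{O}(k^3)}\cdot\CCard{G_\prog}$ cost of constructing a width-$k$ TD via Bodlaender's algorithm, which is likewise dominated.
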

\longversion{\begin{proof}
  We can compute in time~$2^{\mathcal{O}(k^3)}\cdot\CCard{G_\prog}$ a
  TD~${\cal T'}$ with~$g\leq \CCard{\prog}$ nodes of width at
  most~$k$~\cite{Bodlaender96}. Then, we can simply
  run~$\dpa_{\PRIM}$, which runs in
  time~$\mathcal{O}({3^{k}\cdot k!}\cdot \CCard{\prog})$ by
  Theorem~\ref{thm:primruntime} and since the number of nodes of a
  tree decomposition is linear in the size of the input
  instance~\cite{Bodlaender96}. %
  Then, we again traverse the TD for purging and output
  $\TTT_{\text{purged}}$, which runs in time single exponential in the
  treewidth and linear in the instance size. Finally, we run
  $\dpa_{\PROJ}$ and obtain by Theorem~\ref{thm:runtime} that the
  runtime bound
  $\mathcal{O}(2^{4\cdot3^{k}\cdot k!}\cdot \CCard{\prog}\cdot
  \gamma(\CCard{\prog})) \subseteq $
  $\mathcal{O}(2^{3^{k + 1.27}\cdot k!}\cdot \CCard{\prog}\cdot
  \gamma(\CCard{\prog}))$.  %
  Hence, the corollary holds.
\end{proof}}

\longversion{\noindent\FIX{Then, we present lower bounds, and show that~$\mdpa{\PRIM}$ is indeed correct.}} %

\begin{theorem}[Lower Bound\shortversion{, $\star$}]
  Under ETH, $\PASP$ cannot be solved in
  time $2^{2^{o(k)}}\cdot \CCard{\prog}^{o(k)}$ \FIX{for $(\prog,P)$ s.t.~$\prog$ is head-cycle-free, normal or tight, $k=\tw{G_\prog}$.}
\end{theorem}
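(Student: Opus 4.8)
The plan is to reduce projected model counting for Boolean formulas (projected \#SAT), whose double-exponential lower bound under ETH was established by Fichte~\etal~\cite{FichteEtAl18} (the analogue of the present statement, c.f.\ Table~\ref{tbl:summary}), to \PASP, while (i)~preserving the projected count exactly, (ii)~blowing up the treewidth only by a constant factor, and (iii)~producing a program that is simultaneously tight and normal, hence head-cycle-free. Since such a program lies in the intersection of all three fragments, a lower bound for these hard instances immediately yields the claimed bound for each of the three classes separately.

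First I would fix an instance $(F,R)$ of projected \#SAT, where $F$ is a CNF formula over variables $V$ with clauses $C$, and $R\subseteq V$ is the set of projection variables. For every variable $v\in V$ introduce two fresh atoms $x_v,n_v$ together with the choice gadget $x_v \hsep \neg n_v$ and $n_v \hsep \neg x_v$; one checks directly from the ASP semantics that its answer sets are exactly the interpretations containing precisely one of $x_v,n_v$, so the gadget realizes a free Boolean choice. For every clause $c\in C$ let $\beta_c$ be the set of atoms witnessing that $c$ is falsified (namely $n_v$ for each positive literal $v\in c$ and $x_v$ for each negative literal $\neg v\in c$), and add the integrity-constraint rule $w_c \hsep \beta_c, \neg w_c$ with a fresh atom $w_c$. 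Again from the semantics, whenever $\beta_c$ holds the program has no answer set, and otherwise $w_c$ is forced false; hence the answer sets of the resulting program $\prog$ are in bijection with the satisfying assignments of $F$, the atom $x_v$ being true iff $v$ is set to true. Setting the projection atoms to $P\eqdef\{x_v\mid v\in R\}$ makes the projected answer sets count of $(\prog,P)$ equal to the projected model count of $(F,R)$, since two satisfying assignments agree on $R$ exactly when their answer sets agree on $P$. Note that all heads are singletons (so $\prog$ is normal) and that in the positive dependency digraph $D_\prog$ every $w_c$ is a sink and no other edges exist, so $D_\prog$ is acyclic and $\prog$ is tight as well.

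The technical heart of the argument is the treewidth bound $\tw{G_\prog}\le 2\,\tw{G_F}+2$. Starting from a tree decomposition of the primal graph $G_F$ of width $k=\tw{G_F}$, I would replace every occurrence of a variable $v$ in a bag by the pair $\{x_v,n_v\}$; this at most doubles each bag and covers every choice-gadget edge. Because the variables of each clause $c$ form a clique in $G_F$, they occur together in some bag, and I would attach beside that node a fresh node whose bag additionally contains $w_c$, thereby covering all edges of the clause rule while charging only a single extra atom per node. The resulting decomposition is a valid tree decomposition of $G_\prog$ of width at most $2(k+1)+1-1=2k+2$.

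Finally I would transfer the lower bound. The construction is polynomial and yields $\CCard{\prog}=\mathrm{poly}(\CCard{F})$ and $k'\eqdef\tw{G_\prog}\le 2k+2$. If \PASP could be solved in time $2^{2^{o(k')}}\cdot\CCard{\prog}^{o(k')}$ on these instances, then composing with the reduction would solve projected \#SAT in time $2^{2^{o(k)}}\cdot\CCard{F}^{o(k)}$: since $k'=O(k)$, any exponent function that is $o(k')$ stays $o(k)$ when evaluated along $k'\le 2k+2$, and $\mathrm{poly}(\CCard{F})^{o(k)}=\CCard{F}^{o(k)}$. This contradicts the projected \#SAT lower bound of~\cite{FichteEtAl18}, proving the theorem. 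The main obstacle is the treewidth bookkeeping of the third step, namely ensuring that doubling the variable atoms and inserting the per-clause witness atoms increases the width only additively and that every edge of $G_\prog$ is genuinely covered; the count preservation and the ETH transfer are then routine.
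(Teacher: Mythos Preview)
Your proof is correct and follows essentially the same template as the paper's: a guess-and-check encoding of a propositional instance into a program that is simultaneously tight and normal (hence head-cycle-free), together with a tree-decomposition argument showing the treewidth grows only by a constant factor. The paper's own proof differs in two inessential details. First, it reduces from the \emph{decision} problem $\forall\exists$-\SAT (citing the $2^{2^{o(k)}}$ lower bound of~\cite{LampisMitsou17,FichteEtAl18}) to the decision variant \PASP-exactly-$2^{|V_1|}$, whereas you give a \emph{parsimonious} reduction from projected~\cSAT directly to \PASP; your route is arguably cleaner since both source and target are counting problems. Second, the paper encodes clauses as plain integrity constraints $\hsep \neg x_1,\ldots,x_j$ (so no auxiliary $w_c$ atoms are needed and the treewidth merely doubles), while you simulate constraints via $w_c \hsep \beta_c,\neg w_c$ and pay one extra atom per bag. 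Both variants produce tight normal programs and both treewidth analyses go through; the paper's is marginally tighter, yours is self-contained in that it avoids empty-head rules.
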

\longversion{\begin{proof}
  Assume for proof by contradiction that there is such an algorithm.
  We show that this contradicts a very recent
  result~\cite{LampisMitsou17,FichteEtAl18}, which states that one
  cannot decide the validity of a QBF
  $\forall{V_1}.\exists V_2.E$ in
  time~$2^{2^{o(k)}}\cdot \CCard{E}^{o(k)}$,
  where 
  $E$ is in CNF.
  Let $(\forall{V_1}.\exists V_2.E,k)$ be an instance
  of~$\forall\exists$-\SAT parameterized by the treewidth~$k$. Then,
  we reduce to an instance~$((\prog,P),2k)$ of the decision
  version~$\PASP$-exactly-$2^{\Card{V_1}}$ when parameterized by
  treewidth of~$G_\prog$ such that $P=V_1$, the number of solutions is
  exactly~$2^{\Card{V_1}}$, and~$\prog$ is as follows.  For
  each~$v\in V_1 \cup V_2$, program~$\prog$ contains %
  rules~$v \leftarrow \neg nv$ and $nv \leftarrow \neg v$.
  Each clause~$x_1 \vee \ldots \vee x_i \vee \neg x_{i+1} \vee \ldots \vee \neg x_j$
  results in one additional
  rule~$\hsep \neg x_1,\ldots, \neg x_i, x_{i+1}, \ldots, x_{j}$.
  It is easy to see that the reduction is correct and therefore instance~$((\prog,P), 2k)$ is
  a yes instance of %
  $\PASP$-exactly-$2^{\Card{V_1}}$ 
  if and only if~$(\forall{V_1}.\exists V_2.E,k)$
  is a yes instance of %
  problem~$\forall\exists$-\SAT. %
  In fact, $\prog$ is \FIX{head-cycle-free, normal and tight}, and
the reduction \FIX{runs in polynomial time of $\prog$ and at most doubles the treewidth due to duplication of atoms},
  which establishes the result. %
\end{proof}}

\longerversion{
\begin{corollary}
  Unless ETH fails, $\PASP$ cannot be solved in
  time~$2^{2^{o(k)}}\cdot \CCard{\prog}^{o(k)}$ for a given instance
  $(\prog,P)$, where~$k$ is the treewidth of the incidence graph of~$\prog$.
\end{corollary}
\begin{proof}
  Let $w_i$ and $w_p$ be the treewidth of the incidence graph and
  primal graph of~$\prog$, respectively. Then,
  $w_i \leq w_p +1$~\cite{SamerSzeider10b}, which establishes the
  claim.
\end{proof}

\begin{corollary}
  Given an instance $(\prog,P)$ of \PASP where $\prog$ has treewidth~$k$. Then,
  Algorithm~$\mdpa{\AlgA}$ runs in
  time~$2^{2^{\Theta(k)}} \cdot \CCard{\prog}^c$ for some positive
  integer~$c$.
\end{corollary}}

\longversion{Finally, we state that indeed~$\mdpa{\PRIM}$ gives the projected answer sets count of a given head-cycle-free program~$\prog$.

\begin{proposition}[$\star$]\label{prop:phcworks}
  Algorithm $\mdpa{\PRIM}$ is correct and outputs for any instance
  of \PASP restricted to head-cycle-free programs its projected answer sets count.
\end{proposition}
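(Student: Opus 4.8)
The plan is to reduce correctness of $\mdpa{\PRIM}$ to two ingredients: soundness and completeness of the first traversal, already supplied by Theorem~\ref{thm:primcorrectness}, and correctness of the generic projected-counting traversal $\dpa_{\PROJ}$, which is the scheme of~\cite{FichteEtAl18} lifted from Boolean models to answer sets. First I would fix, for every node~$t$ and every row~$\vec u\in\nu(t)$, a semantic object $\Ext_P(t,\vec u)$: the set of projections to~$P$ of those partial answer sets of~$\progt{t}$ obtained by recursively following $\origse{\PRIM}$ from~$\vec u$ down to the leaves of the subtree rooted at~$t$. By Theorem~\ref{thm:primcorrectness} together with the reconstruction of answer sets via origins, at the root~$n$ (whose bag is empty, so $\nu(n)=\{\vec u_\emptyset\}$ with $\vec u_\emptyset=\langle\emptyset,\emptyset,\langle\rangle\rangle$) the set $\Ext_P(n,\vec u_\emptyset)$ is exactly the set of projected answer sets of~$\prog$, hence the target count is $\Card{\Ext_P(n,\vec u_\emptyset)}$. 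I would also record the purging invariant that $\nu(t)$ keeps precisely the rows of~$\tau(t)$ lying on some origin-path from the accepting root row, so that every removed row has $\Ext_P(t,\vec u)=\emptyset$ and may be safely ignored by $\dpa_{\PROJ}$.

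The heart of the argument is the invariant, proved by induction on the post-order of~$T$, that for every node~$t$ and every $\rho\subseteq\nu(t)$,
\[
\ipmc(t,\rho,\langle\pi(t_1),\ldots\rangle)=\Big\lvert\bigcap_{\vec u\in\rho}\Ext_P(t,\vec u)\Big\rvert,
\qquad
\pcnt(t,\rho,\langle\pi(t_1),\ldots\rangle)=\Big\lvert\bigcup_{\vec u\in\rho}\Ext_P(t,\vec u)\Big\rvert .
\]
The base case $\type(t)=\leaf$ holds since the bag is empty, $\ipmc=1$, and $\Ext_P$ contains just the empty projection. For the inductive step I would first establish a decomposition lemma: $\Ext_P(t,\vec u)$ is the union, over all origin sequences $\langle\vec s_1,\dots,\vec s_\ell\rangle\in\origse{\PRIM}(t,\vec u)$, of the product $\Ext_P(t_1,\vec s_1)\bowtie\cdots\bowtie\Ext_P(t_\ell,\vec s_\ell)$, where $\bowtie$ combines compatible projected parts. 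Taking the union over $\vec u\in\rho$ indexes this by $\origse{\PRIM}(t,\rho)$, and the generalized inclusion--exclusion principle rewrites the cardinality of that union as the signed sum over nonempty $O\subseteq\origse{\PRIM}(t,\rho)$ of $\lvert\bigcap_{\vec s\in O}(\Ext_P(t_1,\vec s_1)\bowtie\cdots)\rvert$. Because atoms strictly below distinct children are disjoint and the shared bag is fixed along each combination, this intersection-of-products factorizes into $\prod_i\lvert\bigcap_{\vec w\in O_{(i)}}\Ext_P(t_i,\vec w)\rvert$, which by the induction hypothesis and the definition of $\sipmc$ equals $\sipmc(\langle\pi(t_1),\ldots\rangle,O)$; this reproduces the defining formula of $\pcnt$. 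The $\ipmc$-equation then follows within a secondary induction on $\Card{\rho}$: substituting $\pcnt(t,\rho,\cdot)=\lvert\bigcup_{\vec u\in\rho}\Ext_P(t,\vec u)\rvert$ and the inner hypothesis $\ipmc(t,\varphi,\cdot)=\lvert\bigcap_{\vec u\in\varphi}\Ext_P(t,\vec u)\rvert$ for $\varphi\subsetneq\rho$ into the recurrence, the dual inclusion--exclusion identity shows that the bracketed expression equals $(-1)^{\Card{\rho}-1}\lvert\bigcap_{\vec u\in\rho}\Ext_P(t,\vec u)\rvert$, so the absolute value in the definition of $\icnt$ returns the nonnegative intersection count.

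With the invariant in hand the proposition is immediate: $\dpa_{\PROJ}$ reads off $\sipmc(\langle\pi(n)\rangle,\{\vec u_\emptyset\})=\ipmc(n,\{\vec u_\emptyset\},\cdot)=\Card{\Ext_P(n,\vec u_\emptyset)}$, i.e.\ the projected answer sets count. I expect the main obstacle to be the decomposition lemma at $\join$ nodes: unlike in the Boolean setting of~\cite{FichteEtAl18}, a $\PRIM$-row carries, besides its interpretation, a proven-set $\mathcal{P}$ and an ordering~$\sigma$, so I must verify that $\bowtie$ really is a rectangular product on the projected parts---that two branch extensions combine into a valid (head-cycle-free) answer set exactly when they agree on the fixed bag projection, with proofs and orderings merging consistently as enforced by the $\join$ rule of Listing~\ref{fig:prim}. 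Establishing this product structure, and hence the factorization of the intersection, is what makes the inclusion--exclusion collapse correctly, and is the only place where the head-cycle-free ASP semantics, rather than the generic counting scheme, enters the argument.
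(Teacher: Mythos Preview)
Your proposal is correct and follows essentially the same skeleton as the paper: define, for each node~$t$ and each row~$\vec u\in\nu(t)$, the set of projected interpretations of the satisfiable extensions below~$t$ (the paper writes this as $\mathcal{I}_P(\PExt_{\leq t}(\{\vec u\}))$, your $\Ext_P(t,\vec u)$), and then prove by induction the invariant that $\ipmc$ and $\pcnt$ compute the cardinality of the intersection, respectively union, of these sets over~$\rho$; this is exactly the content of the paper's Lemma~\ref{lem:main_incl_excl} together with Definition~\ref{def:pmc} and Observation~\ref{obs:main_incl_excl}.

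The one genuine structural difference is that the paper does \emph{not} carry out this argument for~$\PRIM$ specifically. Instead it isolates an abstract notion of \emph{admissible} table algorithm (Definition~\ref{def:asplocalsol}): only the interpretation part~$\mathcal{I}(\vec u)$ of a row matters for the equivalence relation~$\bucket$, and conditions~(1)--(3) there guarantee that interpretations restrict and glue correctly along the tree. Correctness of~$\dpa_{\PROJ}$ is then established once, generically, for any admissible~$\AlgA$ (Theorem~\ref{thm:correctness} and Corollary~\ref{cor:correctness}), and Proposition~\ref{prop:phcworks} becomes a one-line instantiation. In particular, the join-node obstacle you anticipate---that $\PRIM$-rows carry a proven-set~$\mathcal{P}$ and an ordering~$\sigma$ beyond the interpretation---simply evaporates in the paper's framing: those extra components never enter the $\PROJ$ layer, because $\bucket$, $\buckets_P$, and hence the whole inclusion--exclusion machinery, depend only on~$\mathcal{I}(\vec u)$. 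Your direct route would still go through, but you would end up re-proving, in $\PRIM$-specific language, exactly the admissibility conditions; the paper's abstraction buys both a cleaner argument and immediate reuse for the disjunctive algorithm~$\algo{PRIM}$ (Proposition~\ref{prop:disjworks}).
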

\begin{proof}[Proof (Idea)]
Soundness follows by establishing an invariant for any row of~$\pi(t)$ guaranteeing that the values of~$\ipmc$ indeed capture ``all-overlapping'' counts of~$\progt{t}$. One can show that the invariant is a consequence of the properties of~\PRIM and the additional ``purging'' step, which neither destroys soundness nor completeness of~$\dpa_\PRIM$. Further, completeness guarantees that all required rows are computed.
\end{proof}}
\noindent\textbf{Solving \PDASP for Disjunctive Programs.}
\FIX{We extend our algorithm to projected answer set
  counting for disjunctive programs.} Therefore, we
simply use a table algorithm \AlgA=\algo{PRIM} for disjunctive ASP
\shortversion{as in previous literature~\cite{FichteEtAl17a}}%
\longversion{%
  that was introduced in the
  literature~\cite{FichteEtAl17a,JaklPichlerWoltran09}}.
Recall algorithm~\mdpa{\AlgA}\longversion{ illustrated in
Figure~\ref{fig:multiarch}}.
First, we %
heuristically compute a TD of the primal graph. Then, we run $\dpa_\algo{PRIM}$ as first
traversal resulting in TTD~$(T,\chi,\tau)$. Next, we purge rows
of~$\tau$, which cannot be extended to an answer set resulting in
TTD~$(T,\chi,\nu)$. Finally, we use~$(T,\chi,\nu)$ to compute the projected answer sets count
by~$\dpa_{\PROJ}$\shortversion{.}\longversion{ and obtain TTD~$(T,\chi,\pi)$.}
\longversion{
\begin{proposition}[$\star$]\label{prop:disjworks}
  $\mdpa{\algo{PRIM}}$ is correct, i.e., it\shortversion{ solves~\PDASP}\longversion{ outputs the projected answer sets count for any instance
  of \PDASP}.
\end{proposition}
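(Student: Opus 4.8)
The plan is to exploit that the framework $\mdpa{\AlgA}$ is \emph{generic} in its base table algorithm~$\AlgA$: the projection machinery $\dpa_{\PROJ}$ never inspects the internal meaning of a row beyond its interpretation part~$\mathcal{I}(\vec u)$ and the origin relation~$\origs$. Hence the argument splits into two parts mirroring the head-cycle-free case (Proposition~\ref{prop:phcworks}): (i)~correctness of the first traversal~$\dpa_{\algo{PRIM}}$ for disjunctive programs, and (ii)~the generic correctness of projected counting by~$\dpa_{\PROJ}$ on top of a correct base algorithm. For~(i) I would invoke the established disjunctive table algorithm of Fichte~\etal and Jakl~\etal: running~$\dpa_{\algo{PRIM}}$ yields a TTD~$(T,\chi,\tau)$ whose rows characterise answer sets, so that transitively following~$\origs$ from the rows of~$\tau(n)$ at the root~$n$ reconstructs \emph{exactly} the answer sets of~$\prog=\progt{n}$.

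Next I would argue that purging preserves this correspondence. The purged mapping~$\nu$ retains at each node~$t$ precisely those rows lying on some origin chain descending from a root row, i.e.\ those contributing to at least one answer-set reconstruction. Consequently~$\dpa_{\algo{PRIM}}$ stays both sound and complete after purging, and for every node~$t$ the rows of~$\nu(t)$ remain in the expected correspondence with the answer sets of~$\progt{t}$ restricted to the bag~$\chi(t)$.

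The core of the proof is an inductive invariant for~$\dpa_{\PROJ}$, established along the TD in post-order. I would show that for every~$\langle \rho, c\rangle \in \pi(t)$ the stored count~$c=\ipmc(t,\rho,\langle\pi(t_1),\ldots\rangle)$ equals the number of projected answer sets of~$\progt{t}$ that are simultaneously witnessed by \emph{every} row of~$\rho$ (the ``all-overlapping'' count), where a projected answer set is a distinct set~$\restrict{\mathcal{I}}{P}$ for an answer-set fragment~$\mathcal{I}$. The induction step uses the generalized inclusion-exclusion principle exactly as in~\cite{FichteEtAl18}: $\pcnt$ collects, over all nonempty subsets of~$\origs(t,\rho)$, the alternating sum of stored child counts~$\sipmc$, thereby counting each shared projected solution once, while~$\icnt$ alternates over the strict subsets~$\varphi \subsetneq \rho$ to peel off the over-counted overlaps. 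Since~$\restrict{\mathcal{I}(\vec u)}{P}$ is the sole quantity deciding bucket membership via~$\bucket$, summing~$\sipmc$ over~$\pi(n)$ at the root yields the projected answer sets count.

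The main obstacle I anticipate is verifying that the richer row structure of the disjunctive algorithm is compatible with this bucketing. Unlike the head-cycle-free rows~$\langle I,\mathcal{P},\sigma\rangle$, a disjunctive row additionally carries counter-witness information used to certify minimality of the witnessed model. I would therefore confirm that the projection~$\restrict{\mathcal{I}(\vec u)}{P}$ still isolates~$A\cap\chi(t)\cap P$ for the reconstructed answer set~$A$, so that two full answer sets agree on~$P$ \emph{iff} their reconstruction chains share buckets at every node; in particular the counter-witness bookkeeping must neither merge two distinct projected solutions into one bucket nor split a single one across buckets. Once this interface compatibility with~$\dpa_{\PROJ}$ is checked, the inclusion-exclusion invariant carries over verbatim, and completeness guarantees that all required rows are computed, establishing the proposition.
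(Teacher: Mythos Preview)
Your proposal is correct and follows essentially the same route as the paper: the paper factors your ``interface compatibility'' check into a formal notion of an \emph{admissible} table algorithm (conditions on~$\mathcal{I}(\vec u)\subseteq\chi(t)$, on how interpretations of origins agree on shared bag atoms, and on the satisfiable extensions characterising answer sets below~$t$), observes that both the head-cycle-free and the disjunctive $\algo{PRIM}$ are admissible, proves the $\dpa_{\PROJ}$ invariant once for arbitrary admissible~$\AlgA$ (your part~(ii)), and then obtains Proposition~\ref{prop:disjworks} as an immediate instance of the resulting generic corollary. Your anticipated obstacle about counter-witness bookkeeping is exactly what admissibility abstracts away: since $\bucket$ and $\origs$ depend only on~$\mathcal{I}(\vec u)$, the extra minimality-checking payload of disjunctive rows is invisible to~$\dpa_{\PROJ}$, so no separate verification beyond admissibility is needed.
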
}
\begin{table}[t]
\centering
\begin{tabular}[t]{l@{\hspace{0.5em}}c@{\hspace{0.5em}}||@{\hspace{0.5em}}r@{\hspace{0.5em}}|@{\hspace{0.5em}}r@{\hspace{0.5em}}}
	Problem & Restriction & Upper Bound & Lower Bound (under ETH)\\\hline
	\SAT, \cSAT & - & $2^{\bigO{k}}\cdot poly(\CCard{\prog})$~\cite{SamerSzeider10b} & $2^{\Omega(k)}\cdot poly(\CCard{\prog})$~\cite{ImpagliazzoPaturiZane01} \\
	\ASP, \cASP & tight & $\mathbf{{2^{\bigO{k}}}\cdot poly(\CCard{\prog})}$ & ${2^{\Omega(k)}}\cdot poly(\CCard{\prog})$~\cite{ImpagliazzoPaturiZane01} \\
	\ASP, \cASP & normal, HCF & $\mathbf{{2^{\bigO{k\cdot log(k)}}}\cdot poly(\CCard{\prog})}$ & ${2^{\Omega(k)}}\cdot poly(\CCard{\prog})$~\cite{ImpagliazzoPaturiZane01} \\
	\ASP, \cASP & disjunctive & $2^{2^{\bigO{k}}}\cdot poly(\CCard{\prog})$~\cite{JaklPichlerWoltran09} & ${2^{2^{\Omega(k)}}\cdot poly(\CCard{\prog})}$\shortversion{~\cite{FichteEtAl17a}}\longerversion{~\cite{HecherFichte19}} \\
	Proj.\ \cSAT & - & $2^{2^{\bigO{k}}}\cdot poly(\CCard{\prog})$~\cite{FichteEtAl18} & $2^{2^{\Omega(k)}}\cdot poly(\CCard{\prog})$~\cite{FichteEtAl18} \\
	\PASP & tight & $\mathbf{2^{2^{\bigO{k}}}\cdot poly(\CCard{\prog})}$ & $\mathbf{2^{2^{\Omega(k)}}\cdot poly(\CCard{\prog})}$ \\
	\PASP & normal, HCF & $\mathbf{2^{2^{\bigO{k\cdot log(k)}}}\cdot poly(\CCard{\prog})}$ & $\mathbf{2^{2^{\Omega(k)}}\cdot poly(\CCard{\prog})}$ \\
	\PDASP & disjunctive & $\mathbf{2^{2^{2^{\bigO{k}}}}\cdot poly(\CCard{\prog})}$ & $\mathbf{2^{2^{2^{\Omega(k)}}}\cdot poly(\CCard{\prog})}$ \\
\end{tabular}\vspace{0.4em}
\caption{Overview of upper and lower bounds using treewidth~$k$ of the primal graph of instance~$\prog$; bold entries were established in the course of this paper.}
\shortversion{\vspace{-1.8em}}
\label{tbl:summary}
\end{table}
\longversion{The following lemma states the runtime results.}
\shortversion{\vspace{-.4em}
\begin{lemma}[$\star$]}
\longversion{\begin{lemma}}
\label{cor:disjruntime}
$\mdpa{\algo{PRIM}}$
  runs in
  time~$\mathcal{O}(2^{2^{2^{k+3}}}\cdot \CCard{\prog}\cdot
  \gamma(\CCard{\prog}))$ for given instance $(\prog,P)$ of \PDASP where $\prog$ is a
  disjunctive program, and $k=\tw{G_\prog}$.
\end{lemma}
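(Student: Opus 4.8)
The plan is to combine the known double-exponential bound on the table sizes produced by the first traversal with the generic runtime of $\dpa_\PROJ$ from Theorem~\ref{thm:runtime}, and then carry out the exponent arithmetic that turns a double-exponential quantity fed into $2^{4m}$ into a triple exponential.

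First I would bound $m\eqdef\max_{t\,\text{in}\,T}\Card{\nu(t)}$, the maximum purged table size. Since purging only removes rows, $\Card{\nu(t)}\le\Card{\tau(t)}$, so it suffices to bound the tables of the first traversal $\dpa_{\algo{PRIM}}$ for disjunctive programs. Recall that a row of the disjunctive primal-graph table algorithm stores a witness $M\subseteq\chi(t)$ together with a family $\CCC$ of counter-witnesses, each a subset of $M$. There are at most $2^{k+1}$ choices for $M$ (as $\Card{\chi(t)}\le k+1$), at most $2^{k+1}$ subsets of $M$, and hence at most $2^{2^{k+1}}$ choices for the family $\CCC$. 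Consequently $\Card{\tau(t)}\le 2^{k+1}\cdot 2^{2^{k+1}}=2^{(k+1)+2^{k+1}}\le 2^{2^{k+2}}$, where the last inequality uses $k+1\le 2^{k+1}$. Thus $m\le 2^{2^{k+2}}$, recovering the double-exponential bound recorded for disjunctive $\cASP$ in Table~\ref{tbl:summary}.

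Next I would account for the three phases of $\mdpa{\algo{PRIM}}$. Computing a width-$k$ TD with $g\le\CCard{\prog}$ nodes costs $2^{\bigO{k^3}}\cdot\CCard{\prog}$; the first traversal $\dpa_{\algo{PRIM}}$ and the subsequent purging each run in time double exponential in $k$ and linear in $\CCard{\prog}$. All of these are dominated by the final projection traversal, so the overall runtime equals that of $\dpa_\PROJ$ up to subsumed lower-order terms. Applying Theorem~\ref{thm:runtime} with the bound $m\le 2^{2^{k+2}}$ gives runtime $\bigO{2^{4m}\cdot g\cdot\gamma(\CCard{\prog})}$, and I then substitute $g\le\CCard{\prog}$.

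The remaining, and only delicate, step is the exponent arithmetic. Here $4m=2^{2}\cdot 2^{2^{k+2}}=2^{2+2^{k+2}}$, and since $2\le 2^{k+2}$ we get $2+2^{k+2}\le 2\cdot 2^{k+2}=2^{k+3}$, hence $4m\le 2^{2^{k+3}}$ and $2^{4m}\le 2^{2^{2^{k+3}}}$. Putting the pieces together yields the claimed bound $\bigO{2^{2^{2^{k+3}}}\cdot\CCard{\prog}\cdot\gamma(\CCard{\prog})}$. I expect the main obstacle to be purely bookkeeping: justifying that the double-exponential table-size bound for the disjunctive primal-graph algorithm is tight enough to feed $2^{4m}$ without losing a constant in the innermost exponent, and confirming that the TD-construction and purging phases are genuinely dominated rather than contributing an extra exponential level.
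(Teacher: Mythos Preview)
Your proposal is correct and follows essentially the same route as the paper: bound the first-traversal table size by $2^{2^{k+2}}$, plug this as $m$ into Theorem~\ref{thm:runtime}, and simplify $2^{4m}$ to the triple-exponential bound via the same exponent arithmetic. The only difference is cosmetic: the paper obtains the bound $m\le 2^{2^{k+2}}$ by citing the known runtime of $\dpa_{\algo{PRIM}}$ from~\cite{FichteEtAl17a}, whereas you re-derive it directly from the row structure (witness plus family of counter-witnesses).
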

\longversion{%
\begin{proof}
  The first two steps follow the proof of Corollary~\ref{cor:runtime}.
  However, $\dpa_{\algo{PRIM}}$ runs in
  time~$\mathcal{O}(2^{2^{k+2}}\cdot
  \CCard{\prog})$~\cite{FichteEtAl17a}. Finally, we run $\dpa_{\PROJ}$
  and obtain by Theorem~\ref{thm:runtime} that
  $\mathcal{O}(2^{4\cdot2^{2^{k+2}}}\cdot \CCard{\prog}\cdot
  \gamma(\CCard{\prog})) \subseteq $
  $\mathcal{O}(2^{2^{2^{k+3}}}\cdot \CCard{\prog}\cdot
  \gamma(\CCard{\prog}))$.  
\end{proof}
}

\longversion{Then, the runtime of algorithm~$\mdpa{\algo{PRIM}}$ cannot be significantly improved.} %

\longerversion{\begin{proposition}
  Unless ETH fails, QBFs of the form $\exists V_1.\forall V_2.\cdots\forall V_\ell. E$
where~$k$ is the treewidth of the primal graph of DNF formula~$E$, cannot be solved in time~$2^{2^{\dots^{2^{o(k)}}}}\cdot \CCard{\prog}^{o(k)}$,
where the height of the tower is~$\ell$.
\end{proposition}

\begin{proof}[Proof (Idea)]
	Follows by construction defined in the proof of~\cite{LampisMitsou17} for QBFs of the form~$\exists V_1.\forall V_2. E$. \todo{provide rigorous proof?}
\end{proof}

\begin{corollary}
Unless ETH fails, QBFs of the form $\forall V_1.\exists V_2.\cdots\forall V_\ell. E$
where~$k$ is the treewidth of the primal graph of CNF formula~$E$, cannot be solved in time~$2^{2^{\dots^{2^{o(k)}}}}\cdot \CCard{\prog}^{o(k)}$,
where the height of the tower is~$\ell$.
\end{corollary}}

\shortversion{\vspace{-1.1em}}
\begin{theorem}[Lower Bound\shortversion{, $\star$}]
\label{thm:lowerbound_disj}
  \PDASP cannot be
  solved in time~$2^{2^{2^{o(k)}}} \cdot \CCard{\prog}^{o(k)}$ for
  given instance~$(\prog, P)$, where~$k=\tw{G_\prog}$, unless ETH fails. 
\end{theorem}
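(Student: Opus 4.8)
The plan is to mirror the reduction establishing the double-exponential lower bound for \PASP, but to spend the extra (third) exponential on one additional innermost quantifier alternation realized through the \emph{saturation} technique available for disjunctive programs. Concretely, I would reduce from the validity problem of three-alternation quantified Boolean formulas $\forall V_1.\exists V_2.\forall V_3.E$. By the treewidth-parameterized tower-of-exponentials lower bounds for QBF — the generalization of Lampis and Mitsou~\cite{LampisMitsou17} already invoked for the two-alternation case~\cite{FichteEtAl18} — this problem cannot be decided in time $2^{2^{2^{o(k)}}}\cdot\CCard{E}^{o(k)}$ unless ETH fails, where $k$ is the treewidth of the primal graph of the matrix~$E$. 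I would take $E$ in the form suited to the saturation gadget (a DNF, or dually a CNF whose violated clauses are detected), which for these lower bounds may be assumed while increasing the treewidth by at most a constant factor.

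Given such a QBF, I would construct an instance $(\prog,P)$ of \PDASP with $P=V_1$ by \emph{composing} two mechanisms. For the outermost block I reuse the idea from the \PASP lower bound above: for each $v\in V_1$ add the choice rules $v\leftarrow \neg nv$ and $nv\leftarrow\neg v$, so that projecting onto $P=V_1$ enumerates all $2^{\Card{V_1}}$ assignments of $V_1$. For the inner $\exists V_2.\forall V_3.E$ part — which is a $\SIGMA{2}{P}$ check — I reuse the treewidth-preserving saturation reduction underlying the double-exponential lower bound for disjunctive \ASP~\cite{FichteEtAl17a}: guess $V_2$ disjunctively via $v\por nv\leftarrow$, guess $V_3$ disjunctively, add for each term of~$E$ a rule $\mathsf{sat}\leftarrow \ell_1,\dots,\ell_j$ deriving a fresh atom $\mathsf{sat}$ whenever that term is satisfied, saturate through $v\leftarrow\mathsf{sat}$, $nv\leftarrow\mathsf{sat}$ for $v\in V_3$, and force $\mathsf{sat}$ by the constraint $\leftarrow\neg\mathsf{sat}$. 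For a fixed guess of $V_1$ and $V_2$, disjunctive minimality makes the saturated interpretation an answer set precisely when every assignment of $V_3$ satisfies $E$, i.e.\ when $\forall V_3.E$ holds. Hence a given assignment of $V_1$ contributes exactly one projected answer set iff some $V_2$-guess makes $\forall V_3.E$ true, so the projected answer set count equals $2^{\Card{V_1}}$ iff $\forall V_1.\exists V_2.\forall V_3.E$ is valid. The reduction therefore targets the decision version $\PDASP$-exactly-$2^{\Card{V_1}}$, which any \PDASP solver decides.

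It then remains to bound the treewidth. Starting from a tree decomposition of $G_E$, duplicating each variable into the pair $v,nv$ keeps the two copies in the same bags and at most doubles the width; each guess, constraint, and saturation rule involves only the atoms of a single term (or clause) of $E$ together with the shared atom $\mathsf{sat}$; and $\mathsf{sat}$, being a single vertex, raises the treewidth by at most one when added to every bag. Thus $\tw{G_\prog}\le 2k+O(1)$, so an algorithm solving \PDASP in time $2^{2^{2^{o(k)}}}\cdot\CCard{\prog}^{o(k)}$ would decide the QBF in time $2^{2^{2^{o(k)}}}\cdot\CCard{E}^{o(k)}$, contradicting ETH. The main obstacle I anticipate is not the routine treewidth arithmetic but the simultaneous control of two things: keeping the saturation encoding of $\forall V_3.E$ \emph{local} enough (including any auxiliary atoms needed to guarantee that only saturated interpretations survive) so that the shared structure does not inflate the treewidth beyond a constant factor, while at the same time ensuring the projected count is \emph{exactly} $2^{\Card{V_1}}$ precisely in the valid case — that is, that no spurious non-saturated interpretation ever produces an additional projected answer set. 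Verifying this interplay between projection-counting and saturation is where the real care is needed.
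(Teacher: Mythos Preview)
Your proposal is correct and essentially the paper's own proof: both reduce from $\forall V_1.\exists V_2.\forall V_3.E$ (using the triple-exponential QBF lower bound under ETH), encode the inner $\exists V_2.\forall V_3$ via the Eiter--Gottlob saturation reduction, project onto $P=V_1$, target the decision version $\PDASP$-exactly-$2^{\Card{V_1}}$, and verify that the treewidth grows only linearly (to roughly $2k+O(1)$). The only cosmetic difference is that the paper simply folds $V_1$ into the existentially guessed block of the saturation reduction (taking $V_2'\eqdef V_1\cup V_2$ and applying $R$ to $\exists V_2'.\forall V_3.E$) rather than adding separate normal choice rules for $V_1$ as you do; either gadget yields the same projected-count behaviour, and your concern about spurious non-saturated answer sets is handled exactly by the constraint $\leftarrow\neg\mathsf{sat}$ you already include.
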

\longversion{\begin{proof}
  Assume for proof by contradiction that there is such an algorithm.
  We show that this contradicts a rather recent result~\cite{FichteHecherPfandler19}
  stating that %
  one cannot decide validity of QBF
  $Q=\forall{V_1}.\exists V_2.\forall V_3.E$ in
  time~$2^{2^{2^{o(k)}}}\cdot \CCard{E}^{o(k)}$
  where %
  $E$ is in DNF\FIX{, which was anticipated by Marx and Mitsou~\cite{MarxMitsou16}}. 
  Assume we have
  given such an instance when parameterized by the treewidth~$k$.
  In the following, we employ a well-known
  reduction~$R$~\cite{EiterGottlob95}, which
  transforms~$\exists V_2.\forall V_3. E$
  into~$\prog=R(\exists V_2.\forall V_3. E)$ and gives a yes
  instance~$\prog$ of consistency if and only
  if~$\exists V_2. \forall V_3. E$ is a yes instance of
  $\exists\forall$-\SAT.
  Then, we reduce instance~$(Q,k)$ via a reduction~$S$ to an
  instance~$((\prog',V_1),2k+2)$, where $\prog'=R(\exists V_2'.\forall V_3. E)$, $V_2'\eqdef V_1\cup V_2$, of the decision
  version~$\PDASP$-exactly-$2^{\Card{V_1}}$ of~$\PDASP$ when
  parameterized by treewidth such that the number of projected answer
  sets is
  exactly~$2^{\Card{V_1}}$.
  It is easy to see that reduction~$S$
  gives a yes instance~$(\prog',V_1)$
  of~$\PDASP$-exactly-$2^{\Card{V_1}}$ if and only
  if~$\forall V_1.\exists V_2. \forall V_3. E$ is a yes instance
  of~$\forall\exists\forall$-\SAT.
  However, it remains to show that the reduction~$S$ indeed increases
  the treewidth only linearly.
  Therefore, let $\TTT=(T,\chi)$ be TD of~$E$. We transform~$\TTT$
  into a TD~$\TTT'=(T,\chi')$ of~$G_{\prog'}$ as follows.  For each
  bag~$\chi(t)$ of~$\TTT$, we add vertices for the atoms~$w$ and $w'$
  (two additional atoms introduced in reduction~$R$) and in addition
  we duplicate each vertex~$v$ in~$\chi(t)$ (due to corresponding duplicate
  atoms introduced in reduction~$R$). Observe
  that~$\width(\TTT') \leq 2\cdot \width(\TTT) + 2$. By construction
  of~$R$, $\TTT'$ is then a TD of~$G_{\prog'}$.
  Hence, $S$ runs in polynomial time and linearly preserves the parameter. %
\end{proof}}

\longerversion{
\begin{corollary}
Unless ETH fails, \PDASP for disjunctive programs~$\prog$ cannot be solved in time~$2^{2^{2^{o(k)}}} \cdot \CCard{\prog}^{o(k)}$ for given instance
~$(\prog, P)$, where~$k$ is the treewidth of the incidence graph of~$\prog$.
\end{corollary}}

\shortversion{\vspace{-0.1em}}
\noindent\FIX{In total, we obtain results presented in Table~\ref{tbl:summary}.
Indeed, there is an increase of complexity when going from~\ASP and~\cASP to~\PASP (c.f., Theorem~\ref{thm:runtime}).
For solving \ASP (\cASP) on tight programs one can again reuse Algorithm~\algo{PHC} (Listing~\ref{fig:prim})
without the orderings~$\sigma$, or encode~\cite{Fages94} to~\SAT and use established DP algorithms~\cite{SamerSzeider10b} for~\SAT (\cSAT).
Then, \PASP on tight programs can be solved after purging, followed by computing projected answer sets by means of~$\dpa_{\PROJ}$.
}

\longerversion{
In fact, we can conclude the following
corollary, which renders algorithm~$\mdpa{\algo{PRIM}}$ asymptotically
worst-case optimal, depending on the costs~$\gamma(n)$ for multiplying
two~$n$-bit numbers.

\begin{corollary}
Unless ETH fails, \PDASP for disjunctive programs~$\prog$ runs in time~$2^{2^{2^{\Theta(k)}}} \cdot \CCard{\prog} \cdot \gamma(\CCard{\prog})$ for given instance
~$(\prog, P)$, where~$k$ is the treewidth of the primal graph of~$\prog$.
\end{corollary}
\begin{proof}
Lower bounds by Theorem~\ref{thm:lowerbound_disj}. Upper bound by Algorithm~$\dpa_{\algo{PRIM}}$ in the first pass (c.f., Lemma~\ref{cor:disjruntime}),
followed by purging and the projection algorithm~$\dpa_\PROJ$ in the second pass.
\end{proof}
}
\shortversion{\vspace{-.75em}}
\section{Conclusions}\label{sec:conclusions}\shortversion{\vspace{-.5em}}
We introduced novel algorithms to count the projected answer sets
(\PASP) of tight, normal, head-cycle-free, and \FIX{arbitrary} disjunctive programs. Our algorithms
employ dynamic programming and exploit small treewidth of the
primal graph of the input program. 
More precisely, for disjunctive programs, the runtime is triple exponential in the treewidth and
polynomial in the size of the instance, which can not
be significantly improved under the exponential time hypothesis.
When we restrict the
input to tight, normal, and head-cycle-free programs, the runtime drops to double
exponential, c.f., Table~\ref{tbl:summary}.
Our results extend previous work to
answer set programming and we believe it is applicable to
further hard combinatorial problems, such as 
quantified Boolean formulas\longversion{(QBF)~\cite{CharwatWoltran16a}}
and circumscription~\cite{DurandHermannKolaitis05}.

\shortversion{
\footnotesize
\vspace{-.75em}

\normalsize
}

\longversion{
\bibliography{references}}

\begin{thebibliography}{45}
\providecommand{\natexlab}[1]{#1}
\providecommand{\url}[1]{\texttt{#1}}
\expandafter\ifx\csname urlstyle\endcsname\relax
  \providecommand{\doi}[1]{doi: #1}\else
  \providecommand{\doi}{doi: \begingroup \urlstyle{rm}\Url}\fi

\bibitem[Aziz(2015)]{Aziz15a}
R.~A. Aziz.
\newblock \emph{Answer Set Programming: Founded Bounds and Model Counting}.
\newblock PhD thesis, Department of Computing and Information Systems , The
  University of Melbourne, Sept. 2015.

\bibitem[Balduccini et~al.(2006)Balduccini, Gelfond, and
  Nogueira]{BalducciniGelfondNogueira06a}
M.~Balduccini, M.~Gelfond, and M.~Nogueira.
\newblock Answer set based design of knowledge systems.
\newblock \emph{Ann. Math. Artif. Intell.}, 47\penalty0 (1-2):\penalty0
  183--219, 2006.

\bibitem[Baral and Gelfond(1994)]{BaralGelfond94}
C.~Baral and M.~Gelfond.
\newblock Logic programming and knowledge representation.
\newblock \emph{The Journal of Logic Programming}, 19-20:\penalty0 73--148,
  1994.
\newblock Special Issue: Ten Years of Logic Programming.

\bibitem[Ben-Eliyahu and Dechter(1994)]{Ben-EliyahuDechter94}
R.~Ben-Eliyahu and R.~Dechter.
\newblock Propositional semantics for disjunctive logic programs.
\newblock \emph{Ann. Math. Artif. Intell.}, 12\penalty0 (1):\penalty0 53--87,
  1994.
\newblock ISSN 1012-2443.
\newblock \doi{10.1007/BF01530761}.

\bibitem[Bido{\'\i}t and Froidevaux(1991)]{BidoitFroidevaux91}
N.~Bido{\'\i}t and C.~Froidevaux.
\newblock Negation by default and unstratifiable logic programs.
\newblock \emph{Theoretical Computer Science}, 78\penalty0 (1):\penalty0
  85--112, 1991.
\newblock ISSN 0304-3975.
\newblock \doi{10.1016/0304-3975(51)90004-7}.

\bibitem[Biere et~al.(2009)Biere, Heule, van Maaren, and
  Walsh]{BiereHeuleMaarenWalsh09}
A.~Biere, M.~Heule, H.~van Maaren, and T.~Walsh, editors.
\newblock \emph{Handbook of Satisfiability}, volume 185 of \emph{Frontiers in
  Artificial Intelligence and Applications}.
\newblock IOS Press, Amsterdam, Netherlands, Feb. 2009.
\newblock ISBN 978-1-58603-929-5.

\bibitem[Bodlaender(1996)]{Bodlaender96}
H.~L. Bodlaender.
\newblock A linear-time algorithm for finding tree-decompositions of small
  treewidth.
\newblock \emph{SIAM J. Comput.}, 25\penalty0 (6):\penalty0 1305--1317, 1996.

\bibitem[Bodlaender and Kloks(1996)]{BodlaenderKloks96}
H.~L. Bodlaender and T.~Kloks.
\newblock Efficient and constructive algorithms for the pathwidth and treewidth
  of graphs.
\newblock \emph{J. Algorithms}, 21\penalty0 (2):\penalty0 358--402, 1996.

\bibitem[Bondy and Murty(2008)]{BondyMurty08}
J.~A. Bondy and U.~S.~R. Murty.
\newblock \emph{Graph theory}, volume 244 of \emph{Graduate Texts in
  Mathematics}.
\newblock Springer Verlag, New York, USA, 2008.
\newblock ISBN 978-1-84628-969-9.

\bibitem[Brewka et~al.(2011)Brewka, Eiter, and
  Truszczy{\'n}ski]{BrewkaEiterTruszczynski11}
G.~Brewka, T.~Eiter, and M.~Truszczy{\'n}ski.
\newblock Answer set programming at a glance.
\newblock \emph{Communications of the ACM}, 54\penalty0 (12):\penalty0 92--103,
  2011.
\newblock ISSN 0001-0782.
\newblock \doi{10.1145/2043174.2043195}.

\bibitem[Charwat and Woltran(2016)]{CharwatWoltran16a}
G.~Charwat and S.~Woltran.
\newblock Dynamic programming-based {QBF} solving.
\newblock In F.~Lonsing and M.~Seidl, editors, \emph{Proceedings of the 4th
  International Workshop on Quantified Boolean Formulas (QBF'16)}, volume 1719,
  pages 27--40. CEUR Workshop Proceedings (CEUR-WS.org), 2016.
\newblock co-located with 19th International Conference on Theory and
  Applications of Satisfiability Testing (SAT'16).

\bibitem[Cygan et~al.(2015)Cygan, Fomin, Kowalik, Lokshtanov, D{\'a}niel~Marx,
  Pilipczuk, and Saurabh]{CyganEtAl15}
M.~Cygan, F.~V. Fomin, {\L}.~Kowalik, D.~Lokshtanov, M.~P. D{\'a}niel~Marx,
  M.~Pilipczuk, and S.~Saurabh.
\newblock \emph{Parameterized Algorithms}.
\newblock Springer Verlag, 2015.
\newblock ISBN 978-3-319-21274-6.
\newblock \doi{10.1007/978-3-319-21275-3}.

\bibitem[Diestel(2012)]{Diestel12}
R.~Diestel.
\newblock \emph{Graph Theory, 4th Edition}, volume 173 of \emph{Graduate Texts
  in Mathematics}.
\newblock Springer Verlag, 2012.
\newblock ISBN 978-3-642-14278-9.

\bibitem[Durand et~al.(2005)Durand, Hermann, and
  Kolaitis]{DurandHermannKolaitis05}
A.~Durand, M.~Hermann, and P.~G. Kolaitis.
\newblock Subtractive reductions and complete problems for counting complexity
  classes.
\newblock \emph{Theoretical Computer Science}, 340\penalty0 (3):\penalty0
  496--513, 2005.
\newblock ISSN 0304--3975.
\newblock \doi{10.1016/j.tcs.2005.03.012}.

\bibitem[Eiter and Gottlob(1995)]{EiterGottlob95}
T.~Eiter and G.~Gottlob.
\newblock On the computational cost of disjunctive logic programming:
  Propositional case.
\newblock \emph{Ann. Math. Artif. Intell.}, 15\penalty0 (3--4):\penalty0
  289--323, 1995.
\newblock \doi{10.1007/BF01536399}.

\bibitem[Fages(1994)]{Fages94}
F.~Fages.
\newblock Consistency of clark's completion and existence of stable models.
\newblock \emph{Meth. of Logic in {CS}}, 1\penalty0 (1):\penalty0 51--60, 1994.

\bibitem[Fichte and Hecher(2018{\natexlab{a}})]{FichteHecher18}
J.~K. Fichte and M.~Hecher.
\newblock Exploiting treewidth for counting projected answer sets.
\newblock In \emph{{NMR}}, pages 27--36, 2018{\natexlab{a}}.
\newblock Proceedings available at
  \url{http://www4.uma.pt/nmr2018/NMR2018Proceedings.pdf}.

\bibitem[Fichte and Hecher(2018{\natexlab{b}})]{FichteHecher18c}
J.~K. Fichte and M.~Hecher.
\newblock Exploiting treewidth for counting projected answer sets.
\newblock In \emph{Proceedings of the 16th International Conference on
  Principles of Knowledge Representation and Reasoning (KR'18)},
  2018{\natexlab{b}}.
\newblock Extended abstract.

\bibitem[Fichte et~al.(2017{\natexlab{a}})Fichte, Hecher, Morak, and
  Woltran]{FichteEtAl17a}
J.~K. Fichte, M.~Hecher, M.~Morak, and S.~Woltran.
\newblock Answer set solving with bounded treewidth revisited.
\newblock In M.~Balduccini and T.~Janhunen, editors, \emph{Proceedings of the
  14th International Conference on Logic Programming and Nonmonotonic Reasoning
  (LPNMR'17)}, volume 10377 of \emph{Lecture Notes in Computer Science}, pages
  132--145, Espoo, Finland, July 2017{\natexlab{a}}. Springer Verlag.
\newblock ISBN 978-3-319-61660-5.
\newblock \doi{10.1007/978-3-319-61660-5_13}.

\bibitem[Fichte et~al.(2017{\natexlab{b}})Fichte, Hecher, Morak, and
  Woltran]{FichteEtAl17b}
J.~K. Fichte, M.~Hecher, M.~Morak, and S.~Woltran.
\newblock {DynASP2.5}: Dynamic programming on tree decompositions in action.
\newblock In D.~Lokshtanov and N.~Nishimura, editors, \emph{Proceedings of the
  12th International Symposium on Parameterized and Exact Computation
  (IPEC'17)}. Dagstuhl Publishing, 2017{\natexlab{b}}.
\newblock \doi{10.4230/LIPIcs.IPEC.2017.17}.

\bibitem[Fichte et~al.(2018)Fichte, Hecher, Morak, and Woltran]{FichteEtAl18}
J.~K. Fichte, M.~Hecher, M.~Morak, and S.~Woltran.
\newblock Exploiting treewidth for projected model counting and its limits.
\newblock In O.~Beyersdorff and C.~M. Wintersteiger, editors, \emph{Proceedings
  on the 21th International Conference on Theory and Applications of
  Satisfiability Testing (SAT'18)}, volume 10929 of \emph{Lecture Notes in
  Computer Science}, pages 165--184, Oxford, UK, July 2018. Springer Verlag.
\newblock ISBN 978-3-319-94144-8.

\bibitem[Fichte et~al.(2019)Fichte, Hecher, and
  Pfandler]{FichteHecherPfandler19}
J.~K. Fichte, M.~Hecher, and A.~Pfandler.
\newblock {TE-ETH: Lower Bounds for QBFs of Bounded Treewidth}.
\newblock 2019.
\newblock Under review for STOC 2019. Preliminary version available
  at~\url{https://tinyurl.com/y7wnvu6w}.

\bibitem[Gebser et~al.(2009)Gebser, Kaufmann, and
  Schaub]{GebserKaufmannSchaub09a}
M.~Gebser, B.~Kaufmann, and T.~Schaub.
\newblock Solution enumeration for projected boolean search problems.
\newblock In W.-J. van Hoeve and J.~N. Hooker, editors, \emph{Proceedings of
  the 6th International Conference on Integration of AI and OR Techniques in
  Constraint Programming for Combinatorial Optimization Problems (CPAIOR'09)},
  volume 5547 of \emph{Lecture Notes in Computer Science}, pages 71--86,
  Berlin, 2009. Springer Verlag.
\newblock ISBN 978-3-642-01929-6.
\newblock \doi{10.1007/978-3-642-01929-6_7}.

\bibitem[Gebser et~al.(2012)Gebser, Kaminski, Kaufmann, and
  Schaub]{GebserKaminskiKaufmannSchaub12}
M.~Gebser, R.~Kaminski, B.~Kaufmann, and T.~Schaub.
\newblock \emph{Answer Set Solving in Practice}.
\newblock Morgan \& Claypool, 2012.
\newblock \doi{10.2200/S00457ED1V01Y201211AIM019}.

\bibitem[Gelfond and Lifschitz(1991)]{GelfondLifschitz91}
M.~Gelfond and V.~Lifschitz.
\newblock Classical negation in logic programs and disjunctive databases.
\newblock \emph{New Generation Comput.}, 9\penalty0 (3/4):\penalty0 365--386,
  1991.
\newblock \doi{10.1007/BF03037169}.

\bibitem[Graham et~al.(1995)Graham, Gr{\"o}tschel, and
  Lov{\'a}sz]{GrahamGrotschelLovasz95a}
R.~L. Graham, M.~Gr{\"o}tschel, and L.~Lov{\'a}sz.
\newblock \emph{Handbook of combinatorics}, volume~I.
\newblock Elsevier Science Publishers, North-Holland, 1995.

\bibitem[Guziolowski et~al.(2013)Guziolowski, Videla, Eduati, Thiele, Cokelaer,
  Siegel, and Saez-Rodriguez]{GuziolowskiEtAl13a}
C.~Guziolowski, S.~Videla, F.~Eduati, S.~Thiele, T.~Cokelaer, A.~Siegel, and
  J.~Saez-Rodriguez.
\newblock Exhaustively characterizing feasible logic models of a signaling
  network using answer set programming.
\newblock \emph{Bioinformatics}, 29\penalty0 (18):\penalty0 2320--2326, 2013.
\newblock \doi{10.1093/bioinformatics/btt393}.
\newblock Erratum see Bioinformatics 30, 13, 1942.

\bibitem[Harvey et~al.(2016)Harvey, van~der Hoeven, and Lecerf]{Harvey2016}
D.~Harvey, J.~van~der Hoeven, and G.~Lecerf.
\newblock Even faster integer multiplication.
\newblock \emph{J. Complexity}, 36:\penalty0 1--30, 2016.
\newblock ISSN 0885-064X.
\newblock \doi{https://doi.org/10.1016/j.jco.2016.03.001}.

\bibitem[Impagliazzo et~al.(2001)Impagliazzo, Paturi, and
  Zane]{ImpagliazzoPaturiZane01}
R.~Impagliazzo, R.~Paturi, and F.~Zane.
\newblock Which problems have strongly exponential complexity?
\newblock \emph{J. of Computer and System Sciences}, 63\penalty0 (4):\penalty0
  512--530, 2001.
\newblock ISSN 0022-0000.
\newblock \doi{10.1006/jcss.2001.1774}.

\bibitem[Jakl et~al.(2009)Jakl, Pichler, and Woltran]{JaklPichlerWoltran09}
M.~Jakl, R.~Pichler, and S.~Woltran.
\newblock Answer-set programming with bounded treewidth.
\newblock In \emph{IJCAI'09}, volume~2, pages 816--822, 2009.

\bibitem[Janhunen and Niemel{\"a}(2016)]{JanhunenNiemela16a}
T.~Janhunen and I.~Niemel{\"a}.
\newblock The answer set programming paradigm.
\newblock \emph{{AI} Magazine}, 37\penalty0 (3):\penalty0 13--24, 2016.
\newblock URL
  \url{http://www.aaai.org/ojs/index.php/aimagazine/article/view/2671}.

\bibitem[Kleine~B{\"u}ning and Lettman(1999)]{KleineBuningLettman99}
H.~Kleine~B{\"u}ning and T.~Lettman.
\newblock \emph{Propositional logic: deduction and algorithms}.
\newblock Cambridge University Press, Cambridge, New York, NY, USA, 1999.
\newblock ISBN 978-0521630177.

\bibitem[Kloks(1994)]{Kloks94a}
T.~Kloks.
\newblock \emph{Treewidth. Computations and Approximations}, volume 842 of
  \emph{Lecture Notes in Computer Science}.
\newblock Springer Verlag, 1994.
\newblock ISBN 3-540-58356-4.

\bibitem[Knuth(1998)]{Knuth1998}
D.~E. Knuth.
\newblock How fast can we multiply?
\newblock In \emph{The Art of Computer Programming}, volume~2 of
  \emph{Seminumerical Algorithms}, chapter 4.3.3, pages 294--318.
  Addison-Wesley, 3 edition, 1998.

\bibitem[Lampis and Mitsou(2017)]{LampisMitsou17}
M.~Lampis and V.~Mitsou.
\newblock Treewidth with a quantifier alternation revisited.
\newblock In D.~Lokshtanov and N.~Nishimura, editors, \emph{Proceedings of the
  12th International Symposium on Parameterized and Exact Computation
  (IPEC'17)}. Dagstuhl Publishing, 2017.
\newblock \doi{10.4230/LIPIcs.IPEC.2017.17}.

\bibitem[Lin and Zhao(2003)]{LinZhao03}
F.~Lin and J.~Zhao.
\newblock On tight logic programs and yet another translation from normal logic
  programs to propositional logic.
\newblock In G.~Gottlob and T.~Walsh, editors, \emph{Proceedings of the 18th
  International Joint Conference on Artificial intelligence (IJCAI'03)}, pages
  853--858, Acapulco, Mexico, August 2003. Morgan Kaufmann.

\bibitem[Lokshtanov et~al.(2011)Lokshtanov, Marx, and
  Saurabh]{LokshtanovMarxSaurabh11}
D.~Lokshtanov, D.~Marx, and S.~Saurabh.
\newblock Slightly superexponential parameterized problems.
\newblock In \emph{{SODA}}, pages 760--776. {SIAM}, 2011.

\bibitem[Marek and Truszczy{\'n}ski(1991)]{MarekTruszczynski91}
W.~Marek and M.~Truszczy{\'n}ski.
\newblock Autoepistemic logic.
\newblock \emph{J. of the ACM}, 38\penalty0 (3):\penalty0 588--619, 1991.
\newblock ISSN 0004-5411.
\newblock \doi{10.1145/116825.116836}.

\bibitem[Marx and Mitsou(2016)]{MarxMitsou16}
D.~Marx and V.~Mitsou.
\newblock {Double-Exponential and Triple-Exponential Bounds for Choosability
  Problems Parameterized by Treewidth}.
\newblock In I.~Chatzigiannakis, M.~Mitzenmacher, Y.~Rabani, and D.~Sangiorgi,
  editors, \emph{43rd International Colloquium on Automata, Languages, and
  Programming (ICALP 2016)}, volume~55 of \emph{Leibniz International
  Proceedings in Informatics (LIPIcs)}, pages 28:1--28:15, Dagstuhl, Germany,
  2016. Schloss Dagstuhl--Leibniz-Zentrum fuer Informatik.
\newblock ISBN 978-3-95977-013-2.
\newblock \doi{10.4230/LIPIcs.ICALP.2016.28}.
\newblock URL \url{http://drops.dagstuhl.de/opus/volltexte/2016/6307}.

\bibitem[Niemel{{\"a}} et~al.(1999)Niemel{{\"a}}, Simons, and
  Soininen]{NiemelaSimonsSoininen99}
I.~Niemel{{\"a}}, P.~Simons, and T.~Soininen.
\newblock Stable model semantics of weight constraint rules.
\newblock In M.~Gelfond, N.~Leone, and G.~Pfeifer, editors, \emph{Proceedings
  of the 5th International Conference on Logic Programming and Nonmonotonic
  Reasoning (LPNMR'99)}, volume 1730 of \emph{Lecture Notes in Computer
  Science}, pages 317--331, El Paso, TX, USA, Dec. 1999. Springer Verlag.
\newblock ISBN 3-540-66749-0.
\newblock \doi{10.1007/3-540-46767-X_23}.

\bibitem[Nogueira et~al.(2001)Nogueira, Balduccini, Gelfond, Watson, and
  Barry]{NogueiraBalducciniGelfond01a}
M.~Nogueira, M.~Balduccini, M.~Gelfond, R.~Watson, and M.~Barry.
\newblock An {A-Prolog} decision support system for the {Space Shuttle}.
\newblock In \emph{Proceedings of the 3rd International Symposium on Practical
  Aspects of Declarative Language (PADL 2001)}, volume 1990 of \emph{Lecture
  Notes in Computer Science}, pages 169--183, Las Vegas, NV, USA, 2001.
  Springer Verlag.
\newblock ISBN 978-3-540-45241-6.
\newblock \doi{10.1007/3-540-45241-9_12}.

\bibitem[Papadimitriou(1994)]{Papadimitriou94}
C.~H. Papadimitriou.
\newblock \emph{Computational Complexity}.
\newblock Addison-Wesley, 1994.
\newblock ISBN 0-470-86412-5.

\bibitem[Pichler et~al.(2014)Pichler, R{\"{u}}mmele, Szeider, and
  Woltran]{PichlerEtAl14}
R.~Pichler, S.~R{\"{u}}mmele, S.~Szeider, and S.~Woltran.
\newblock Tractable answer-set programming with weight constraints: bounded
  treewidth is not enough.
\newblock \emph{Theory Pract. Log. Program.}, 14\penalty0 (2), 2014.

\bibitem[Samer and Szeider(2010)]{SamerSzeider10b}
M.~Samer and S.~Szeider.
\newblock Algorithms for propositional model counting.
\newblock \emph{J. Discrete Algorithms}, 8\penalty0 (1):\penalty0 50---64,
  2010.
\newblock \doi{10.1016/j.jda.2009.06.002}.

\bibitem[Wilder(1965)]{Wilder12a}
R.~L. Wilder.
\newblock \emph{Introduction to the Foundations of Mathematics}.
\newblock John Wiley \& Sons, 2nd edition edition, 1965.

\end{thebibliography}


\begin{thebibliography}{10}
\providecommand{\url}[1]{\texttt{#1}}
\providecommand{\urlprefix}{URL }

\bibitem{Aziz15a}
Aziz, R.A.: Answer Set Programming: Founded Bounds and Model Counting. Ph.D.
  thesis, The University of
  Melbourne (Sep 2015)

%
%
%


\bibitem{Ben-EliyahuDechter94}
Ben-Eliyahu, R., Dechter, R.: Propositional semantics for disjunctive logic
  programs. Ann. Math. Artif. Intell.  12(1),  53--87 (1994)

\bibitem{BidoitFroidevaux91}
Bido{\'\i}t, N., Froidevaux, C.: Negation by default and unstratifiable logic
  programs. TCS  78(1),  85--112 (1991)

%
%
%
%

\bibitem{Bodlaender96}
Bodlaender, H.L.: A linear-time algorithm for finding tree-decompositions of
  small treewidth. SIAM J. Comput.  25(6),  1305--1317 (1996)

\bibitem{BodlaenderKloks96}
Bodlaender, H.L., Kloks, T.: Efficient and constructive algorithms for the
  pathwidth and treewidth of graphs. J. Algorithms  21(2),  358--402 (1996)

%
%
%

\bibitem{BrewkaEiterTruszczynski11}
Brewka, G., Eiter, T., Truszczy{\'n}ski, M.: Answer set programming at a
  glance. Communications of the ACM  54(12),  92--103 (2011)

%
%
%
%
%
%
%
%

\bibitem{CyganEtAl15}
Cygan, M., Fomin, F.V., Kowalik, {\L}., Lokshtanov, D., D{\'a}niel~Marx, M.P.,
  Pilipczuk, M., Saurabh, S.: Parameterized Algorithms. Springer (2015)

%
%
%

\bibitem{DurandHermannKolaitis05}
Durand, A., Hermann, M., Kolaitis, P.G.: Subtractive reductions and complete
  problems for counting complexity classes. TCS
  340(3),  496--513 (2005)

\bibitem{EiterGottlob95}
Eiter, T., Gottlob, G.: On the computational cost of disjunctive logic
  programming: Propositional case. Ann. Math. Artif. Intell.  15(3--4),
  289--323 (1995)

\bibitem{Fages94}
Fages, F.: Consistency of clark's completion and existence of stable models.
  Meth. of Logic in {CS}  1(1),  51--60 (1994)

 \bibitem{FichteHecher18c}
 Fichte, J.K., Hecher, M.: Exploiting treewidth for counting projected answer
   sets. In: KR'18. pp. 639--640, AAAI press

%
%
%

\bibitem{FichteEtAl17a}
Fichte, J.K., Hecher, M., Morak, M., Woltran, S.: Answer set solving with
  bounded treewidth revisited. In: LPNMR'17. LNCS, vol.
  10377, pp. 132--145. Springer

%
%
%

\bibitem{FichteEtAl18}
Fichte, J.K., Hecher, M., Morak, M., Woltran, S.: Exploiting treewidth for
  projected model counting and its limits. In: SAT'18. LNCS, vol. 10929, pp. 165--184. Springer

%
%
%
%
%

%
%
%

%
%
%

\bibitem{GebserKaufmannSchaub09a}
Gebser, M., Kaufmann, B., Schaub, T.: Solution enumeration for projected
  boolean search problems. In: CPAIOR'09.
  LNCS, vol. 5547, pp. 71--86. Springer

\bibitem{GelfondLifschitz91}
Gelfond, M., Lifschitz, V.: Classical negation in logic programs and
  disjunctive databases. New Generation Comput.  9(3/4),  365--386 (1991)

%
%
%

%
%
%
%
%

\bibitem{Harvey2016}
Harvey, D., van~der Hoeven, J., Lecerf, G.: Even faster integer multiplication.
  J. Complexity  36,  1--30 (2016)

\bibitem{ImpagliazzoPaturiZane01}
Impagliazzo, R., Paturi, R., Zane, F.: Which problems have strongly exponential
  complexity? J. of Computer and System Sciences  63(4),  512--530 (2001)

 \bibitem{JaklPichlerWoltran09}
 Jakl, M., Pichler, R., Woltran, S.: Answer-set programming with bounded
   treewidth. In: IJCAI'09. vol.~2, pp. 816--822 (2009)

%
%
%
%

%
%

\bibitem{LinZhao03}
Lin, F., Zhao, J.: On tight logic programs and yet another translation from
  normal logic programs to propositional logic. In: IJCAI'03. pp. 853--858. Morgan Kaufmann

\bibitem{LokshtanovMarxSaurabh11}
Lokshtanov, D., Marx, D., Saurabh, S.: Slightly superexponential parameterized
  problems. In: {SODA}. pp. 760--776. {SIAM} (2011)

%
%
%

%
%
%
%

%
%
%

%
%

%
%
%

\bibitem{SamerSzeider10b}
Samer, M., Szeider, S.: Algorithms for propositional model counting. J.
  Discrete Algorithms  8(1),  50---64 (2010)

\end{thebibliography}

\appendix
\longversion
{
\clearpage
\section{Additional Resources}

\appendix

\subsection{Additional Examples}
\begin{example}[c.f.,\cite{FichteEtAl17a}]\label{ex:bagprog} 
  Intuitively, the tree decomposition of Figure~\ref{fig:graph-td}
  enables us to evaluate program $\prog$ by analyzing sub-programs
  $\{r_2\}$ and $\{r_3,r_4, r_5\}$, and combining results agreeing on
  $e$ followed by analyzing~$\{r_1\}$.  Indeed, for the given tree
  decomposition of Figure~\ref{fig:graph-td}, $\progt{t_1}=\{r_2\}$,
  $\progt{t_2}=\{r_3,r_4, r_5\}$ and
  $\prog=\progt{t_3}=\{r_1\} \cup \progtneq{t_3}$. Note that
  here~$\prog=\progt{t_3} \neq \progtneq{t_3}$ and the tree
  decomposition is not nice.  \longerversion{For the tree decomposition
    of Figure~\ref{fig:graph-td2}, we have
    $\progt{t_1} = \{r_1,r_2\}$, %
    as well as $\progt{t_3} = \{r_3\}$.} %
\end{example}%

\subsection{Parsimonious reductions}
Let $L$ and $L'$ be counting problems with witness functions~$\WWW$
and $\WWW'$. A \emph{parsimonious reduction} from~$L$ to $L'$ is a
polynomial-time reduction~$r: \Sigma^* \rightarrow \Sigma'^*$ such
that for all~$I \in \Sigma^*$, we
have~$\Card{\WWW(I)}=\Card{\WWW'(r(I))}$. It is easy to see that the
counting complexity classes~$\cntc\mtext{C}$ defined above are closed
under parsimonious reductions. It is clear for counting problems~$L$
and $L'$ that if $L \in \cntc\mtext{C}$ and there is a parsimonious
reduction from~$L'$ to $L$, then $L' \in \cntc\mtext{C}$.

\subsection{Counting Complexity of~\PASP: Omitted proofs}
\begin{restatetheorem}[prop:hcfproj]
\begin{theorem}
  The problem~$\PASP$ is \sharpSigma{2}-complete when we allow
  disjunctive programs as input and \sharpNP-complete when the input
  is restricted to \FIX{head-cycle-free, normal or tight programs.}
\end{theorem}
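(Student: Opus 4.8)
The plan is to prove membership and hardness separately, and in each direction to cover the three restricted classes (tight, normal, head-cycle-free) with a single argument. The key structural facts I would use are that a head-cycle is in particular a cycle, so every tight program is head-cycle-free, that every normal program is head-cycle-free, and that head-cycle-free programs admit polynomial-time answer-set checking via the Lin--Zhao characterization extended by Ben-Eliyahu and Dechter~\cite{LinZhao03,Ben-EliyahuDechter94}.

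\textbf{Membership.} I would read off the witness function sending $(\prog,P)$ to $\WWW(\prog,P) \eqdef \SB I \subseteq P \SM I \cup J \text{ is an answer set of } \prog \text{ for some } J \subseteq \at(\prog)\setminus P\SE$, so that $\PASP$ asks for $\Card{\WWW(\prog,P)}$. Every witness obeys $\Card{I}\le\Card{\at(\prog)}$, hence witnesses are polynomially bounded, and it remains only to classify the verification problem ``given $I\subseteq P$, is $I\in\WWW(\prog,P)$?''. For disjunctive $\prog$ this amounts to guessing an extension $J$ and verifying that $I\cup J$ is an answer set, i.e.\ a polynomial model check together with a co-\NP minimality check of the reduct $\prog^{I\cup J}$; the overall $\exists J\,\forall K$ structure places the verification in $\SIGMA{2}{P}$, whence $\PASP\in\sharpSigma{2}$. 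For head-cycle-free $\prog$ the answer-set check is polynomial, so verification is an \NP guess of $J$ followed by a polynomial check, giving $\PASP\in\sharpNP$; since tight and normal programs are head-cycle-free, the same bound applies to them.

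\textbf{Hardness for tight/normal/head-cycle-free.} I would give a parsimonious reduction from projected model counting $\PMC$, which is $\sharpNP$-complete~\cite{FichteEtAl18}. Given a CNF $E$ over $X\cup Y$ with projection onto $X$, I build $\prog$ containing the guessing rules $v \leftarrow \neg nv$ and $nv \leftarrow \neg v$ for every $v\in X\cup Y$ together with one constraint $\hsep \neg x_1,\ldots,\neg x_i, x_{i+1},\ldots,x_j$ per clause, and set $P \eqdef X$. No positive body atom ever reaches a head, so $D_\prog$ is edge-free and $\prog$ is simultaneously tight, normal and head-cycle-free. Its answer sets are exactly the total assignments of $X\cup Y$ satisfying $E$, so projecting onto $X$ yields precisely the $X$-assignments extendable to a model of $E$; the projected answer sets count then equals the projected model count, establishing $\sharpNP$-hardness for all three classes at once. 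For the disjunctive case I would instead reduce from projected counting over $\exists\forall$-quantified Boolean formulas, the canonical $\sharpSigma{2}$-complete problem: given $\exists V_2.\forall V_3.E$ with free variables $V_1$, count the $V_1$-assignments $a$ with $\exists V_2.\forall V_3.E[a]$ true. Applying the Eiter--Gottlob reduction~\cite{EiterGottlob95} to $\exists V_2'.\forall V_3.E$ with $V_2'\eqdef V_1\cup V_2$ yields a disjunctive program $\prog'$ whose answer sets encode the accepting $(V_1,V_2)$-choices; taking $P\eqdef V_1$ makes the projected answer sets count equal the number of qualifying $V_1$-assignments, giving $\sharpSigma{2}$-hardness.

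\textbf{Main obstacle.} The delicate part is \emph{parsimony}: I must verify that each projected witness of the source corresponds to exactly one projected answer set of the target (and conversely), so that counts are preserved rather than merely nonemptiness. Concretely this requires that the guessing gadget and the Eiter--Gottlob auxiliary atoms introduce no spurious projected duplicates, which holds because the projection is taken only over the original variables $X$ (resp.\ $V_1$), and that the existence of an $\at(\prog)\setminus P$-extension under the answer-set semantics aligns exactly with the inner quantifier block $\exists J$ (resp.\ $\exists V_2.\forall V_3$). I would also confirm that $\PMC$ and the $\exists\forall$-counting problem are complete for $\sharpNP$ and $\sharpSigma{2}$ under parsimonious reductions, so that completeness of $\PASP$ transfers through the composition.
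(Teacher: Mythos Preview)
Your proposal is correct and matches the paper's approach almost exactly: the paper also establishes membership by reducing the witness check to consistency of $\prog$ augmented with constraints $\{\hsep\neg i\mid i\in I\}\cup\{\hsep i\mid i\in P\setminus I\}$ (which preserves tightness/normality/HCF), and for hardness gives parsimonious reductions from $\cnt\exists$-\SAT (your \PMC) and $\cnt\exists\forall$-\SAT (your $\exists\forall$-counting problem) via the identical guessing-rule-plus-constraint construction and the Eiter--Gottlob reduction with the projection variables folded into the outermost existential block. The only cosmetic difference is the paper's naming of the source problems and its phrasing of membership via consistency rather than a direct witness-verification argument.
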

\end{restatetheorem}
\begin{proof}
  Membership immediately holds as we can check for a given
  set~$I\subseteq P$ whether there is an answer set~$J\supseteq I$
  of~$\prog$ with~$J\cap (P\setminus I)=\emptyset$ by checking if
  there is an answer set of
  program~$\prog \cup \bigcup_{i\in I} \{\hsep \neg i\} \cup
  \bigcup_{i\in P\setminus I} \{\hsep i\}$. \FIX{Note that if~$\prog$ is
  head-cycle-free, normal, or tight, this program is again head-cycle-free, normal, or tight, respectively.} Hardness
  follows by establishing a parsimonious reduction
  from~$\cnt\exists$-\SAT or $\cnt\exists\forall$-\SAT\footnote{For
    quantified Boolean formulas (QBF) and its evaluation problem
    ($Q_1\ldots Q_i$-\SAT for alternating
    $Q_i \in \{\exists, \forall\}$) we refer to standard
    texts~\cite{BiereHeuleMaarenWalsh09\longversion{,KleineBuningLettman99}}.},
  respectively.
  Assume that the input is restricted to \FIX{head-cycle-free, normal or tight}
  programs. 
  Given an instance~$(Q, Z)$ with $Q=\exists X.\phi(X,Z)$. We reduce
  to the instance~$(R(Q), Z)$ of $\PASP$, where~$R(Q)$ is defined as
  follows. For each variable~$v\in X\cup Z$, we add \FIX{the
  rules~$v \leftarrow \neg nv$ and $nv \leftarrow \neg v$}. For each clause~$\ell_1 \vee \ldots \vee \ell_k$
  in~$\phi(X,Z)$, we add a
  rule~$\hsep \bar{\ell_1}, \ldots, \bar{\ell_k}$ where~$\bar{\ell_i}$
  corresponds to $x$ if~$\ell_i=\neg x$ for a variable~$x$, and $\neg x$ otherwise. Then, a
  counter~$c$ solves~$(Q, Z)$ if and only if $c$ solves~$(R(Q),
  Z)$.
  Assume that we allow arbitrary disjunctive programs as input.  Given
  an instance~$(Q, Z)$, where $Q=\exists X. \forall Y. \phi(X,Y,Z)$. We
  reduce to the instance~$(R(Q'), Z)$ of $\PASP$, where~$Q'=\exists X'. \forall Y.\phi(X,Y,Z)$, $X'=X\cup Z$, and~$R(Q')$ is
  defined exactly as by Eiter and Gottlob~\cite{EiterGottlob95}. Then, since $R$ is a
  correct encoding of $\exists\forall$-\SAT, the projected model count~$c$
  of~$(Q, Z)$ is the projected answer sets count of~$(R(Q'), Z)$ and
  vice versa. Consequently, the proposition sustains.
\end{proof}

\longversion{
\subsection{Worst-Case Analysis of $\dpa_{\PRIM}$: Omitted proofs}

\begin{restatelemma}[prop:kfact]
\begin{lemma}
Given any fixed positive integer~$c \geq 1$ and
  functions~$f(k)\eqdef k!, g(k) \eqdef$ 
  $2^{k^{(c+1)/c}}$, where~$k$ is a non-negative integer. Then,
  $f \in \bigO{g}$.
\end{lemma}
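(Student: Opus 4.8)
The plan is to bound the factorial above by $k^k$, express this quantity as a power of two, and then reduce the claim to a standard growth comparison between a logarithm and a fractional power of $k$.

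First I would use the elementary estimate $f(k) = k! \le k^k = 2^{k\log_2 k}$, valid for every $k \ge 1$. Hence it suffices to exhibit a threshold $k_0$ such that $k\log_2 k \le k^{(c+1)/c}$ for all $k \ge k_0$, since then $f(k) \le 2^{k^{(c+1)/c}} = g(k)$ for all $k \ge k_0$, which is exactly $f \in \bigO{g}$ (the finitely many values $k < k_0$ are absorbed into the constant). Dividing through by $k$, the required inequality $k\log_2 k \le k^{1+1/c}$ is equivalent to $\log_2 k \le k^{1/c}$.

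The core of the argument is therefore to establish $\log_2 k \le k^{1/c}$ for all sufficiently large $k$. Since $c$ is a fixed positive integer, $1/c$ is a fixed positive real, and the standard fact that every logarithm grows asymptotically slower than any positive power of its argument, that is $\log_2 k = o(k^{\varepsilon})$ for each fixed $\varepsilon > 0$, applied with $\varepsilon = 1/c$ delivers the desired threshold. To make this concrete, I would substitute $x = k^{1/c}$, so that $k = x^c$ and $\log_2 k = c\log_2 x$; the inequality then reads $c\log_2 x \le x$, which holds for all large $x$ because $x/\log_2 x \to \infty$ as $x \to \infty$. Translating the threshold obtained for $x$ back into a threshold $k_0 = x^c$ on $k$ completes the reduction and hence the proof.

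I do not anticipate a genuine obstacle: the only step demanding any care is the asymptotic comparison $\log_2 k = o(k^{1/c})$, which is entirely routine. If explicit thresholds are wanted, as the surrounding text records for the exponents $4/3$, $5/4$, and $6/5$ (i.e.\ $c \in \{3,4,5\}$), one would instead solve $c\log_2 x \le x$ for the given $c$ and report the corresponding $k_0 = x^c$.
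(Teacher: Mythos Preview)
Your proof is correct and somewhat more direct than the paper's. The paper argues by induction on~$k$: writing $(k+1)! = (k+1)\cdot k!$ and $2^{(k+1)^{(c+1)/c}} = 2^{(k+1)^{1/c}}\cdot 2^{k(k+1)^{1/c}}$, it applies the induction hypothesis to the second factor (using $k(k+1)^{1/c}\ge k^{(c+1)/c}$) and is left with exactly the inequality $2^{(k+1)^{1/c}}\ge k+1$, i.e.\ $(k+1)^{1/c}\ge \log_2(k+1)$, which is the same core fact $\log_2 k \le k^{1/c}$ you isolate. Your route via the crude bound $k!\le k^k = 2^{k\log_2 k}$ sidesteps the induction scaffolding entirely and lands on that inequality in one step; the paper's induction is more mechanical but buys nothing extra, and in fact its ``simultaneous'' induction on~$c$ is superfluous since~$c$ is fixed throughout. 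Both arguments ultimately appeal to $\log_2 k = o(k^{1/c})$, so the substantive content is identical.
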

\end{restatelemma}
\begin{proof}
We proceed by simultaneous induction.\\
Base case ($k=c=1$): Obviously, $1^{2} \geq 1!$.\\
Induction hypothesis: $k! \in O(2^{k^{(c+1)/c}})$\\
Induction step ($k \rightarrow k+1$): \\We have to show that for $k\geq k_0$ for some fixed $k_0$, the following equation holds.
\begin{align*}
  2^{(k+1)^{(c+1)/c}} \geq (k+1)\cdot k!\\
  2^{(k+1)^{1/c}\cdot(k+1)} \geq (k+1)\cdot k!\\
  2^{(k+1)^{1/c}+k\cdot(k+1)^{1/c}} \geq (k+1)\cdot k!\\
  2^{(k+1)^{1/c}}\cdot 2^{k\cdot(k+1)^{1/c}} \geq (k+1)\cdot k!\\
  2^{(k+1)^{1/c}} \cdot k! \geq^{IH} (k+1)\cdot k!\\
  2^{(k+1)^{1/c}} \geq (k+1)\\
  2^{(k+1)^{1/c}} \geq 2^{\text{log}_2(k+1)}\geq (k+1)\\
  \text{ where } k\geq k_0 \text{ for some fixed } k_0 \text{ since } \text{log}_2\in O(\text{exp}(1/c))
\end{align*}
Induction step ($k \rightarrow k+1, c \rightarrow c+1$): Analogous, previous step works for any~$c$.\\
Induction step ($c \rightarrow c+1$): Analogous.
\end{proof}}

\subsection{Characterizing Extensions}

In the following, we assume~$(\prog,P)$ to be an instance of~$\PASP$. Further, let~$\mathcal{T}=(T,\chi,\tau)$
be an~$\AlgA$-TTD of~$G_\prog$ where~$T=(N,\cdot,n)$, node~$t\in N$, and~$\rho\subseteq\tau(t)$.

\begin{definition}\label{def:extensions}
  Let $\vec u$ be a row of $\rho$.

  An \emph{extension below~$t$} is a set of pairs where a pair consist
  of a node~$t'$ of the \emph{induced sub-tree~$T[t]$ rooted at~$t$} and a row~$\vec v$ of $\tau(t')$
  and the cardinality of the set equals the number of nodes in the
  sub-tree~$T[t]$. 
  
  We define the family of \emph{extensions below~$t$}
  recursively as follows.  If $t$ is of type~\leaf, then
  $\Ext_{\leq t}(\vec u) \eqdef \{\{\langle t,\vec u\rangle\}\}$;
  otherwise
  $\Ext_{\leq t}(\vec u) \eqdef \bigcup_{\vec v \in \origs(t,\vec u)}
  \big\SB$ $\{\langle t,\vec u\rangle\}\cup X_1 \cup \ldots \cup X_\ell
  \SM X_i\in\Ext_{\leq t_i}({\vec v}_{(i)})\big\SE$ %
  for the~$\ell$ children~$t_1, \ldots, t_\ell$ of~$t$.
  We extend this notation for an $\AlgS$-table~$\rho$ by
  $\Ext_{\leq t}(\rho)\eqdef \bigcup_{\vec u\in\rho} \Ext_{\leq
    t}(\vec u)$.  Further, we
  let~$\Exts \eqdef \Ext_{\leq n}(\tau(n))$ be the
  \emph{family of all extensions}. 
  
  Further, we define \emph{the local table for node}~$t$ and family~$E$ of extensions (below some node) as
  $\local(t,E)\eqdef \bigcup_{\hat\rho \in E}\{ \langle \vec{\tabval}\rangle \mid
  \langle t, \vec{\tabval}\rangle \in \hat{\rho}\}$.

\end{definition}

If we would construct all extensions below the root~$n$, it allows us
to also obtain all models of program~$\prog$.  To this end, we state the following definition.

\begin{definition}\label{def:satext}
  We define %
  the \emph{satisfiable
    extensions below~$t$} for~$\rho$ by
  \[\PExt_{\leq t}(\rho)\eqdef \bigcup_{\vec u\in\rho} \SB X \SM X
    \in \Ext_{\leq t}(\vec u), X \subseteq Y, Y \in \Exts\SE.\]
\end{definition}

\begin{observation}
$\PExt_{\leq n}(\tau(n)) = \Exts$.
\end{observation}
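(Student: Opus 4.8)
The plan is to unfold Definitions~\ref{def:extensions} and~\ref{def:satext} at the root~$n$ and observe that the ``extendability'' filter built into~$\PExt$ becomes vacuous there. The key structural fact is that $T[n]=T$ is the entire tree, so every extension below~$n$ already assigns a row to every node of~$T$ and is therefore a \emph{complete} extension; consequently each such set is itself a member of $\Exts=\Ext_{\leq n}(\tau(n))$, and the requirement ``$X\subseteq Y$ for some $Y\in\Exts$'' in Definition~\ref{def:satext} is satisfied reflexively.

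Concretely, I would prove the two inclusions separately. For $\PExt_{\leq n}(\tau(n))\subseteq\Exts$, note that by Definition~\ref{def:satext} every element $X\in\PExt_{\leq n}(\tau(n))$ satisfies $X\in\Ext_{\leq n}(\vec u)$ for some $\vec u\in\tau(n)$, and since $\Exts=\Ext_{\leq n}(\tau(n))=\bigcup_{\vec u\in\tau(n)}\Ext_{\leq n}(\vec u)$, we immediately get $X\in\Exts$. For the reverse inclusion $\Exts\subseteq\PExt_{\leq n}(\tau(n))$, take any $Y\in\Exts$; then $Y\in\Ext_{\leq n}(\vec u)$ for some $\vec u\in\tau(n)$, and choosing the witness to be $Y$ itself we have $Y\subseteq Y$ with $Y\in\Exts$, so $Y$ passes the filter of Definition~\ref{def:satext} and hence $Y\in\PExt_{\leq n}(\tau(n))$. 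Combining both inclusions yields the claimed equality.

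I expect essentially no obstacle: the statement is definitional, which is why it is phrased as an observation. The only points worth making explicit are that $\Exts$ is by definition the \emph{unrestricted} union $\Ext_{\leq n}(\tau(n))$ (so no filtering is applied to it), and that the relation ``$\subseteq$'' in Definition~\ref{def:satext} is reflexive, which supplies the witness $Y=X$. It is worth remarking that the genuine content of $\PExt$ lives only at non-root nodes $t\neq n$, where restricting to those extensions below~$t$ that can be completed to some full extension in~$\Exts$ is a real filter (it discards ``dead-end'' rows that do not survive to the root); at~$n$ this completion is the identity map, which is exactly why the filter collapses and the observation holds.
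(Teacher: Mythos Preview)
Your proposal is correct. The paper states this as an observation without proof, and your argument by unfolding Definitions~\ref{def:extensions} and~\ref{def:satext} and using reflexivity of~$\subseteq$ at the root is exactly the intended (and essentially only) verification.
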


\begin{definition}
We define the \emph{purged table mapping~$\nu$ of~$\tau$} by
$\nu(t)\eqdef \local(t,$ $\PExt_{\leq t}[\tau(t)])$ for every~$t\in N$.
\end{definition}

Next, we define an auxiliary notation that gives us a way to
reconstruct interpretations from families of extensions.

\begin{definition}\label{def:iextensions}
  Let $E$ be a family of extensions
  below~$t$. %
  We define the \emph{set~$\mathcal{I}(E)$ of interpretations} of~$E$
  by
  $\mathcal{I}(E) \eqdef \big\SB$ $\bigcup_{\langle \cdot, \vec u
    \rangle \in X} \mathcal{I}(\vec u) \mid X \in E \big\SE$
  and the set~$\mathcal{I}_P(E)$ of \emph{projected interpretations} by
  $\mathcal{I}_P(E) \eqdef \big\SB \bigcup_{\langle \cdot, \vec u \rangle \in X}
  \mathcal{I}(\vec u) \cap P \mid X \in E \big\SE$.

\end{definition}

\begin{example} %
  Consider again program~$\prog$ and TTD~$(T,\chi,\tau)$ of~$G_\prog$,
  where~$t_{14}$ is the root of~$T$, from Example~\ref{ex:sat}.
  Let~$X=\{\langle t_{13}, \langle\{b\}, \{b\}, \langle b\rangle\rangle\rangle, \langle t_{12},
  \langle\{b\}, \emptyset, \langle b\rangle\rangle\rangle,
  \langle t_{11},$
  $\langle\{b\}, \emptyset, \langle b\rangle\rangle\rangle,$
  $\langle t_{10},
  \langle\{b,e\}, \{e\}, \langle b,e\rangle\rangle\rangle,$
  $\langle t_{9},
  \langle\{e\}, \{e\}, \langle e\rangle\rangle\rangle,$
  $\langle t_{4},
  \langle\{b\}, \{b\},$ 
  $\langle b\rangle\rangle\rangle,
  \langle t_{3},$
  $\langle\{b\}, \{b\}, \langle b\rangle\rangle\rangle,
  \langle t_{1},
  \langle\emptyset, \emptyset, \langle \rangle\rangle\rangle\}$
  be an extension
  below~$t_{14}$.  Observe that~$X\in\Exts$ and that
  Figure~\ref{fig:running2} highlights those rows of tables for
  nodes~$t_{13}, t_{12}, t_{11}, t_{10}, t_{9}, t_4, t_3$ and~$t_1$ that also occur in~$X$
  (in yellow). Further, $\mathcal{I}(\{X\})=\{b,e\}$ computes the
  corresponding answer set of~$X$, and $\mathcal{I}_P(\{X\}) = \{e\}$ derives
  the projected answer sets of~$X$.  $\mathcal{I}(\Exts)$ refers to the set
  of answer sets of~$\prog$, whereas~$\mathcal{I}_P(\Exts)$ is the set
  of projected answer sets of~$\prog$.
\end{example}

\subsection{Correctness of~$\dpa_{\PRIM}$: Omitted proofs}

In the following, we assume~$\prog$ to be a head-cycle-free program. Further, let~$\mathcal{T}=(T,\chi,\tau)$
be an~$\AlgA$-TTD of~$G_\prog$ where~$T=(N,\cdot,n)$ and~$t\in N$ is a node.

We state definitions required for the correctness
proofs of our algorithm \PRIM. In the end, we only store rows that
are restricted to the bag content to maintain runtime bounds. 
Similar to related work~\cite{FichteEtAl17a}, we define the
content of our tables in two steps. First, we define the properties of
so-called \emph{$\PRIM$-solutions up to~$t$}. Second, we restrict
these solutions to~\emph{$\PRIM$-row solutions} at~$t$.

\begin{definition}\label{def:globalhcf}
Let~$\hat I\subseteq\att{t}$ be an interpretation,
$\hat{\mathcal{P}}\subseteq \hat I$ be a set of atoms and~$\hat\sigma$ be an ordering over atoms~$\hat I$.
Then, $\langle \hat I, \hat{\mathcal{P}}, \hat\sigma\rangle$ is referred to as~\emph{$\PRIM$-solution up to~$t$} if the following holds.
  \begin{enumerate}
    \item~$\hat I\models\progt{t}$,
    \item for each~$a\in\hat I\cap\attneq{t}$, we have~$a\in\hat{\mathcal{P}}$, and
    \item $a\in\hat{\mathcal{P}}$ if and only if~$a$ is proven using program~$\progt{t}$ and ordering~$\hat\sigma$.
  \end{enumerate}
\end{definition}

Next, we observe that the $\PRIM$-solutions up to~$n$ suffice to capture all the answer sets.

\begin{proposition}\label{prop:hcfglobal}
The set of~$\PRIM$-solutions up to~$n$ characterizes the set of answer sets of~$\prog$.
In particular: $\{\hat I \mid  \langle \hat I, \hat{\mathcal{P}}, \hat\sigma \rangle \text{ is a } \PRIM\text{-solution up to }n\} = \{I \mid I \text{ is an answer set of }\prog\}$.
\end{proposition}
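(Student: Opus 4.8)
The plan is to specialize Definition~\ref{def:globalhcf} to the root node~$n$ and then read the claim off directly from the Lin--Zhao characterization~\cite{LinZhao03}, lifted to head-cycle-free programs by Ben-Eliyahu and Dechter~\cite{Ben-EliyahuDechter94}. First I would unfold the relevant quantities at~$n$. Since the decomposition is nice, we have $\chi(n)=\emptyset$, hence $\att{n}=\at(\prog)$ and $\attneq{n}=\att{n}\setminus\chi(n)=\at(\prog)$; moreover $\progt{n}=\prog$ as already noted in the preliminaries. Consequently, for a $\PRIM$-solution $\langle \hat I,\hat{\mathcal{P}},\hat\sigma\rangle$ up to~$n$, the set $\hat I\subseteq\att{n}=\at(\prog)$, so condition~(2) reads $\hat I\cap\at(\prog)=\hat I\subseteq\hat{\mathcal{P}}$, which together with the standing requirement $\hat{\mathcal{P}}\subseteq\hat I$ forces $\hat{\mathcal{P}}=\hat I$. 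Then condition~(3) becomes precisely ``every atom $a\in\hat I$ is proven with respect to~$\prog$ and the ordering~$\hat\sigma$'', while condition~(1) is just $\hat I\models\prog$.

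For the inclusion from left to right, I would take any $\PRIM$-solution $\langle \hat I,\hat{\mathcal{P}},\hat\sigma\rangle$ up to~$n$. By the unfolding above, $\hat I$ is a model of~$\prog$ in which every atom is proven via~$\hat\sigma$; by the sufficient direction of the Lin--Zhao characterization (extended to head-cycle-free programs) this is exactly the condition for $\hat I$ to be an answer set, so $\hat I$ lies in the right-hand set. For the converse inclusion, I would start from an answer set~$I$ of~$\prog$. The characterization then yields an ordering~$\sigma$ of the atoms in~$I$ witnessing that every $a\in I$ is proven with respect to~$\prog$ and~$\sigma$; here the standing assumption that every atom occurs in some rule head guarantees that such a witness can indeed prove each atom of~$I$. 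Setting $\hat I\eqdef I$, $\hat{\mathcal{P}}\eqdef I$, and $\hat\sigma\eqdef\sigma$ satisfies conditions~(1)--(3) of Definition~\ref{def:globalhcf} at~$n$, so $\langle I,I,\sigma\rangle$ is a $\PRIM$-solution up to~$n$ and $I$ appears on the left-hand side.

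The step I would treat most carefully, and the only genuinely delicate point, is matching the local predicate ``$a$ is proven using $\progt{n}$'' in condition~(3) with the global ``proven'' predicate of the Lin--Zhao characterization: one must confirm that at the root these coincide (immediate from $\progt{n}=\prog$) and that the biconditional in~(3), restricted by $\hat{\mathcal{P}}\subseteq\hat I$, faithfully encodes ``\emph{all} atoms of $\hat I$ are proven'' rather than merely ``some'' (the atoms outside~$\hat I$ are never in~$\hat{\mathcal{P}}$ and the predicate is undefined for them, so the equivalence is vacuous there). The remaining work is a direct substitution of the root-level equalities $\chi(n)=\emptyset$, $\attneq{n}=\at(\prog)$, and $\progt{n}=\prog$, after which the two directions are symmetric applications of the characterization and require no separate induction over the tree decomposition.
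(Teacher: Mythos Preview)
Your proposal is correct and follows essentially the same approach as the paper: the paper's own proof is a one-sentence observation that at the root $\progt{n}=\prog$ and $\hat I=\hat{\mathcal{P}}$, so Definition~\ref{def:globalhcf} collapses to the Lin--Zhao/Ben-Eliyahu--Dechter characterization. Your version simply unpacks this more carefully, spelling out both inclusions and the role of $\chi(n)=\emptyset$ explicitly.
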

\begin{proof}
Observe that Definition~\ref{def:globalhcf} for root node~$t=n$ indeed suffices for~$\hat I$ to be a model of~$\progt{n}=\prog$,
and, moreover, every atom in~$\hat I=\hat P$ is proven in~$\prog$ by ordering~$\hat\sigma$.
\end{proof}

\begin{definition}\label{def:localhcf}
Let~$\langle \hat I, \hat{\mathcal{P}}, \hat\sigma\rangle$ be a~$\PRIM$-solution up to~$t$. Then, $\langle \hat I \cap \chi(t), \hat{\mathcal{P}} \cap \chi(t), \sigma \rangle$, where~$\sigma$ is the partial ordering of~$\hat\sigma$ only containing~$\chi(t)$, is referred to as~\emph{$\PRIM$-row solution at node~$t$}.
\end{definition}

Given a~$\PRIM$-solution~$\vec{\hat\tabval}$ up to~$t$ and a~$\PRIM$-row solution~$\vec\tabval$ at~$t$.
We say~$\vec{\hat\tabval}$ is a \emph{corresponding} $\PRIM$-solution up to~$t$ of~$\PRIM$-row solution at~$t$ if~$\vec{\hat\tabval}$ can be used
to construct~$\vec\tabval$ according to Definition~\ref{def:localhcf}.

In fact,~\emph{$\PRIM$-row solutions} at~$t$ suffice to capture all the answer sets of~$\prog$.
Before we show that, we need the following definition.

\begin{definition}
Let $t\in N$ be a node of~$\TTT$
  with~$\children(t,T) = \langle t_1, \ldots, t_\ell \rangle$.
Further, let~$\vec{\hat\tabval}=\langle \hat I, \hat{\mathcal{P}}, \hat\sigma\rangle$ be a~$\PRIM$-solution up to~$t$ 
and~$\vec{\hat v}=\langle \hat{I'}, \hat{\mathcal{P}'}, \hat{\sigma'} \rangle$ be a~$\PRIM$-solution up to
$t_i$. Then,~$\vec\tabval$ is \emph{compatible with~$\vec v$} (and vice-versa) if
	\begin{enumerate}
		\item $\hat{I'} = \hat{I}\cap \att{t_i}$
		\item $\hat{\mathcal{P}'} = \hat{\mathcal{P}}\cap \att{t_i}$
		\item $\hat{\sigma'}$ is a sub-sequence of~$\hat\sigma$ such that~$\hat\sigma$ may additionally contain atoms in~$\att{t}\setminus\att{t_i}$
	\end{enumerate}
\end{definition}

\begin{lemma}[Soundness]\label{lem:paspcorrect}
  Let $t\in N$ be a node of~$\TTT$
  with~$\children(t,T) = \langle t_1, \ldots, t_\ell \rangle$.
  Further, let $\vec v_i$ be a~$\PRIM$-row solution at~$t_i$ for~$1\leq i\leq \ell$.
  Then, each row~$\vec\tabval = \langle I, \mathcal{P}, \sigma \rangle$ in~$\tau(t)$ 
  with~$\langle \vec v_1, \ldots, \vec v_\ell \rangle \in \origa{\PRIM}(t, \vec\tabval)$ is also a~$\PRIM$-row solution at
  node~$t$. Moreover, for any corresponding~$\PRIM$-solution~${\vec{\hat\tabval}}$ up to~$t$ (of~$\vec\tabval$)
  there are corresponding \emph{compatible}~$\PRIM$-solutions~$\vec{\hat{v_i}}$ up to~$t_i$ (for~$\vec v_i$). %
\end{lemma}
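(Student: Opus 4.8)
The plan is to prove both assertions by a case distinction on $\type(t)$, reading off from the matching line of Listing~\ref{fig:prim} how $\vec u$ is produced from the child rows $\vec{v_1},\ldots,\vec{v_\ell}$, and then (a)~for the first claim, lifting witnessing $\PRIM$-solutions $\vec{\hat{v_i}}$ up to $t_i$ (which exist because each $\vec{v_i}$ is a $\PRIM$-row solution, by Definition~\ref{def:localhcf}) to a single $\PRIM$-solution $\vec{\hat u}$ up to $t$, and (b)~for the ``moreover'' part, restricting a given $\vec{\hat u}$ back to compatible $\vec{\hat{v_i}}$. The two structural facts I would lean on throughout are the program decomposition $\progt{t} = \prog_t \cup \bigcup_i \progt{t_i}$ and the connectedness of tree decompositions: a removed atom never recurs in an ancestor bag (so all its rules have been seen at removal), and $\att{t_1}\cap\att{t_2}=\chi(t)$ at a join.

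For $\type(t)=\leaf$ I use $\chi(t)=\emptyset$, hence $\progt{t}=\emptyset$, so the unique row $\langle\emptyset,\emptyset,\langle\rangle\rangle$ is witnessed by the empty solution and the second claim is vacuous. For $\type(t)=\intr$ introducing $a$, the row $\vec u=\langle J,\mathcal{P}\cup\gatherproof(J,\sigma',\prog_t),\sigma'\rangle$ arises from a child row $\langle I,\mathcal{P},\sigma\rangle$ with $J\in\{I,\MAI{I}{a}\}$, $J\models\prog_t$, and $\sigma'\in\possord(\sigma,\{a\}\cap J)$. Given $\vec{\hat{v_1}}$, I set $\hat I=\hat{I'}\cup(\{a\}\cap J)$ and insert $a$ into $\hat\sigma'$ at the slot prescribed by $\sigma'$. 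Condition~(1) of Definition~\ref{def:globalhcf} then follows from $J\models\prog_t$ (all rules of $\prog_t$ live inside $\chi(t)$) together with $\hat{I'}\models\progt{t_1}$, since $a$ occurs in no rule of $\progtneq{t}$; condition~(2) is untouched because $\attneq{t}=\attneq{t_1}$; and condition~(3) is precisely the content of $\gatherproof$, which records exactly those bag atoms becoming provable under $\sigma'$ via a rule of $\prog_t$. The removal case is the mildest: here $\att{t}=\att{t_1}$ and $\progt{t}=\progt{t_1}$, so the witnessing solution up to $t_1$ already serves as the solution up to $t$, and the side condition $a\in\mathcal{P}\cup(\{a\}\setminus I)$ is exactly condition~(2) applied to the now strictly-below atom $a$.

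The main obstacle will be $\type(t)=\join$, where $\vec u=\langle I,\mathcal{P}_1\cup\mathcal{P}_2,\sigma\rangle$ merges two child rows sharing the same $I$ and $\sigma$. I would take witnesses $\vec{\hat{v_1}},\vec{\hat{v_2}}$ and form $\hat I=\hat{I_1}\cup\hat{I_2}$ and $\hat{\mathcal{P}}=\hat{\mathcal{P}_1}\cup\hat{\mathcal{P}_2}$. Because $\att{t_1}\cap\att{t_2}=\chi(t)$ and both solutions restrict to the same bag row, $\hat{I_1}$ and $\hat{I_2}$ agree on $\chi(t)$ and are otherwise over disjoint atom sets, so $\hat I\models\progt{t}$ follows branch-wise. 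The delicate point is exhibiting a \emph{single} global ordering $\hat\sigma$ over $\hat I$ that simultaneously witnesses proofs of every atom in $\hat{\mathcal{P}_1}\cup\hat{\mathcal{P}_2}$. I would construct $\hat\sigma$ as a shuffle of $\hat\sigma_1$ and $\hat\sigma_2$ that respects their common sub-ordering $\sigma$ on $\chi(t)$, argue such a shuffle exists, and invoke monotonicity of the proof relation so that an atom proven in one branch stays proven while no spurious cross-branch proof is introduced; the latter rests on head-cycle-freeness and on the disjointness of the strictly-below atom sets.

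For the converse (``moreover'') direction at each node type I would intersect $\hat I$ and $\hat{\mathcal{P}}$ with $\att{t_i}$ and take the induced sub-sequence of $\hat\sigma$, then verify the three compatibility conditions directly from the definitions. Again only the join is nontrivial: one must check that the restricted orderings still prove exactly the atoms assigned to each branch, which is where the disjointness $\att{t_1}\cap\att{t_2}=\chi(t)$ and the fact that $\origa{\PRIM}(t,\vec u)$ forces agreement on $I$ and $\sigma$ are used once more.
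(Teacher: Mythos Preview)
Your proposal is correct and follows essentially the same approach as the paper: a case distinction on $\type(t)$ that, for each case of Listing~\ref{fig:prim}, lifts a child's witnessing $\PRIM$-solution to one at~$t$ and, for the ``moreover'' direction, restricts a given $\PRIM$-solution up to~$t$ back to the children. The paper's sketch spells out the $\intr$ case in slightly more detail (splitting into $a\notin I'$ and $a\in I'$ and writing the restricted witness explicitly as $\langle \hat{I'}, \hat{\mathcal{P}'}\setminus(\mathcal{P}'\setminus\mathcal{P}), \hat{\sigma'}\rangle$) while dismissing $\rem$ and $\join$ as ``similar''; you, conversely, flag the $\join$ case as the real work and correctly isolate the shuffle of $\hat\sigma_1$ and $\hat\sigma_2$ over their common sub-ordering~$\sigma$ on $\chi(t)$ as the only genuinely delicate step, which the paper does not spell out.
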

\begin{proof}[Proof (Sketch)]
We proceed by case distinctions.
Assume case(i):~$\type(t)=\leaf$. Then, $\langle \emptyset, \emptyset, \langle \rangle \rangle$ is a~\PRIM-row solution at~$t$. This concludes case(i).

Assume case(ii):~$\type(t)=\intr$ and~$\chi(t)\setminus\chi(t')=\{a\}$. Let~$\vec v_1=\langle I, \mathcal{P}, \sigma\rangle$ be any \PRIM-row solution at child node~$t_1$,
and~$\vec{\hat{v_1}}=\langle \hat I, \hat{\mathcal{P}}, \hat\sigma\rangle$ be any corresponding \PRIM-solution up to~$t_1$, which exists by Definition~\ref{def:localhcf}. In the following, we show that the way~\PRIM transforms
\PRIM-row solution~$\vec v_1$ at~$t_1$ to a \PRIM-row solution~$\vec\tabval=\langle I', \mathcal{P}', \sigma'\rangle$ at~$t$ is sound.
We identify several sub-cases.

Case (a): Atom~$a\not\in I'$ is set to false. Then, \PRIM constructs~$\vec\tabval$ where~$I'=I, \sigma'=\sigma$ and~$\mathcal{P}'=\mathcal{P}\cup \gatherproof(I',\sigma', \prog_t)$. Note that by construction~$I'\models \prog_t$.
Towards showing soundness, we define how to transform~$\vec{\hat{v_1}}$ into~$\vec{\hat\tabval}$ such that~$\vec{\hat\tabval}$ is indeed the corresponding~$\PRIM$-solution up to~$t$ of row~$\vec{\tabval}$ constructed by~\PRIM. To this end, we define~$\vec{\hat\tabval}$ as follows: $\vec{\hat\tabval} = \langle \hat I, \hat{\mathcal{P}} \cup  \gatherproof(I',\sigma', \prog_t), \hat\sigma\rangle$. Observe that~$\vec{\hat\tabval}$ is a~\PRIM-solution up to~$t$ according to Definition~\ref{def:globalhcf}.
Moreover, by construction and Definition~\ref{def:localhcf}, $\vec{\hat\tabval}$ is a corresponding~$\PRIM$-solution up to~$t$ of~$\hat\tabval$. 
It remains to show, that indeed for any
corresponding~$\PRIM$-solution~${\vec{\hat\tabval}}=\langle \hat {I'},
\hat{\mathcal{P}'}, \hat{\sigma'} \rangle$ up to~$t$
(of~$\vec\tabval$, there is a
corresponding~$\PRIM$-solution~$\vec{\hat{\zeta_1}}$ up to~$t_1$
(of~$\vec{{v_1}}$). %
To this end, we
define~$\vec{\hat{\zeta_1}}=\langle \hat{I'}, \hat{\mathcal{P}'}
\setminus (\mathcal{P}'\setminus\mathcal{P}), \hat{\sigma'}\rangle$
that is by construction according to Definition~\ref{def:globalhcf}
indeed a corresponding~$\PRIM$-solution up to~$t_1$
of~$\vec{\hat{v_1}}$.  This concludes case (a).

Case (b): Atom~$a\in I'$ is set to true. Conceptually, the case works analogously. %
This concludes cases (b) and (ii). 

The remaining cases for nodes~$t$ with~$\type(t)=\rem$ (slightly easier) and nodes~$t$ with~$\type(t)=\join$, 
where we need to consider \PRIM-row solutions at two different child nodes of~$t$, go through similarly.
\end{proof}

\begin{lemma}[Completeness]\label{lem:primcomplete}
  Let~$t\in N$ be node of~$\TTT$ where
  $\type(t) \neq \leaf$ and~$\children(t,T) = \langle t_1, \ldots, t_\ell \rangle$. Given a
  $\PRIM$-row solution~$\vec\tabval=\langle I, \mathcal{P}, \sigma \rangle$ at node~$t$,
  and any corresponding~$\PRIM$-solution~$\vec{\hat\tabval}$ up to~$t$ (of~$\vec\tabval$).
  Then, there exists $\vec s=\langle {v_1}, \ldots, {v_\ell}\rangle$ where ${v_i}$ is a
  $\PRIM$-row solution at~$t_i$ %
  such that~$\vec s\in\origa{\PRIM}(t,\vec\tabval)$,
  and corresponding~$\PRIM$-solution~$\vec{\hat{v_i}}$ up to~$t_i$ (of~$v_i$) that is
  compatible with~$\vec{\hat\tabval}$.
\end{lemma}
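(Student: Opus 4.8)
The plan is to prove completeness as the downward counterpart of the soundness lemma (Lemma~\ref{lem:paspcorrect}): where soundness pushes child solutions up through the single transformation that \PRIM performs at~$t$, completeness inverts that transformation, splitting the given solution at~$t$ into compatible solutions at the children. Since $\type(t)\neq\leaf$, I would argue by case distinction on $\type(t)\in\{\intr,\rem,\join\}$ and, in each case, exhibit the witnessing child row solutions~$v_i$ together with their corresponding $\PRIM$-solutions~$\vec{\hat{v_i}}$ up to~$t_i$, all obtained by restricting the given global witness~$\vec{\hat\tabval}=\langle\hat I,\hat{\mathcal{P}},\hat\sigma\rangle$ to the atoms~$\att{t_i}$ below each child. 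Concretely, I would set $\hat{I_i}\eqdef\hat I\cap\att{t_i}$, let $\hat{\sigma_i}$ be~$\hat\sigma$ restricted to the atoms of~$\att{t_i}$, and take $\hat{\mathcal{P}_i}$ to be \emph{exactly} those atoms provable from~$\progt{t_i}$ under~$\hat{\sigma_i}$, so that condition~(3) of Definition~\ref{def:globalhcf} holds by construction; then $v_i$ is the bag restriction of~$\vec{\hat{v_i}}$ in the sense of Definition~\ref{def:localhcf}.

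The first block of verification is that each~$\vec{\hat{v_i}}$ is genuinely a $\PRIM$-solution up to~$t_i$ and compatible with~$\vec{\hat\tabval}$. Compatibility conditions~(1) and~(3) are immediate, since $\hat{I_i}$ and $\hat{\sigma_i}$ are literally the restrictions of~$\hat I$ and~$\hat\sigma$. For the model condition~(1) of Definition~\ref{def:globalhcf}, I would use that every rule of~$\progt{t_i}$ has all of its atoms inside~$\att{t_i}$ (by the bag-program definition and connectedness of TDs), so restricting the model~$\hat I$ to~$\att{t_i}$ leaves the truth of each such rule unchanged. For condition~(2), I would invoke the decomposition property that an atom in~$\attneq{t_i}$ has \emph{all} of its occurrences, hence every rule that could prove it, already contained in~$\progt{t_i}$, so membership in~$\hat{\mathcal{P}}$ transfers to~$\hat{\mathcal{P}_i}$. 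The structural facts I would isolate first are: in the $\intr$ case $a\notin\att{t_1}$, whence $\att{t}=\att{t_1}\cup\{a\}$ and $\attneq{t_1}=\attneq{t}$; in the $\rem$ case $\att{t}=\att{t_1}$ and $\progt{t}=\progt{t_1}$; and in the $\join$ case $\att{t_1}\cap\att{t_2}=\chi(t)$ with $\progt{t}=\progt{t_1}\cup\progt{t_2}$ and, crucially, every rule of~$\progt{t}$ lying entirely within one branch.

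The second block is to show $\vec s=\langle v_1,\dots,v_\ell\rangle\in\origa{\PRIM}(t,\vec\tabval)$, i.e.\ that feeding the~$v_i$ back into~\PRIM reproduces~$\vec\tabval$. For $\rem$ this is immediate once the side condition $a\in\mathcal{P}_1\cup(\{a\}\setminus I_1)$ is checked, which holds because $a\in\attneq{t}$ forces, via Definition~\ref{def:globalhcf}(2), that $a$ is either false in~$\hat I$ or already proven. For $\intr$ I would take $J\eqdef I_1\cup(\{a\}\cap\hat I)=\hat I\cap\chi(t)$ and $\sigma'$ the order induced by~$\hat\sigma$ on~$\chi(t)$ (an element of $\possord(\sigma,\{a\}\cap J)$), and verify that the bag atoms freshly proven at~$t$ are exactly $\gatherproof(J,\sigma',\prog_t)$, so that $\mathcal{P}=\mathcal{P}_1\cup\gatherproof(J,\sigma',\prog_t)$ matches Line~\ref{line:primintr}. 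For $\join$ I would check the bag-restricted proven sets satisfy $\mathcal{P}=\mathcal{P}_1\cup\mathcal{P}_2$ on a common~$I$ and~$\sigma$, matching Line~\ref{line:primjoin}.

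I expect the proven-set bookkeeping to be the main obstacle, precisely the reconstruction of~$\hat{\mathcal{P}}$ from the children. The delicate point is that a bag atom may become provable only at~$t$---using the freshly introduced atom~$a$ in a rule body together with the ordering, exactly as in the step from $\mathcal{P}_{10.4}$ to $\mathcal{P}_{10.5}$ of Example~\ref{ex:sat}---so the corresponding child witness must leave it \emph{unproven} and let $\gatherproof$ (respectively the union at a $\join$) supply it. Pinning down the correspondence between ``newly proven at~$t$'' and $\gatherproof(J,\sigma',\prog_t)$, while simultaneously respecting that~$\hat{\sigma_i}$ be a subsequence of~$\hat\sigma$ and that no proving rule straddles two branches, is where the argument needs the most care; everything else follows the same local, rule-by-rule reasoning already used for soundness. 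Finally, combined with Lemma~\ref{lem:paspcorrect} and the $\leaf$ base case, this step yields by induction down~$T$ that every $\PRIM$-row solution is in fact computed, which is what completeness requires.
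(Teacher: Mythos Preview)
Your proposal takes essentially the same approach as the paper: case distinction on $\type(t)$, restrict the given $\PRIM$-solution $\vec{\hat\tabval}$ to $\att{t_i}$ (with the proven set recomputed as exactly what is provable in $\progt{t_i}$ under the restricted ordering), read off the child row via Definition~\ref{def:localhcf}, and check that feeding this back into \PRIM reproduces $\vec\tabval$. Your write-up is in fact more explicit than the paper's proof idea about the structural TD facts you need and about where the delicate step lies (the $\gatherproof$ bookkeeping at $\intr$), so nothing further is required.
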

\begin{proof}[Proof (Idea)]
Since~$\vec\tabval$ is a~\PRIM-row solution at~$t$, there is by Definition~\ref{def:localhcf} a corresponding~\PRIM-solution~$\vec{\hat\tabval}=\langle \hat I, \hat{\mathcal{P}}, \hat\sigma\rangle$ up to~$t$. 

We proceed again by case distinction. Assume that~$\type(t)=\intr$. Then we define~$\vec{\hat{v_1}}\eqdef \langle \hat I \setminus \{a\}, \hat{\mathcal{P}'}, \hat{\sigma'}\rangle$,
where~$\hat{\sigma'}$ is a sub-sequence of~$\hat\sigma$ that does not contain~$a$ and~$\hat{\mathcal P}'=\gatherproof(\hat I \setminus \{a\}, t_1, \progt{t_1})$. 
Observe that all the conditions of Definition~\ref{def:globalhcf} are met and that~$\hat{\mathcal P}'\subseteq \hat{\mathcal{P}'}$. Then, we can easily define \PRIM-row solution~$\vec{v_1}$ at~$t_1$ according to Definition~\ref{def:localhcf} by using~$\vec{\hat{v_1}}$. By construction of~$\vec{\hat{v_1}}$ and by the definition of~$\gatherproof$, we conclude that~$\vec\tabval$ can be constructed with~$\PRIM$
using~$\vec{v_1}$. Moreover, \PRIM-solution~$\vec{\hat{v_1}}$ up to~$t_1$ is indeed compatible with~$\vec{\hat\tabval}$.

Assume that~$\type(t)=\rem$. The case is slightly easier as the one above, and the remainder works similar.

Similarly, one can show the result for the remaining node with~$\type(t)=\join$, but define \PRIM-row solutions for two preceding child nodes of~$t$.
\end{proof}

We are now in the position to proof our theorem.

\begin{restatetheorem}[thm:primcorrectness]%
\begin{theorem}%
  The algorithm~$\dpa_\PRIM$ is correct. %
  More precisely, %
  the algorithm~$\dpa_\PRIM((\prog,\cdot),\TTT)$ returns
  $\PRIM$-TTD~$(T,\chi,\tau)$ such that we can decide consistency of~$\prog$ and even reconstruct the answer sets of~$\prog$:
  \begin{align*}
	&\mathcal{I}(\Ext_{\leq n}[\tau(n)])=
	\{\hat I \mid  \langle \hat I, \hat{\mathcal{P}}, \hat\sigma \rangle \text{ is a } \PRIM\text{-solution up to }n\}\\
	&=\{I \mid I \in \ta{\at(\prog)}, I \text{ is an answer set of }\prog\}.
  \end{align*}
\end{theorem}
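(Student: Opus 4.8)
The plan is to split the displayed chain of equalities and treat the two equal signs separately. The second equality, that the interpretation parts of the $\PRIM$-solutions up to the root~$n$ are exactly the answer sets of~$\prog$, is precisely Proposition~\ref{prop:hcfglobal}, so no work remains there. All effort goes into the first equality, $\mathcal{I}(\Ext_{\leq n}[\tau(n)])=\{\hat I \mid \langle \hat I, \hat{\mathcal{P}}, \hat\sigma\rangle\text{ is a }\PRIM\text{-solution up to }n\}$, which links the syntactic table computation of $\dpa_\PRIM$ to the semantic notion of Definition~\ref{def:globalhcf}. I would prove it by structural induction over the nice tree decomposition in post-order, establishing the stronger node-local invariant that, for every node~$t$, the rows of~$\tau(t)$ are exactly the $\PRIM$-row solutions at~$t$ (Definition~\ref{def:localhcf}) and that $\mathcal{I}(\Ext_{\leq t}[\tau(t)])$ equals the interpretation parts of the $\PRIM$-solutions up to~$t$.

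For the soundness inclusion~($\subseteq$) I would argue inductively. The base case is the $\leaf$ node, where $\chi(t)=\emptyset$ forces $\tau(t)=\{\langle\emptyset,\emptyset,\langle\rangle\rangle\}$ and the only $\PRIM$-solution up to~$t$ is empty. For the step at a node~$t$ with children $t_1,\ldots,t_\ell$, any $\vec u\in\tau(t)$ has, by construction of~$\PRIM$, an origin tuple $\vec s=\langle v_1,\ldots,v_\ell\rangle$ with $v_i\in\tau(t_i)$; by the induction hypothesis each $v_i$ is a $\PRIM$-row solution at~$t_i$, so Lemma~\ref{lem:paspcorrect} gives that $\vec u$ is a $\PRIM$-row solution at~$t$ with compatible $\PRIM$-solutions up to each~$t_i$. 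Unfolding the recursion of Definition~\ref{def:extensions}, an extension $X$ below~$t$ glues these compatible child solutions, and the compatibility conditions ensure that the union $\bigcup_{\langle\cdot,\vec w\rangle\in X}\mathcal{I}(\vec w)$ from Definition~\ref{def:iextensions} is well defined and meets the three conditions of a $\PRIM$-solution up to~$t$.

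For the completeness inclusion~($\supseteq$) I would run the induction in the other direction. Given a $\PRIM$-solution $\vec{\hat u}$ up to~$t$, its restriction to $\chi(t)$ is a $\PRIM$-row solution~$\vec u$ at~$t$ by Definition~\ref{def:localhcf}; Lemma~\ref{lem:primcomplete} then supplies an origin tuple $\vec s\in\origa{\PRIM}(t,\vec u)$ of $\PRIM$-row solutions at the children together with pairwise compatible $\PRIM$-solutions up to the~$t_i$ that glue back to~$\vec{\hat u}$. By the induction hypothesis the children's row solutions lie in the respective~$\tau(t_i)$, and since $\vec s\in\origa{\PRIM}(t,\vec u)$ witnesses that $\PRIM$ produces~$\vec u$ from them, we get $\vec u\in\tau(t)$, and the glued extension witnesses $\hat I\in\mathcal{I}(\Ext_{\leq t}[\tau(t)])$. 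Setting $t=n$ and invoking Proposition~\ref{prop:hcfglobal} yields equality of all three sets. The consistency claim then follows at once: the nice root bag is empty, so $\tau(n)\subseteq\{\langle\emptyset,\emptyset,\langle\rangle\rangle\}$, whence $\tau(n)$ is nonempty, i.e.\ contains $\langle\emptyset,\emptyset,\langle\rangle\rangle$, exactly when a $\PRIM$-solution up to~$n$, and hence an answer set, exists.

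The main obstacle I expect is the interpretation-level gluing rather than the row-level bookkeeping, which the two lemmas already dispatch. Concretely, I must verify that following origins all the way down to the leaves via Definition~\ref{def:extensions} assembles the partial interpretations, proven-sets, and orderings into one globally consistent $\PRIM$-solution over~$\att{n}$: the interpretation parts must agree on shared bag atoms, the proven-sets must accumulate so that condition~(3) of Definition~\ref{def:globalhcf} holds globally, and the local sequences must embed into a common super-sequence~$\hat\sigma$. The delicate point is the $\join$ node, where $\PRIM$ merges proven-sets as $\mathcal{P}_1\cup\mathcal{P}_2$; I would check that this union faithfully encodes ``proven in at least one branch,'' which is the correct semantics precisely because the Lin--Zhao characterization requires only a single proving rule per atom, so combining the branches neither loses a proof nor introduces spurious ones.
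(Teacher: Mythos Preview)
Your proposal is correct and follows essentially the same approach as the paper's own proof: both argue by induction along the nice tree decomposition, invoking Lemma~\ref{lem:paspcorrect} for soundness and Lemma~\ref{lem:primcomplete} for completeness of the row computation, and then apply Proposition~\ref{prop:hcfglobal} at the root to connect $\PRIM$-solutions up to~$n$ with answer sets. Your treatment is in fact more explicit than the paper's sketch about the interpretation-level gluing (in particular the $\join$ case and the assembly of a common super-sequence~$\hat\sigma$), which the paper leaves implicit in the phrase ``compatible $\PRIM$-solutions''; this extra care is appropriate and does not deviate from the intended argument.
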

\end{restatetheorem}
\begin{proof}[Proof (Idea)]
  By Lemma~\ref{lem:paspcorrect} we have soundness for every
  node~$t \in N$ and hence only valid rows as output of table
  algorithm~$\PRIM$ when traversing the tree decomposition in
  post-order up to the root~$n$.
  By Proposition~\ref{prop:hcfglobal} we then know that we can reconstruct answer sets
  given~\PRIM-solutions up to~$n$.
  In more detail, we proceed by means of induction. 
  For the induction base we only store~\PRIM-row solutions~$\vec\tabval\in\tau(t)$ at a certain node~$t$ starting at the leaves.
  For nodes~$t$ with~$\type(t)=\leaf$, obviously there is only the following (one)~\PRIM-row solution at~$t$: $\vec\tabval=\langle \emptyset, \emptyset, \langle \rangle\rangle$.
  
  Then, by Lemma~\ref{lem:paspcorrect} we establish the induction step, since algorithm~\PRIM only creates~\PRIM-row solutions at every node~$t$,
  assuming that it gets~\PRIM-row solutions at~$t_i$ for every child node~$t_i$ of~$t$.
  As a result, if there is no answer set of~$\prog$, the table~$\tau(n)$ is empty.
  On the other hand, if there is an answer set of~$\prog$, we obtain a~\PRIM-row solution~$\vec\tabval$ at root node~$n$, 
  for which by Definition~\ref{def:localhcf} a corresponding~\PRIM-solution~$\vec{\hat\tabval}$ up to~$n$ exists.
  Further, in the induction step we ensured that~\PRIM-solutions up to~$t$ for every~\PRIM-row solution at~$t$ for every node~$t\in N$ can be found that are compatible to~$\vec{\hat\tabval}$. In other words, by keeping track of corresponding origin~\PRIM-row solutions of~$\vec\tabval$ we can combine interpretation positions~$\mathcal{I}(\cdot)$ of rows by following origin rows top-down in order to reconstruct only valid answer set.

  Next, we establish completeness by induction starting from the
  root~$n$. Let therefore,
  $\hat\rho=\langle \hat I, \hat{\mathcal{P}}, \hat\sigma \rangle$ be
  the~\PRIM-solution up to node~$n$. If~$\hat\rho$ does not exist for
  node~$n$, there is by definition no answer set
  of~$\prog$. Otherwise, by Definition~\ref{def:localhcf}, we know
  that for the root~$n$ we can construct \PROJ-row solutions at~$n$ of
  the form~$\rho=\langle\emptyset, \emptyset, \langle \rangle\rangle$
  for~$\hat\rho$.  We already established the induction step in
  Lemma~\ref{lem:primcomplete} using~$\rho$ and~$\hat\rho$. As a
  consequence, we can reconstruct exactly \emph{all the answer sets}
  of~$\prog$ by following origin rows (see
  Definition of~$\orig$) back to the leaves and combining
  interpretation parts~$\mathcal{I}(\cdot)$, accordingly.
  Hence, we obtain some (corresponding) rows for every
  node~$t$. Finally, we stop at the leaves.

  In consequence, we have shown both soundness and completeness. As a
  result, Theorem~\ref{thm:primcorrectness} is sustains.
\end{proof}

\begin{corollary}\label{cor:primcorrectness}
  Algorithm~$\dpa_\PRIM((\prog,\cdot),\TTT)$ returns
  $\PRIM$-TTD~$(T,\chi,\tau)$ such that:
  \begin{align*}
    &\mathcal{I}(\PExt_{\leq t}[\tau(t)])\\
    &=\{\hat I \mid  \langle \hat I, \hat{\mathcal{P}}, \hat\sigma \rangle \text{ is a } \PRIM\text{-solution up to }t, \text{ there is answer set }\\
    &\quad\;\;\, I' \supseteq \hat I \text{ of } \prog \text{ such that }
      I' \subseteq I \cup (\at(\prog) \setminus \att{t})\}\\
    &=\{I \mid I \in \ta{\att{t}}, I \models \prog_{\leq t}, \text{ there is an answer set }\\
    &\quad\;\;\, I' \supseteq I \text{ of } \prog \text{ such that } I'\subseteq I \cup (\at(\prog) \setminus \att{t})\}.
  \end{align*}
\end{corollary}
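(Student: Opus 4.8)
The plan is to derive the corollary from Theorem~\ref{thm:primcorrectness} together with the soundness and completeness lemmas (Lemmas~\ref{lem:paspcorrect} and~\ref{lem:primcomplete}), by first lifting the extension--solution correspondence from the root to an arbitrary node~$t$ and then folding in the ``satisfiability'' restriction encoded in~$\PExt$.

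First I would establish the node-$t$ analogue of the first line of Theorem~\ref{thm:primcorrectness}, namely that $\mathcal{I}(\Ext_{\leq t}[\tau(t)])$ equals the set of first components~$\hat I$ of $\PRIM$-solutions up to~$t$. This is exactly what the two lemmas give when read as an induction over the induced subtree~$T[t]$: soundness (Lemma~\ref{lem:paspcorrect}) guarantees that every row of~$\tau(t)$ together with any extension below it reconstructs, via~$\mathcal{I}(\cdot)$, a $\PRIM$-solution up to~$t$, while completeness (Lemma~\ref{lem:primcomplete}) guarantees conversely that every $\PRIM$-solution up to~$t$ arises this way by following origins down to the leaves. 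The proof of Theorem~\ref{thm:primcorrectness} already carries out this induction for $t=n$, and nothing in it is specific to the root, so it transfers to every node.

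Second I would cut~$\Ext$ down to~$\PExt$. By Definition~\ref{def:satext}, a member~$X$ of $\PExt_{\leq t}(\tau(t))$ is a sub-extension $X\subseteq Y$ of some full extension $Y\in\Exts$. By Theorem~\ref{thm:primcorrectness} (via Proposition~\ref{prop:hcfglobal}), $\mathcal{I}(\{Y\})$ is an answer set~$I'$ of~$\prog$, and the key bookkeeping point is that the part of~$Y$ lying in~$T[t]$ is precisely such an~$X$ with $\mathcal{I}(\{X\})=I'\cap\att{t}$. This holds because the interpretation part stored at every node of~$T[t]$ is contained in that node's bag, hence in~$\att{t}$, while the correspondence ensures the row at each node~$t'$ of~$T[t]$ is exactly $I'\cap\chi(t')$, so $\mathcal{I}(\{X\})=\bigcup_{t'\in T[t]}(I'\cap\chi(t'))=I'\cap\att{t}$. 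Writing $\hat I=\mathcal{I}(\{X\})$ then gives $I'\supseteq\hat I$ and $I'\subseteq\hat I\cup(\at(\prog)\setminus\att{t})$, i.e.\ the answer-set side condition of the first displayed set; the converse inclusion follows symmetrically by restricting to~$T[t]$ a full extension witnessing a given answer set. Combined with the first step, this yields the first equality.

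Third, for the second equality I would compare the two side conditions. The inclusion ``$\subseteq$'' is immediate: if $\langle\hat I,\hat{\mathcal{P}},\hat\sigma\rangle$ is a $\PRIM$-solution up to~$t$ then $\hat I\subseteq\att{t}$ and $\hat I\models\progt{t}$ by Definition~\ref{def:globalhcf}, so $\hat I$ lies in the second set under the same answer-set condition. For ``$\supseteq$'' I would start from an $I\subseteq\att{t}$ with $I\models\progt{t}$ and an answer set $I'\supseteq I$ satisfying $I'\cap\att{t}=I$, and construct the missing components: take $\hat\sigma$ to be the restriction to~$\att{t}$ of an ordering proving~$I'$, and let $\hat{\mathcal{P}}$ consist of the atoms of~$I$ provable in~$\progt{t}$ along~$\hat\sigma$. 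The only nontrivial point is condition~(3) of Definition~\ref{def:globalhcf} for atoms $a\in I\cap\attneq{t}$: by the connectedness property of tree decompositions, every rule~$r$ with $a\in\at(r)$ occurs together with all of~$\at(r)$ in a bag below~$t$, hence $r\in\progt{t}$ and its positive body lies in~$\att{t}$, so the very rule and ordering that prove~$a$ inside the answer set~$I'$ already live entirely below~$t$, giving $a\in\hat{\mathcal{P}}$. I expect this provability-transfer argument to be the main obstacle, since it is where the global answer-set witness must be localized to~$\progt{t}$ using the structure of the decomposition; the remaining verifications are routine unfolding of the definitions assembled in the first two steps.
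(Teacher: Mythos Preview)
Your proposal is correct and follows essentially the same approach as the paper: the paper's proof is a one-liner stating that the corollary ``follows from the proof of Theorem~\ref{thm:primcorrectness} applied up to node~$t$ and by considering only rows that are involved in reconstructing answer sets (see Definition~\ref{def:satext}),'' which is precisely your first and second steps. Your third step, establishing the second displayed equality directly via Definition~\ref{def:globalhcf} and the connectedness property of tree decompositions, makes explicit what the paper leaves implicit; this added detail is sound and arguably clarifies where the answer-set side condition does the work.
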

\begin{proof}
The corollary follows from the proof of Theorem~\ref{thm:primcorrectness} applied up to node~$t$ and by considering only rows that are involved in reconstructing answer sets (see Definition~\ref{def:satext}).
\end{proof}

\subsection{Correctness of~$\mdpa{\AlgA}$: Omitted proofs}

In the following, we assume~$(\prog, P)$ to be an instance of~$\PASP$. Further, let~$\mathcal{T}=(T,\chi,\tau)$
be an~$\AlgA$-TTD of~$G_\prog$ where~$T=(N,\cdot,n)$, node~$t\in N$, and~$\rho\subseteq\tau(t)$.

\begin{definition}\label{def:asplocalsol}
Table algorithm $\AlgA$ is referred to as \emph{admissible}, if for each row $\vec{u_{t.i}}\in\tau(t)$ of any node~$t\in T$ the following holds:
  \begin{enumerate}
    \item $\mathcal{I}(\vec{\tabval_{t.i}}) \subseteq \chi(t)$
    \item For any $\langle\vec v_1, \ldots, \vec v_\ell\rangle \in \origse{\AlgA}(t,\vec{\tabval_{t.i}})$ where $1 \leq j \leq \ell$ and~$\children(t,T) = \langle t_1, \ldots, t_\ell \rangle$, we have $\mathcal{I}(\vec v_j) \cap \chi(t_j) \cap \chi(t) = \mathcal{I}(\vec u_{t.i}) \cap \chi(t_j) \cap \chi(t)$
    \item $\mathcal{I}(\PExt_{\leq t}[\tau(t)]) = \{I \mid I \in
    \ta{\att{t}}, I \models \prog_{\leq t}, \text{ there is an answer set } I \cup (\at(\prog) \setminus \att{t}) \supseteq  I' \supseteq I \text{ of } \prog\}$
    \item If~$t=n$ or~$\type(t)=\leaf$: $\Card{\local(t,\PExt_{\leq t}[\tau(t)])} \leq 1$
  \end{enumerate}
\end{definition}

Note that the last condition is not a hard restriction, since the bags of the leaf and root nodes of a tree decomposition are defined to be empty anyway. However, it rather serves as technical trick simplifying proofs.

\begin{observation}
Table algorithms~$\PRIM$ and~$\algo{PRIM}$ are admissible.
\end{observation}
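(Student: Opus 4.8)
The plan is to verify each of the four conditions of Definition~\ref{def:asplocalsol} separately for $\PRIM$ and for the disjunctive table algorithm $\algo{PRIM}$. The key observation is that conditions~(1), (2) and (4) are purely structural---they only concern where interpretation parts live and how a row relates to its origins---so they follow by a routine induction on the node types that inspects each branch of the respective listing. Condition~(3), by contrast, is exactly the semantic-correctness guarantee, which for $\PRIM$ has already been packaged as Corollary~\ref{cor:primcorrectness}, and for $\algo{PRIM}$ is the established correctness of the disjunctive dynamic-programming algorithm~\cite{FichteEtAl17a,JaklPichlerWoltran09}.

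First I would treat condition~(1) for $\PRIM$ by induction following Listing~\ref{fig:prim}. At a $\leaf$ node the only interpretation part is $\emptyset\subseteq\chi(t)$; at an $\intr$ node introducing $a$ we have $J\in\{I,\MAI{I}{a}\}$ with $I\subseteq\chi(t')\subseteq\chi(t)$ and $a\in\chi(t)$, hence $J\subseteq\chi(t)$; at a $\rem$ node the interpretation is $\MAR{I}{a}\subseteq\chi(t')\setminus\{a\}=\chi(t)$; at a $\join$ node the surviving interpretation $I$ equals both children's and $\chi(t)=\chi(t')=\chi(t'')$. For condition~(2) I would check, case by case, that the interpretation of each originating row agrees with $\mathcal{I}(\vec u_{t.i})$ on $\chi(t_j)\cap\chi(t)$: at $\intr$ the difference between $J$ and the child $I$ is confined to the introduced atom $a\notin\chi(t')$; at $\rem$ the new row is literally the child's row restricted to the smaller bag; and at $\join$ all three rows share the same $I$. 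For $\algo{PRIM}$ the same two conditions are verified identically from its listing, since it too stores interpretations restricted to the bag and introduces or removes exactly one atom per node.

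Condition~(3) I would dispatch by direct appeal to the correctness statements: for $\PRIM$ it is precisely the set equality of Corollary~\ref{cor:primcorrectness}, and for $\algo{PRIM}$ it is the corresponding soundness-and-completeness result for disjunctive ASP, reformulated in our $\PExt$/$\mathcal{I}$ notation. Condition~(4) I would obtain from niceness of the tree decomposition: both the root $n$ and every leaf have an empty bag, so by condition~(1) every surviving row at such a node has interpretation part $\emptyset$; since $\PRIM$ (resp.\ $\algo{PRIM}$) then admits only the single trivial row $\langle\emptyset,\emptyset,\langle\rangle\rangle$ (resp.\ its analogue), the local table $\local(t,\PExt_{\leq t}[\tau(t)])$ contains at most one element.

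The main obstacle will be matching condition~(3) for $\algo{PRIM}$ to the exact phrasing of Definition~\ref{def:asplocalsol}: the prior correctness results are stated for plain counting rather than in terms of the satisfiable extensions $\PExt$ and the interpretation operator $\mathcal{I}$, so some care is needed to translate ``the table at $t$ captures all answer-set witnesses of $\progt{t}$ that extend to a global answer set of $\prog$'' into the displayed set equality, as was done for $\PRIM$ via Corollary~\ref{cor:primcorrectness}. The one genuinely delicate structural point is condition~(2) at $\intr$ nodes, where one must confirm that introducing $a$ never disturbs agreement on the child bag $\chi(t')$---which holds precisely because $a\notin\chi(t')$ by the definition of an introduce node.
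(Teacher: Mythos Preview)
Your proposal is correct and takes essentially the same approach as the paper: conditions~(1), (2), and~(4) are handled ``by construction of the table algorithms and by properties of tree decompositions,'' while condition~(3) is discharged via Corollary~\ref{cor:primcorrectness} for $\PRIM$ and the cited correctness result~\cite{FichteEtAl17a} for $\algo{PRIM}$. You simply spell out in detail the case analysis that the paper leaves implicit under ``obviously.''
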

\begin{proof}
  Obviously, Conditions 1, 2, and 4 hold by construction of the table algorithms and by properties auf tree decompositions. For condition 3, we have to check for correctness and completeness, which has been shown~\cite{FichteEtAl17a} for algorithm~$\algo{PRIM}$. For~$\PRIM$, see Theorem~\ref{thm:primcorrectness} and Corollary~\ref{cor:primcorrectness}.
\end{proof}

In the following, we assume that whenever~$\AlgA$ occurs, $\AlgA$ is an admissible table algorithm.

\begin{proposition}\label{prop:sat}
$\mathcal{I}(\PExt_{\leq n}[\tau(n)]) = \mathcal{I}(\Exts) = \{I \mid I \in
    \ta{\at(\prog)}, I \text{ is an answer }$ $\text{ set of } \prog\}.$
\end{proposition}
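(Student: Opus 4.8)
The plan is to unravel the definitions at the root node and invoke the admissibility of $\AlgA$ (Definition~\ref{def:asplocalsol}), so that the statement reduces almost entirely to bookkeeping. First I would dispose of the first equality: by the observation that $\PExt_{\leq n}(\tau(n)) = \Exts$, applying the map $\mathcal{I}(\cdot)$ to both sides immediately yields $\mathcal{I}(\PExt_{\leq n}[\tau(n)]) = \mathcal{I}(\Exts)$. Hence it only remains to identify this common set with the set of answer sets of~$\prog$.

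For the second equality, I would specialize Condition~3 of admissibility (Definition~\ref{def:asplocalsol}) to the root node $t = n$. That condition asserts that $\mathcal{I}(\PExt_{\leq t}[\tau(t)])$ equals the set of $I \in \ta{\att{t}}$ with $I \models \progt{t}$ that extend to an answer set $I' \supseteq I$ of $\prog$ satisfying $I' \subseteq I \cup (\at(\prog) \setminus \att{t})$. The crucial root-node facts are that $\att{n} = \at(\prog)$, since the bags of a tree decomposition cover all vertices of~$G_\prog$ and the vertex set of~$G_\prog$ is exactly $\at(\prog)$, and that $\progt{n} = \prog$, as recorded earlier in the preliminaries.

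Substituting $\att{n} = \at(\prog)$, the set $\at(\prog) \setminus \att{n}$ becomes empty, so the extension constraint collapses: the requirement $I \subseteq I' \subseteq I \cup \emptyset = I$ forces $I' = I$. Thus the defining condition simplifies to ``$I$ is an answer set of~$\prog$'' (with $I \models \prog$ then subsumed), giving $\mathcal{I}(\PExt_{\leq n}[\tau(n)]) = \SB I \SM I \in \ta{\at(\prog)}, I \text{ is an answer set of } \prog\SE$ and completing the chain of equalities.

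I do not expect any genuine obstacle here: the argument is purely a matter of carefully verifying the three root-node simplifications ($\att{n} = \at(\prog)$, $\progt{n} = \prog$, and the collapse $I' = I$). The substantive work has already been discharged in proving that the chosen table algorithm $\AlgA$ is admissible, which is precisely where the correctness and completeness of the dynamic programming traversal were established.
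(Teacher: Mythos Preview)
Your proposal is correct and follows exactly the approach the paper takes: its proof is the one-liner ``Fill in Definition~\ref{def:asplocalsol} with root~$n$,'' and you have simply unpacked that instruction by instantiating Condition~3 of admissibility at~$n$ and using the preceding observation $\PExt_{\leq n}(\tau(n)) = \Exts$ for the first equality. The root-node simplifications you identify ($\att{n} = \at(\prog)$, $\progt{n} = \prog$, and the collapse $I' = I$) are precisely what the paper leaves implicit.
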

\begin{proof}
  Fill in Definition~\ref{def:asplocalsol} with root~$n$ of $\AlgA$-TTD ${\cal T}$.
\end{proof}

The following definition is key for the correctness proof, since later we show that these are equivalent with the result of~$\dpa_\PROJ$ using purged table mapping~$\nu$.

\begin{definition}\label{def:pmc}
  The \emph{projected answer sets count} $\pmc_{\leq t}(\rho)$ of
  $\rho$ below~$t$ is the size of the union over projected
  interpretations of the satisfiable extensions of~$\rho$ below~$t$,
  formally,
  $$\pmc_{\leq t}(\rho) \eqdef \Card{\bigcup_{\vec u\in\rho}
    \mathcal{I}_P(\PExt_{\leq t}(\{\vec u\}))}$$.

  The \emph{intersection projected answer sets count}
  $\ipmc_{\leq t}(\rho)$ of $\rho$ below~$t$ is the size of the
  intersection over projected interpretations of the satisfiable
  extensions of~$\rho$ below~$t$,~i.e.,
  $$\ipmc_{\leq t}(\rho) \eqdef \Card{\bigcap_{\vec u\in\rho}
    \mathcal{I}_P(\PExt_{\leq t}(\{\vec u\}))}$$.
\end{definition}

In the following, we state definitions required for the correctness
proofs of our algorithm \PROJ.  In the end, we only store rows that
are restricted to the bag content to maintain runtime bounds. 
We define the
content of our tables in two steps. First, we define the properties of
so-called \emph{$\PROJ$-solutions up to~$t$}. Second, we restrict
these solutions to~\emph{$\PROJ$-row solutions} at~$t$.

\begin{definition}\label{def:globalsol}
  Let~$\emptyset \subsetneq \rho \subseteq \tau(t)$ be a
  table with $\rho \in \subbuckets_P(\tau(t))$.
  We define a \emph{${\PROJ}$-solution up to~$t$} to be the sequence
  $\langle \hat {\rho}\rangle = \langle\PExt_{\leq t}(\rho)\rangle$.
\end{definition}

Before we present equivalence results between~$\ipmc_{\leq t}(\ldots)$
and the recursive version~$\ipmc(t, \ldots)$
used during the computation of
$\dpa_\PROJ$, recall that~$\ipmc_{\leq t}$ and~$\pmc_{\leq t}$
(Definition~\ref{def:pmc}) are key to compute the projected answer sets
count. The following corollary states that computing $\ipmc_{\leq n}$
at the root~$n$ actually suffices to compute~$\pmc_{\leq n}$, which is
in fact the projected answer sets count of the input program.

\begin{corollary}\label{cor:psat}
  \begin{align*}
    &\ipmc_{\leq n}(\local(n,\PExt_{\leq n}[\tau(n)]))\\
 =& \pmc_{\leq n}(\local(n,\PExt_{\leq n}[\tau(n)]))\\
    =& \Card{\mathcal{I}_P(\PExt_{\leq n}[\tau(n)])}\\
    =& \Card{\mathcal{I}_P(\Exts)}\\
    =& \,|\{J \cap P
       \mid J \in \ta{\at(\prog)},\\
    & J \text{ is an answer set of } \prog\}|
  \end{align*}
\end{corollary}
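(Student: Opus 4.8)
The plan is to verify the displayed chain of equalities line by line, unfolding the definitions of $\local$, $\PExt$, $\pmc_{\leq n}$, $\ipmc_{\leq n}$, and $\mathcal{I}_P$, and invoking Proposition~\ref{prop:sat} only at the very last step. The one structural fact driving the whole argument is that we use \emph{nice} tree decompositions, so the root~$n$ has empty bag $\chi(n)=\emptyset$. By admissibility (Definition~\ref{def:asplocalsol}), condition~1 forces $\mathcal{I}(\vec u)\subseteq\chi(n)=\emptyset$ for every $\vec u\in\tau(n)$, and condition~4 guarantees that $\rho^{*}\eqdef\local(n,\PExt_{\leq n}[\tau(n)])$ contains at most one row. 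I would first dispose of the degenerate case: if $\prog$ has no answer set, then $\mathcal{I}(\Exts)=\emptyset$ by Proposition~\ref{prop:sat}, hence $\Exts=\emptyset=\rho^{*}$, and all five quantities evaluate to $0$ (reading the empty intersection in $\ipmc_{\leq n}(\emptyset)$ as $0$, consistent with the recursion used in $\dpa_\PROJ$). For the remainder I assume $\rho^{*}=\{\vec u\}$ is a singleton.

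For the first equality I would observe that, since $\rho^{*}$ is a singleton, the intersection in $\ipmc_{\leq n}$ and the union in $\pmc_{\leq n}$ (Definition~\ref{def:pmc}) both range over the single index $\vec u$ and therefore coincide; both equal $\Card{\mathcal{I}_P(\PExt_{\leq n}(\{\vec u\}))}$.

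Next I would show $\PExt_{\leq n}(\{\vec u\})=\Exts$, which handles the jump to lines three and four at once. By the Observation $\PExt_{\leq n}[\tau(n)]=\Exts$, and since condition~4 makes $\rho^{*}=\local(n,\Exts)=\{\langle\vec u\rangle\}$ a singleton, every extension in $\Exts$ carries the same root entry $\langle n,\vec u\rangle$, because $\local$ collects precisely these root rows. As extensions below~$n$ span the entire tree $T$, the containment clause $X\subseteq Y$ in Definition~\ref{def:satext} forces $X=Y$, whence $\PExt_{\leq n}(\{\vec u\})=\Ext_{\leq n}(\vec u)\cap\Exts=\Exts$. Applying $\mathcal{I}_P$ and taking cardinalities then identifies $\Card{\mathcal{I}_P(\PExt_{\leq n}(\{\vec u\}))}$ with $\Card{\mathcal{I}_P(\PExt_{\leq n}[\tau(n)])}$ (line three) and with $\Card{\mathcal{I}_P(\Exts)}$ (line four).

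The final equality is where Proposition~\ref{prop:sat} enters. Unfolding Definition~\ref{def:iextensions} shows that for any family $E$ of extensions one has $\mathcal{I}_P(E)=\SB I\cap P \SM I\in\mathcal{I}(E)\SE$, since projecting onto $P$ commutes with the union $\bigcup_{\langle\cdot,\vec u\rangle\in X}\mathcal{I}(\vec u)$ defining each member. Taking $E=\Exts$ and substituting $\mathcal{I}(\Exts)=\SB I \SM I\text{ is an answer set of }\prog\SE$ from Proposition~\ref{prop:sat} yields $\mathcal{I}_P(\Exts)=\SB J\cap P \SM J\in\ta{\at(\prog)},\,J\text{ is an answer set of }\prog\SE$; taking cardinalities closes the chain. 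I expect the only genuinely delicate points to be the singleton reduction at the root together with the degenerate inconsistent case; once $\rho^{*}$ is known to be a single row whose satisfiable extensions are exactly $\Exts$, the rest is a mechanical chaining of the definitions of $\local$, $\PExt$, $\mathcal{I}_P$, and the Observation.
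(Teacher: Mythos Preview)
Your proposal is correct and follows essentially the same approach as the paper: the paper's proof is a two-line remark that the corollary follows from Proposition~\ref{prop:sat} together with the fact that $\Card{\local(n,\PExt_{\leq n}[\tau(n)])}\leq 1$ at the root by Definition~\ref{def:asplocalsol}, and your argument unpacks exactly these two ingredients (the singleton reduction making $\ipmc_{\leq n}=\pmc_{\leq n}$, then the Observation $\PExt_{\leq n}[\tau(n)]=\Exts$, then Proposition~\ref{prop:sat}) in full detail. Your explicit treatment of the degenerate empty case and the care taken to justify $\PExt_{\leq n}(\{\vec u\})=\Exts$ go beyond what the paper writes but do not diverge from its intended route.
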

\begin{proof}
  The corollary immediately follows from Proposition~\ref{prop:sat}
  and since the cardinality of $\local(n,$ $\PExt_{\leq n}[\tau(n)])$ is at
  most one at root~$n$, by Definition~\ref{def:asplocalsol}.
\end{proof}

The following lemma establishes that the \PROJ-solutions up to
root~$n$ of a given tree decomposition solve the \PASP problem.

\begin{lemma}\label{lem:global}
  The
  value~$c = \sum_{\langle\hat{\rho}\rangle\text{ is a \PROJ-solution
      up to } n}\Card{\mathcal{I}_P(\hat{\rho})}$ corresponds to the
  projected answer sets count of~$\prog$ with respect to the set~$P$ of
  projection atoms.
\end{lemma}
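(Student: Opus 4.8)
The plan is to exploit the very special structure of the root node. Since we work throughout with nice tree decompositions, the root bag is empty, $\chi(n)=\emptyset$. First I would observe that every row $\langle I,\mathcal{P},\sigma\rangle\in\tau(n)$ must satisfy $I\subseteq\chi(n)=\emptyset$, so $I=\emptyset$, $\mathcal{P}\subseteq I=\emptyset$, and $\sigma=\langle\rangle$; consequently $\tau(n)$ contains at most the single row $\langle\emptyset,\emptyset,\langle\rangle\rangle$. This collapses the combinatorics that $\subbuckets_P$ would otherwise generate.

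Next I would enumerate the $\PROJ$-solutions up to $n$. If $\prog$ is inconsistent, then by correctness of $\dpa_{\AlgA}$ (Theorem~\ref{thm:primcorrectness}) we have $\tau(n)=\emptyset$, hence $\subbuckets_P(\tau(n))=\emptyset$ and there is no $\PROJ$-solution; the empty sum gives $c=0$, matching the projected answer sets count since $\prog$ has no answer set. Otherwise $\tau(n)=\{\langle\emptyset,\emptyset,\langle\rangle\rangle\}$, so $\buckets_P(\tau(n))$ is a single equivalence class (the lone row trivially forms one $\bucket$-class, projecting to $\emptyset\cap P=\emptyset$), and therefore $\subbuckets_P(\tau(n))=\{\tau(n)\}$. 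Hence there is exactly one $\PROJ$-solution up to $n$, namely $\langle\hat\rho\rangle=\langle\PExt_{\leq n}(\tau(n))\rangle$, and the sum defining $c$ reduces to the single term $\Card{\mathcal{I}_P(\hat\rho)}$.

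Finally I would identify this term with the projected answer sets count. Using the observation that $\PExt_{\leq n}(\tau(n))=\Exts$, we obtain $\hat\rho=\Exts$, so $c=\Card{\mathcal{I}_P(\Exts)}$. By Corollary~\ref{cor:psat} (which itself rests on Proposition~\ref{prop:sat}), this equals $\Card{\{J\cap P\mid J\in\ta{\at(\prog)},\ J\text{ is an answer set of }\prog\}}$, i.e.\ exactly the projected answer sets count of $\prog$ with respect to $P$. Together with the inconsistent case this closes the argument.

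The main obstacle here is not a deep computation but precisely pinning down that the empty root bag forces $\tau(n)$ to have at most one row, and hence at most one sub-bucket and one $\PROJ$-solution, so that the displayed sum is really a single term (or the empty sum). The substantive content is isolated in Corollary~\ref{cor:psat}, which I am entitled to assume and which bridges the extension-based quantity $\mathcal{I}_P(\Exts)$ with the genuine set of projected answer sets. The only point requiring genuine care is the degenerate inconsistent case, where I must ensure that an empty table, yielding an empty sum $c=0$, is correctly matched against a zero projected count.
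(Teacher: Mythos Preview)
Your proposal is correct and follows essentially the same approach as the paper: both arguments observe that there is at most one $\PROJ$-solution up to the root, split into the inconsistent case ($c=0$) and the consistent case, and invoke Corollary~\ref{cor:psat} for the latter. The only minor difference is that you justify the ``at most one row'' fact concretely via the empty root bag and the row shape of $\PRIM$, whereas the paper cites the abstract admissibility condition (Definition~\ref{def:asplocalsol}, item~4) and Proposition~\ref{prop:sat}, which keeps the argument generic over any admissible table algorithm~$\AlgA$ rather than tied to $\PRIM$.
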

\begin{proof}
  (``$\Longrightarrow$''): Assume
  that~$c = \sum_{\langle\hat{\rho}\rangle\text{ is a \PROJ-solution
      up to } n}\Card{\mathcal{I}_P(\hat {\rho})}$. Observe that there can be at
  most one projected solution up to~$n$ by Definition~\ref{def:asplocalsol}. %
  If~$c=0$, then $\tau(n)$ contains no rows. Hence, $\prog$ has no
  answer sets,~c.f., Proposition~\ref{prop:sat}, and obviously also no
  answer sets projected to~$P$. Consequently, $c$ is the projected answer sets
  count of~$\prog$.  
  If~$c>0$ we have by Corollary~\ref{cor:psat} that~$c$ is
  equivalent to the projected answer sets count of~$\prog$ with respect to~$P$.

  (``$\Longleftarrow$''): The proof proceeds similar to the only-if
  direction.
\end{proof}

\medskip %

In the following, we provide for a given node~$t$ and a given \PROJ-solution up to~$t$,
the definition of a \PROJ-row solution at~$t$.

\begin{definition}\label{def:loctab}~%
  Let %
  $\langle \hat{\rho} \rangle$ be a~$\PROJ$-solution up to~$t$. Then, we
  define the \emph{$\PROJ$-row solution at $t$} by
  $\langle \local(t,\hat{\rho}), \Card{\mathcal{I}_P(\hat{\rho})}\rangle$.
\end{definition}

\begin{observation}\label{obs:unique}
  Let $\langle \hat {\rho}\rangle$ be a \PROJ-solution up to a
  node~$t\in N$.  There is exactly one corresponding \PROJ-row
  solution
  $\langle \local(t,\hat{\rho}), \Card{\mathcal{I}_P(\hat{\rho})}\rangle$ at~$t$.

  Vice versa, let $\langle \rho, c\rangle$ at~$t$ be a \PROJ-row
  solution at~$t$ for some integer~$c$. Then, there is exactly one
  corresponding \PROJ-solution~$\langle\PExt_{\leq t}(\rho)\rangle$
  up to~$t$.
\end{observation}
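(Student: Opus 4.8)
The plan is to prove the two directions separately, with the whole argument resting on a single idempotence property relating the operators $\local(t,\cdot)$ and $\PExt_{\leq t}(\cdot)$. The forward direction is immediate: by Definition~\ref{def:loctab} the \PROJ-row solution attached to a given \PROJ-solution $\langle\hat\rho\rangle$ is \emph{literally} the pair $\langle \local(t,\hat\rho),\Card{\mathcal{I}_P(\hat\rho)}\rangle$, and since $\local(t,\cdot)$ and $\mathcal{I}_P(\cdot)$ are single-valued, this object is uniquely determined by $\langle\hat\rho\rangle$. Hence exactly one \PROJ-row solution corresponds to each \PROJ-solution, with no further work.

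For the converse the key observation is structural: in the recursive Definition~\ref{def:extensions}, every extension $X\in\Ext_{\leq t}(\vec u)$ contains the pair $\langle t,\vec u\rangle$ (the leaf case yields $\{\langle t,\vec u\rangle\}$, and the recursive case explicitly unions in $\{\langle t,\vec u\rangle\}$). Thus the row recorded at node $t$ inside any satisfiable extension generated from $\vec u$ is exactly $\vec u$. First I would use this to show that for any $\rho_0\in\subbuckets_P(\tau(t))$,
\[
  \local(t,\PExt_{\leq t}(\rho_0)) = \SB \vec u \in \rho_0 \SM \PExt_{\leq t}(\{\vec u\}) \neq \emptyset \SE,
\]
that is, applying $\local$ to the satisfiable extensions recovers precisely the rows of $\rho_0$ admitting at least one satisfiable extension, discarding the ``dead'' rows. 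Since a dead row contributes the empty set to the union $\PExt_{\leq t}(\rho_0)=\bigcup_{\vec u\in\rho_0}\PExt_{\leq t}(\{\vec u\})$, removing such rows leaves the union unchanged, which gives the idempotence identity $\PExt_{\leq t}(\local(t,\PExt_{\leq t}(\rho_0))) = \PExt_{\leq t}(\rho_0)$.

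With idempotence available, the converse is routine. Let $\langle\rho,c\rangle$ be a \PROJ-row solution at $t$; by Definitions~\ref{def:globalsol} and~\ref{def:loctab} there is a \PROJ-solution $\langle\hat\rho\rangle$ with $\hat\rho=\PExt_{\leq t}(\rho_0)$ for some $\rho_0\in\subbuckets_P(\tau(t))$, and with $\rho=\local(t,\hat\rho)$, $c=\Card{\mathcal{I}_P(\hat\rho)}$. The idempotence identity gives $\PExt_{\leq t}(\rho)=\PExt_{\leq t}(\local(t,\hat\rho))=\hat\rho$, so $\langle\PExt_{\leq t}(\rho)\rangle=\langle\hat\rho\rangle$ is a \PROJ-solution whose associated \PROJ-row solution is again $\langle\rho,c\rangle$; this settles existence and identifies the \PROJ-solution as $\langle\PExt_{\leq t}(\rho)\rangle$. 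For uniqueness, if $\langle\hat\rho_1\rangle$ and $\langle\hat\rho_2\rangle$ both give rise to $\langle\rho,c\rangle$, then $\local(t,\hat\rho_1)=\rho=\local(t,\hat\rho_2)$, and applying idempotence to each yields $\hat\rho_1=\PExt_{\leq t}(\rho)=\hat\rho_2$; the count $c=\Card{\mathcal{I}_P(\PExt_{\leq t}(\rho))}$ is then forced.

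I expect the idempotence step to be the main obstacle, and within it the careful verification that each extension faithfully records exactly the generating row $\vec u$ at node $t$. This is the sole place where the recursive shape of Definition~\ref{def:extensions} is genuinely used, and it is precisely what makes $\local(t,\cdot)$ a left inverse of $\PExt_{\leq t}(\cdot)$ up to deletion of dead rows; the subtlety is only to confirm that those dead rows never affect $\PExt_{\leq t}$, so that no information is lost when passing back and forth between the two representations.
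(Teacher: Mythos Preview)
Your argument is correct. The paper states this result as an \emph{observation} and provides no proof at all, treating it as immediate from Definitions~\ref{def:globalsol} and~\ref{def:loctab}; your idempotence identity $\PExt_{\leq t}(\local(t,\PExt_{\leq t}(\rho_0)))=\PExt_{\leq t}(\rho_0)$, derived from the fact that every extension $X\in\Ext_{\leq t}(\vec u)$ contains $\langle t,\vec u\rangle$, is exactly the structural content that makes the observation true, and you have spelled it out explicitly where the authors left it tacit. In that sense your proof is more thorough than the paper's treatment rather than different from it.
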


We need to ensure that storing~$\PROJ$-row solutions at a
node~$t \in N$ suffices to solve the~\PASP problem, which is necessary
to obtain the runtime bounds as presented in
Corollary~\ref{cor:runtime}. For the root node~$n$, this is sufficient, shown in the following.

\begin{lemma}\label{lem:local}
  There is a
  \PROJ-row solution at the root~$n$ if and only if the projected
  answer sets count of~$\prog$ is larger than zero. Further, if there is a \PROJ-row solution~$\langle \rho, c\rangle$ at root~$n$, then~$c$ is the projected answer sets count of~$\prog$.
\end{lemma}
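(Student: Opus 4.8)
The plan is to leverage the empty bag at the root of a nice tree decomposition, which collapses all rows into a single bucket and thereby forces a unique PROJ-row solution whose count is, by Corollary~\ref{cor:psat}, the projected answer sets count.

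First I would record the structural fact that, since $n$ is the root of a nice TD, $\chi(n)=\emptyset$. Hence every row $\vec u\in\tau(n)$ satisfies $\mathcal{I}(\vec u)\subseteq\chi(n)=\emptyset$, so $\restrict{\mathcal{I}(\vec u)}{P}=\emptyset$ for all rows and they are pairwise related under~$\bucket$. Therefore $\buckets_P(\tau(n))$ is empty when $\tau(n)=\emptyset$ and equals the singleton $\{\tau(n)\}$ otherwise; correspondingly $\subbuckets_P(\tau(n))$ is either empty or equals $\{\tau(n)\}$. By Definition~\ref{def:globalsol} this means there is at most one PROJ-solution up to~$n$, and by Observation~\ref{obs:unique} at most one PROJ-row solution at~$n$.

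For the equivalence I would argue that a PROJ-row solution at~$n$ exists precisely when $\subbuckets_P(\tau(n))\neq\emptyset$, i.e.\ when $\tau(n)\neq\emptyset$. By Proposition~\ref{prop:sat}, $\tau(n)$ is non-empty if and only if $\prog$ has an answer set; and since every answer set restricts to some subset of~$P$, the existence of an answer set is equivalent to the projected answer sets count being strictly positive. Chaining these equivalences yields the ``if and only if'' part.

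For the quantitative claim, let $\langle\rho,c\rangle$ be the PROJ-row solution at~$n$, which by the uniqueness above is the one arising from $\langle\PExt_{\leq n}(\tau(n))\rangle$ via Definition~\ref{def:loctab}; thus $c=\Card{\mathcal{I}_P(\PExt_{\leq n}(\tau(n)))}$. Corollary~\ref{cor:psat} identifies this with $\Card{\mathcal{I}_P(\Exts)}$, which counts exactly the distinct projections $J\cap P$ over answer sets~$J$ of~$\prog$, namely the projected answer sets count; equivalently, Lemma~\ref{lem:global} gives the same value, its sum collapsing to this single surviving term by uniqueness. The only delicate point is this uniqueness step: once the empty bag at~$n$ is seen to reduce the $\bucket$-classes to at most one subbucket, both halves of the lemma follow directly from the cited results.
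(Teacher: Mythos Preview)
Your proposal is correct and follows essentially the same route as the paper's proof: both hinge on uniqueness of the \PROJ-row solution at the root and then appeal to Corollary~\ref{cor:psat} (and Definition~\ref{def:loctab}) to identify~$c$ with the projected answer sets count. The only cosmetic difference is that the paper obtains uniqueness by citing admissibility (Definition~\ref{def:asplocalsol}, condition~4), whereas you derive it directly from $\chi(n)=\emptyset$ together with condition~1; your version is slightly more explicit, particularly on the existence direction, but the underlying argument is the same.
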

\begin{proof}%

  (``$\Longrightarrow$''): Let $\langle \rho, c\rangle$ be a
  \PROJ-row solution at root~$n$ where $\rho$ is an $\AlgA$-table and
  $c$ is a positive integer. Then, by Definition~\ref{def:loctab}
  there also exists a
  corresponding~$\PROJ$-solution~$\langle \hat{\rho} \rangle$ up
  to~$n$ such that $\rho = \local(n,\hat{\rho})$ and
  $c=\Card{\mathcal{I}_P(\hat{\rho})}$.
  Moreover, by Definition~\ref{def:asplocalsol}, we
  have~$\Card{\local(n,\PExt_{\leq n}[\tau(n)])}=1$.  
  Then, by Definition~\ref{def:globalsol},
  $\hat{\rho} = \PExt_{\leq n}[\tau(n)]$. By Corollary~\ref{cor:psat}, we
  have $c=\Card{\mathcal{I}_P(\PExt_{\leq n}[\tau(n)])}$ equals the projected answer sets count of~$\prog$.
  Finally, the claim follows.

  (``$\Longleftarrow$''): The proof proceeds similar to the only-if
  direction.
\end{proof}

Before we show that \PROJ-row solutions suffice, we require the following lemma.

\begin{observation}\label{obs:main_incl_excl}
  Let $n$ be a positive integer, $X = \{1, \ldots, n\}$, and $X_1$,
  $X_2$, $\ldots$, $X_n$ subsets of $X$.
  The number of elements in the intersection over all sets~$A_i$ is
    \[\Card{\bigcap_{i \in X} X_i} 
    = %
       \Bigg|\,\Card{\bigcup^n_{j = 1} X_j} %
                                         + \sum_{\emptyset \subsetneq I \subsetneq X} (-1)^{\Card{I}} 
                                              \Card{\bigcap_{i \in I} X_i}\,\Bigg|.\]
\end{observation}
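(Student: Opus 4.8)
The plan is to derive this identity directly from the generalized inclusion-exclusion principle stated in the preliminaries, namely $\Card{\bigcup^n_{j=1} X_j} = \sum_{\emptyset \neq I \subseteq X}(-1)^{\Card{I}-1}\Card{\bigcap_{i\in I} X_i}$. The key observation is that the right-hand side there ranges over \emph{all} nonempty subsets $I$ of $X$, whereas the sum appearing in the statement ranges only over the \emph{proper} nonempty subsets $\emptyset \subsetneq I \subsetneq X$. The single missing term is exactly the one for $I = X$, which contributes $(-1)^{n-1}\Card{\bigcap_{i\in X} X_i}$, and this is precisely the quantity we wish to isolate.

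First I would split the inclusion-exclusion sum by separating off the term $I = X$:
\[\Card{\bigcup_{j=1}^n X_j} = (-1)^{n-1}\Card{\bigcap_{i\in X} X_i} + \sum_{\emptyset \subsetneq I \subsetneq X}(-1)^{\Card{I}-1}\Card{\bigcap_{i\in I} X_i}.\]
Next I would solve for the full intersection. Moving the proper-subset sum to the left-hand side and using the identity $-(-1)^{\Card{I}-1} = (-1)^{\Card{I}}$, this rearranges to
\[(-1)^{n-1}\Card{\bigcap_{i\in X} X_i} = \Card{\bigcup_{j=1}^n X_j} + \sum_{\emptyset \subsetneq I \subsetneq X}(-1)^{\Card{I}}\Card{\bigcap_{i\in I} X_i}.\]
The right-hand side is now literally the expression enclosed by the absolute-value bars in the claimed statement.

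Finally, since $\Card{\bigcap_{i\in X} X_i}$ is a nonnegative integer, the left-hand side equals $\pm\Card{\bigcap_{i\in X} X_i}$ according to the parity of $n$. Taking absolute values of both sides cancels the sign factor $(-1)^{n-1}$ and yields exactly $\Card{\bigcap_{i\in X} X_i} = \big|\,\Card{\bigcup_{j=1}^n X_j} + \sum_{\emptyset \subsetneq I \subsetneq X}(-1)^{\Card{I}}\Card{\bigcap_{i\in I} X_i}\,\big|$, as desired.

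I do not anticipate a genuine obstacle: the statement is a purely algebraic rearrangement of inclusion-exclusion, and the only point requiring care is the bookkeeping of the parity factor $(-1)^{n-1}$. This is precisely why the author wraps the right-hand side in absolute-value bars rather than writing a signed identity, since the absolute value conveniently sidesteps the need to track the parity of $n$.
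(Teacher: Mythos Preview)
Your proposal is correct and matches the paper's own proof, which simply states that one takes the inclusion-exclusion principle and rearranges the equation. You have merely spelled out that rearrangement in detail, including the reason for the absolute value.
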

\begin{proof}
  We take the well-known inclusion-exclusion
  principle~\cite{GrahamGrotschelLovasz95a} and rearrange the
  equation.
\end{proof}

\begin{lemma}\label{lem:main_incl_excl}
  Let $t\in N$ be a node of~$\TTT$
  with~$\children(t,T) = \langle t_1, \ldots, t_\ell \rangle$ and let
  $\langle\rho,\cdot\rangle$ be a~\PROJ-row solution at~$t$. Further, let~$\pi$ be a partial mapping of~$\pi'$ (finally returned by~$\dpa_\PROJ((\prog,P),\TTT)=(T,\chi,\pi')$), which maps nodes of the sub-tree~$T[t]$ rooted at~$t$ (excluding~$t$) to~$\PROJ$-tables.
  Then,
  \begin{enumerate}
  \item %
    $\ipmc(t,\rho,\langle\pi(t_1), \ldots,
    \pi(t_\ell)\rangle) = \ipmc_{\leq t}(\rho)$
  \item \smallskip%
    for $\type(t) \neq \leaf$:\\
    $\pmc(t,\rho,\langle\pi(t_1), \ldots,
    \pi(t_\ell)\rangle) = \pmc_{\leq t}(\rho)$.
  \end{enumerate}
\end{lemma}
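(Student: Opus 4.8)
The plan is to prove both parts simultaneously by induction on the nodes of~$T$ in post-order, so that when treating a node~$t$ with children $t_1,\dots,t_\ell$ we may assume both statements already hold at each child. Concretely, by the inductive hypothesis every value stored in~$\pi(t_i)$ equals the semantic intersection count of the corresponding sub-bucket below~$t_i$, so that $\sipmc(\pi(t_i),O_{(i)})=\ipmc_{\leq t_i}(\dots)$ for the relevant tables. For the base case $\type(t)=\leaf$ the bag is empty, the satisfiable extensions $\PExt_{\leq t}$ of the unique leaf row contribute only the single projected interpretation~$\emptyset$, hence $\ipmc_{\leq t}(\rho)=1$, matching the definition $\ipmc(t,\rho,s)=1$; part~2 is vacuous at leaves.

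For the inductive step I would first establish part~2. The key structural fact is that the projected satisfiable extensions below~$t$ decompose along the origins: every $X\in\PExt_{\leq t}(\{\vec u\})$ arises by selecting an origin tuple $\vec s\in\orig(t,\vec u)$ and, independently, a satisfiable extension below each child~$t_i$ for the row~$\vec s_{(i)}$. Applying the generalized inclusion--exclusion principle to the union $\bigcup_{\vec u\in\rho}\mathcal{I}_P(\PExt_{\leq t}(\{\vec u\}))$, rewritten as a union indexed by $\origs(t,\rho)$, each non-empty $O\subseteq\origs(t,\rho)$ contributes an intersection count that \emph{factors} as the product $\prod_i\sipmc(\pi(t_i),O_{(i)})=\sipmc(s,O)$. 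This factorization is exactly where the admissibility conditions (Definition~\ref{def:asplocalsol}) and the connectedness of tree decompositions enter: the origin tuples force agreement on~$\chi(t)$, while every projected atom strictly below~$t$ lies below exactly one child, so the projected interpretations below distinct children are chosen independently and their counts multiply. Substituting yields $\pmc_{\leq t}(\rho)=\sum_{\emptyset\subsetneq O\subseteq\origs(t,\rho)}(-1)^{\Card{O}-1}\sipmc(s,O)=\pmc(t,\rho,s)$.

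For part~1 I would invoke Observation~\ref{obs:main_incl_excl}, which rewrites the intersection $\ipmc_{\leq t}(\rho)=\Card{\bigcap_{\vec u\in\rho}\mathcal{I}_P(\PExt_{\leq t}(\{\vec u\}))}$ as the absolute value of $\pmc_{\leq t}(\rho)+\sum_{\emptyset\subsetneq\varphi\subsetneq\rho}(-1)^{\Card{\varphi}}\ipmc_{\leq t}(\varphi)$. The leading term equals $\pmc(t,\rho,s)$ by part~2, and each $\ipmc_{\leq t}(\varphi)$ with $\varphi\subsetneq\rho$ equals $\ipmc(t,\varphi,s)$ by an inner induction on~$\Card{\varphi}$ (same node~$t$, strictly smaller subset). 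Plugging these in reproduces verbatim the recursive definition of $\ipmc(t,\rho,s)$, closing the step.

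I expect the main obstacle to be the factorization used in part~2, namely proving rigorously that $\Card{\bigcap_{\vec s\in O}\mathcal{I}_P(\dots)}=\prod_i\sipmc(\pi(t_i),O_{(i)})$. This requires a careful case analysis over $\type(t)\in\{\intr,\rem,\join\}$, tracking how a projected interpretation below~$t$ is assembled from the projection of~$\vec u$ onto $\chi(t)\cap P$ together with the child projections, and verifying via the TD connectivity property that the contributions of distinct children to~$P$ are disjoint, so that the counts genuinely multiply at $\join$ nodes. Everything else is bookkeeping with the two inclusion--exclusion identities.
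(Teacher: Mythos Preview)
Your proposal is correct and follows essentially the same approach as the paper: simultaneous induction in post-order together with an inner induction on~$\Card{\rho}$ at the current node, the leaf base case, inclusion--exclusion over origins for~$\pmc$, and Observation~\ref{obs:main_incl_excl} for~$\ipmc$. The only presentational difference is that the paper organizes the induction step as an explicit case distinction on~$\type(t)\in\{\intr,\rem,\join\}$ and on whether the introduced/removed atom lies in~$P$ (the $\rem$-with-projection-atom case being where ``buckets fall together'' and one must use that $\sipmc(\pi(t_i),O_{(i)})=0$ whenever $O_{(i)}$ is not a sub-bucket), whereas you state the abstract factorization first and defer that same case analysis to your ``main obstacle'' paragraph---the substance is identical.
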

\begin{proof}[Proof (Sketch)]
  We prove the statement by simultaneous induction.
  
  (``Induction Hypothesis''): Lemma~\ref{lem:main_incl_excl} holds for the nodes in~$\children(t,T)$ and also for node~$t$, but on strict subsets~$\varphi\subsetneq\rho$.
  (``Base Cases''): Let $\type(t) = \leaf$.
  Then by definition,
  $\ipmc(t,\{\langle \emptyset, \ldots\rangle\}, \langle \rangle) = \ipmc_{\leq t}(\{\langle\emptyset,\ldots\rangle\}) =
  1$.  
  Recall that for $\pmc$ the equivalence does not hold for leaves, but we use a node
  that has a node~$t'\in N$ with~$\type(t') = \leaf$ as child for the
  base case. Observe that by definition of a tree decomposition
  such a node~$t$ can have exactly one child.
  Then, we have that
  $\pmc(t,\rho,\langle\pi(t')\rangle) = \sum_{\emptyset
    \subsetneq O \subseteq {\origs(t,\rho)}} (-1)^{(\Card{O} - 1)}
  \cdot \sipmc(\langle \tau(t')\rangle, O) =
  \Card{\bigcup_{\vec u\in\rho} \mathcal{I}_P(\PExt_{\leq t}(\{\vec u\}))} =
  \pmc_{\leq t}(\rho) = 1$ where $\langle\rho,\cdot\rangle$ is
  a~\PROJ-row solution at~$t$.

  (``Induction Step''): We proceed by case distinction.

  Assume that $\type(t) = \intr$.
  Let $a \in (\chi(t) \setminus \chi(t'))$ be an introduced
  atom. We have two cases. Case (i) $a$ also belongs to
  $(\at(\prog) \setminus P)$,~i.e., $a$ is not a projection atom; and
  Case (ii) $a$ also belongs to $P$,~i.e., $a$ is a projection
  atom.
  Assume that we have Case~(i).
  Let~$\langle \rho, c \rangle$ be a \PROJ-row solution at~$t$ for
  some integer~$c$. As a consequence of admissible algorithm~$\AlgA$ (see Definition~\ref{def:asplocalsol})
  there can be many rows in the table~$\tau(t)$ for one row in
  the table~$\tau(t')$, more precisely,
  $\Card{\buckets_P(\rho)} = 1$.
  As a result,
  $\pmc_{\leq t}(\rho) = \pmc_{\leq t'}(\orig(t,\rho))$ by
  applying Observation~\ref{obs:unique}.
  We apply the inclusion-exclusion principle on every subset~$\varphi$ of
  the origins of~$\rho$ in the definition of~$\pmc$ and by induction
  hypothesis we know that
  $\ipmc(t',\varphi,\langle\pi(t')\rangle) = \ipmc_{\leq
    t'}(\varphi)$, therefore,
  $\sipmc(\pi(t'), \varphi) = \ipmc_{\leq t'}(\varphi)$.  This
  concludes Case~(i) for $\pmc$. The induction step for $\ipmc$ works
  similar %
  by applying
  Observation~\ref{obs:main_incl_excl} and comparing the corresponding
  \PROJ-solutions up to~$t$ or $t'$, respectively. 
  Further, for showing the lemma for~$\ipmc$, one has to additionally apply the hypothesis for node~$t$, but on strict subsets~$\emptyset\subsetneq\varphi\subsetneq\rho$ of~$\rho$.
  Assume that we have Case~(ii). We proceed similar as in Case~(i),
  since Case~(ii) is just a special case here, more precisely, we also
  have $\Card{\buckets_P(\rho)} = 1$ here.

  Assume that $\type(t) = \rem$. Let
  $a \in (\chi(t') \setminus \chi(t))$ be a removed atom. We have
  two cases. Case (i) $a$ also belongs to
  $(\at(\prog) \setminus P)$,~i.e., $a$ is not a projection atom; and
  Case (ii) $a$ also belongs to $P$,~i.e., $a$ is a projection
  atom.
  Assume that we have Case~(i).  Let~$\langle \rho, c \rangle$ be a
  \PROJ-row solution at~$t$ for some integer~$c$.
  As a consequence of admissible table algorithms~$\AlgA$ (see Definition~\ref{def:asplocalsol}) there can be many rows
  in the table~$\tau(t)$ for one row in the
  table~$\tau(t')$ (and vice-versa). Nonetheless we still have
  $\pmc_{\leq t}(\rho) = \pmc_{\leq t'}(\orig(t,\rho))$, because
  $a \notin P$ by applying Observation~\ref{obs:unique}.
  We apply the inclusion-exclusion principle on every subset~$\varphi$ of
  the origins of~$\rho$ in the definition of~$\pmc$ and by induction
  hypothesis we know that
  $\ipmc(t',\varphi,\langle\pi(t')\rangle) = \ipmc_{\leq
    t'}(\varphi)$, therefore,
  $\sipmc(\pi(t'), \varphi) = \ipmc_{\leq t'}(\varphi)$.  This
  concludes Case~(i) for $\pmc$. Again, the induction step for $\ipmc$
  works similar, but swapped.
  Assume that we have Case~(ii).
  Let~$\langle \rho, c \rangle$ be a \PROJ-row solution at~$t$ for
  some integer~$c$.
  Here we cannot ensure
  $\pmc_{\leq t}(\rho) = \pmc_{\leq t'}(\orig(t,\rho))$, since
  buckets fall together.  However, by applying
  Observation~\ref{obs:unique} we have
  $\pmc_{\leq t}(\rho) = \sum_{\varphi \in
    \buckets_P(\origs(t,\rho)_{(1)})} \pmc(t', \varphi, C) $ where the
  sequence~$C$ consists of the tables~$\pi(t'_i)$ of the children~$t'_i$ of~$t'$.
  For every~$\varphi \in \subbuckets_P(\origs(t,\rho)_{(1)})$ by
  induction hypothesis we know that
  $\ipmc(t',\varphi,\langle\pi(t')\rangle) = \ipmc_{\leq
    t'}(\varphi)$.
  Hence, we apply the inclusion-exclusion principle over all
  subsets~$\zeta$ of~$\varphi$ for all~$\varphi$ independently.  By
  construction
  $\sipmc(\pi(t'), \zeta) = \ipmc_{\leq t'}(\zeta)$.  Then,
  by construction
  $\pcnt(t,\rho, C') = \sum_{\emptyset \subsetneq O \subseteq
    {\origs(t,\rho)}} (-1)^{(\Card{O} - 1)} \cdot \sipmc(C', O) =
  \pmc_{\leq t}(\rho)$, where
  $C' = \langle \pi(t') \rangle$, since for the remaining
  terms $\sipmc(C', O)$ is simply zero, including cases where
  different buckets are involved.
  This concludes Case~(ii) for $\pmc$. Again, the induction step for
  $\ipmc$ works similar, but swapped by again applying
  Observation~\ref{obs:main_incl_excl}.

  Assume that $\type(t) = \join$. We proceed similar to the introduce
  case. However, we have two \PROJ-tables for the children of~$t$.
  Hence, we have to consider both sides when computing $\sipmc$
  (see Definition of~$\sipmc$). %
  There we consider the
  cross-product of two \AlgA-tables and we can also correctly apply
  the inclusion-exclusion principle on subsets of this cross-product,
  which we can do by simply multiplying $\sipmc$-values
  accordingly. The multiplication is closely related to the join case
  in table algorithm~\AlgA. For $\ipmc$ this does not apply, since the
  inclusion-exclusion principle is carried out at the node~$t$ and not
  for its children.

  Since we outlined all cases that can occur for node~$t$, this
  concludes the proof sketch.
\end{proof}

\begin{lemma}[Soundness]\label{lem:correct}
  Let $t\in N$ be a node of~$\TTT$
  with~$\children(t,T) = \langle t_1, \ldots, t_\ell \rangle$.
  Then, each row~$\langle \rho, c \rangle$ at node~$t$ constructed
  by table algorithm~$\PROJ$ is also a~\PROJ-row solution for
  node~$t$.
\end{lemma}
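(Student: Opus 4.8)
The plan is to fix an arbitrary node $t$ with $\children(t,T) = \langle t_1, \ldots, t_\ell\rangle$ and an arbitrary row $\langle \rho, c\rangle$ that table algorithm $\PROJ$ places into $\pi(t)$, and then to exhibit a $\PROJ$-solution up to $t$ whose induced $\PROJ$-row solution (via Definition~\ref{def:loctab}) is exactly $\langle \rho, c\rangle$. Inspecting Listing~\ref{fig:dpontd3}, any such output row satisfies $\rho \in \subbuckets_P(\nu(t))$ and $c = \ipmc(t,\rho,\langle\pi(t_1),\ldots,\pi(t_\ell)\rangle)$, so the task splits cleanly into matching the table coordinate and the count coordinate separately.

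First I would handle the table coordinate $\rho$. Since the purged mapping is $\nu(t) = \local(t, \PExt_{\leq t}[\tau(t)])$, every row of $\nu(t)$ is a row of $\tau(t)$ that survives purging, i.e. that occurs as the $t$-entry of some satisfiable extension. I would lift the sub-bucket $\rho \in \subbuckets_P(\nu(t))$ to a sub-bucket $\rho' \in \subbuckets_P(\tau(t))$ by taking the union of the full $P$-equivalence classes of $\tau(t)$ that meet $\rho$; within each such class the non-purged rows are precisely those with nonempty $\PExt_{\leq t}$, and they share the same projected interpretation, so that $\local(t,\PExt_{\leq t}(\rho')) = \rho$. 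Setting $\hat\rho \eqdef \PExt_{\leq t}(\rho')$, Definition~\ref{def:globalsol} makes $\langle\hat\rho\rangle$ a $\PROJ$-solution up to $t$, whose first coordinate is $\local(t,\hat\rho) = \rho$ as required.

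Next I would match the count. By Lemma~\ref{lem:main_incl_excl}(1) the recursively computed, stored value coincides with its closed form, $c = \ipmc(t,\rho,\langle\pi(t_1),\ldots\rangle) = \ipmc_{\leq t}(\rho)$; here I rely on the inductive hypothesis of that lemma holding for all children $t_i$ (whose $\pi(t_i)$ have already been computed during the post-order traversal) and for the strict subsets of $\rho$ that feed into the inclusion--exclusion in $\ipmc$. It then remains to identify $\ipmc_{\leq t}(\rho)$ with the count prescribed by Definition~\ref{def:loctab} for $\hat\rho = \PExt_{\leq t}(\rho')$, which follows by unwinding the definition of $\mathcal{I}_P$ on $\PExt_{\leq t}(\rho')$ together with the $\nu$--$\tau$ correspondence established above and Observation~\ref{obs:unique}, which guarantees that $\langle\hat\rho\rangle$ induces exactly one $\PROJ$-row solution. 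Combining the two coordinates shows $\langle\rho,c\rangle$ is that unique $\PROJ$-row solution, and since $t$ and the row were arbitrary, the lemma follows.

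The main obstacle I anticipate is exactly this count-matching step. One must argue that passing from $\tau(t)$ to the purged $\nu(t)$ neither creates nor destroys any projected interpretation relevant to the count (purging removes only rows with empty satisfiable-extension set, which contribute nothing), and one must reconcile the ``all-overlapping'' intersection bookkeeping captured by $\ipmc_{\leq t}$ with the quantity that Definition~\ref{def:loctab} intends to store at $t$. This is the single place where the interplay between purging, the inclusion--exclusion invariant of Lemma~\ref{lem:main_incl_excl}, and the solution definitions is genuinely exercised; the structural (table-coordinate) part is otherwise an immediate consequence of the definitions of $\nu$, $\subbuckets_P$, and $\local$.
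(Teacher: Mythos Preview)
Your proposal follows essentially the same approach as the paper's own proof, which is a two-sentence sketch: it observes that Listing~\ref{fig:dpontd3} produces a row for each $\rho \in \subbuckets_P(\local(t,\PExt_{\leq t}[\tau(t)])) = \subbuckets_P(\nu(t))$, and then appeals directly to Lemma~\ref{lem:main_incl_excl} to conclude that the computed count makes $\langle\rho,c\rangle$ a $\PROJ$-row solution. Your treatment is more explicit in two respects --- you spell out the lifting of $\rho$ from $\subbuckets_P(\nu(t))$ to a sub-bucket $\rho'$ of $\tau(t)$ (needed to match Definition~\ref{def:globalsol}, which is stated over~$\tau(t)$), and you correctly isolate the count-matching between $\ipmc_{\leq t}(\rho)$ and $|\mathcal{I}_P(\hat\rho)|$ as the only nontrivial step --- but the underlying argument is the same, with Lemma~\ref{lem:main_incl_excl} carrying the load in both.
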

\begin{proof}[Proof (Idea)]
  Observe that Listing~\ref{fig:dpontd3} computes a row for each
  sub-bucket $\rho \in \subbuckets_P(\local(t,$ $\PExt_{\leq t}[\tau(t)]))$. The
  resulting row~$\langle\rho, c \rangle$ obtained by~$\ipmc$ is
  indeed a \PROJ-row solution for~$t$ according to
  Lemma~\ref{lem:main_incl_excl}.
\end{proof}

\begin{lemma}[Completeness]\label{lem:complete}
  Let~$t\in N$ be node of~$\TTT$ where
  $\type(t) \neq \leaf$ and~$\children(t,T) = \langle t_1, \ldots, t_\ell \rangle$. Given a
  \PROJ-row solution~$\langle \rho, c \rangle$ at node~$t$.
  There exists $\langle C_1, \ldots, C_\ell\rangle$ where $C_i$ is set
  of \PROJ-row solutions at~$t_i$ %
  such that
  $\rho \in \PROJ(t, \cdot, \tau(t), \cdot, P, \langle C_1, \ldots,
  C_\ell\rangle)$.
\end{lemma}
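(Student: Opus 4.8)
The plan is to run the defining identity of algorithm~$\PROJ$ (Listing~\ref{fig:dpontd3}) backwards, mirroring the soundness argument of Lemma~\ref{lem:correct}, and to offload all of the counting to Lemma~\ref{lem:main_incl_excl}. First I would fix the given $\PROJ$-row solution $\langle \rho, c\rangle$ at $t$ and apply the second clause of Observation~\ref{obs:unique} to recover its unique corresponding $\PROJ$-solution $\langle\hat\rho\rangle = \langle\PExt_{\leq t}(\rho)\rangle$ up to~$t$; by Definitions~\ref{def:globalsol} and~\ref{def:loctab} this means $\rho = \local(t,\hat\rho)$ and that $\rho$ arises as the bag-restriction of a sub-bucket of $\tau(t)$. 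The first thing to verify is therefore the membership $\rho \in \subbuckets_P(\nu(t))$, since that is exactly the index set over which Listing~\ref{fig:dpontd3} iterates; once it holds, the algorithm is guaranteed to emit a row whose first component is~$\rho$.

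For the children I would take $C_i$ to be the set of \emph{all} $\PROJ$-row solutions at $t_i$; this trivially meets the requirement that each $C_i$ be a set of $\PROJ$-row solutions, and by soundness (Lemma~\ref{lem:correct}) together with the induction hypothesis it coincides with the table $\pi(t_i)$ that $\dpa_\PROJ$ computes on $T[t_i]$. With these tables the row that the listing produces for $\rho$ carries the value $\ipmc(t,\rho,\langle C_1,\ldots,C_\ell\rangle)$, which by part~1 of Lemma~\ref{lem:main_incl_excl} equals $\ipmc_{\leq t}(\rho)$. To identify this with $c$ I would invoke soundness once more together with the uniqueness clause of Observation~\ref{obs:unique}: the emitted row is itself a $\PROJ$-row solution at $t$ with first component $\rho$, and since a $\PROJ$-row solution is determined by its first component, it must coincide with the given $\langle\rho,c\rangle$. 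Hence $\langle\rho,c\rangle \in \PROJ(t,\cdot,\tau(t),\cdot,P,\langle C_1,\ldots,C_\ell\rangle)$, which is the claim.

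I expect the main obstacle to be the membership $\rho \in \subbuckets_P(\nu(t))$, i.e.\ checking that passing from $\tau(t)$ to the purged, bag-restricted table $\nu(t) = \local(t,\PExt_{\leq t}[\tau(t)])$ preserves the sub-bucket structure: one must argue that the rows surviving purging are precisely those contained in some member of $\Exts$, and that $\local$ merges only rows already sharing a $P$-projection, so that distinct $P$-buckets of $\tau(t)$ never collapse and $\rho$ stays a union of $\nu(t)$-buckets. The second delicate point, hidden inside the appeal to Lemma~\ref{lem:main_incl_excl}, is that each $C_i$ must contain a $\PROJ$-row solution for every origin sub-bucket that the recursion through $\pcnt$ and $\sipmc$ reads off; this is supplied by the origin-decomposition $\Ext_{\leq t}(\vec u) = \bigcup_{\vec v\in\origs(t,\vec u)}\{\{\langle t,\vec u\rangle\}\cup X_1\cup\cdots\cup X_\ell\}$ of Definition~\ref{def:extensions}, which forces the $i$-th coordinate of $\PExt_{\leq t}(\rho)$ to be a union of $\PExt_{\leq t_i}$-families, each a genuine $\PROJ$-solution up to $t_i$ and hence represented in $C_i$. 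As in Lemma~\ref{lem:main_incl_excl}, the cases needing most care are $\rem$ of a projection atom (where child buckets merge) and $\join$ (where origins range over a cross product of the two child tables and the counts combine coordinatewise through $\sipmc$); for the present lemma, however, all the arithmetic of those cases is inherited from Lemma~\ref{lem:main_incl_excl}, so only the structural presence of the required child rows remains to be checked.
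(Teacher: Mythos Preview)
Your approach differs genuinely from the paper's. The paper proceeds by case distinction on $\type(t)$ (separately $\intr$, $\rem$, $\join$) and, in each case, explicitly constructs a \emph{minimal} collection of child $\PROJ$-row solutions---essentially one row $\langle\varphi,\Card{\mathcal{I}_P(\PExt_{\leq t'}(\varphi))}\rangle$ for every nonempty subset $\varphi\subseteq\local(t',\hat\rho')$ (restricted to single buckets in the $\rem$ case)---and then argues that these suffice for $\PROJ$ to rebuild $\langle\rho,c\rangle$. Your route bypasses the case split entirely by handing the algorithm the \emph{full} child tables and delegating all the arithmetic to Lemma~\ref{lem:main_incl_excl}; this is tidier and makes the dependency structure explicit, at the cost of relying on that earlier lemma rather than re-deriving what is needed locally.

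There is, however, a circularity you should repair. You set $C_i$ to be ``the set of all $\PROJ$-row solutions at $t_i$'' and then claim this coincides with $\pi(t_i)$ ``by soundness together with the induction hypothesis''. Soundness (Lemma~\ref{lem:correct}) gives only $\pi(t_i)\subseteq C_i$; the reverse inclusion is exactly completeness at $t_i$, and in the way Theorem~\ref{thm:correctness} uses this lemma the induction runs \emph{top-down} from the root, so no hypothesis about the children is available here. The fix is immediate: take $C_i\eqdef\pi(t_i)$ directly. By Lemma~\ref{lem:correct} every row of $\pi(t_i)$ is a $\PROJ$-row solution at $t_i$, so the requirement on $C_i$ is met, and Lemma~\ref{lem:main_incl_excl} applies verbatim since its hypothesis is precisely that the child tables are those produced by $\dpa_\PROJ$. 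With that change your identification $\ipmc(t,\rho,\langle C_1,\ldots,C_\ell\rangle)=\ipmc_{\leq t}(\rho)$ and the uniqueness step via Observation~\ref{obs:unique} go through; the residual verification $\rho\in\subbuckets_P(\nu(t))$ is indeed the structural point still to be discharged, and your outline for it is on the right track.
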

\begin{proof}[Proof (Idea)]
Since~$\langle\rho,c \rangle$ is a~\PROJ-row solution for~$t$, there is by Definition~\ref{def:loctab} a corresponding ~\PROJ-solution~$\langle\hat\rho\rangle$ up to~$t$ such that~$\local(t,\hat\rho) = \rho$. 

We proceed again by case distinction. Assume that~$\type(t)=\intr$. Then we define~$\hat{\rho'}\eqdef \{(t',\hat\varphi) \mid (t', \hat\varphi)\in \rho, t \neq t'\}$. Then, for each subset~$\emptyset\subsetneq\varphi\subseteq\local(t',\hat{\rho'})$, we define~$\langle \varphi, \Card{\mathcal{I}_P(\PExt_{\leq t}(\varphi))}\rangle$ in accordance with Definition~\ref{def:loctab}. By Observation~\ref{obs:unique}, we have that~$\langle \varphi, \Card{\mathcal{I}_P(\PExt_{\leq t}(\varphi))}\rangle$ is an \AlgA-row solution at node~$t'$. 
Since we defined the~\PROJ-row solutions for~$t'$ for all the respective \PROJ-solutions up to~$t'$, we encountered every~\PROJ-row solution for~$t'$ that is required for deriving~$\langle \rho, c\rangle$ via~\PROJ (c.f., Definitions of~$\ipmc$ and of~$\pmc$). %

Assume that~$\type(t)=\rem$. The case is slightly easier as the one
above. We do not need to define a~\PROJ-row solution for~$t'$ for all
subsets~$\varphi$, since we only have to consider subsets~$\varphi$ here,
with~$\Card{\buckets_P(\varphi)}=1$. The remainder works similar.

Similarly, one can show the result for the remaining node with~$\type(t)=\join$, but define \PROJ-row solutions for two preceding child nodes of~$t$.
\end{proof}

We are now in the position to proof our theorem.

\begin{theorem}\label{thm:correctness}
  The algorithm~$\dpa_\PROJ$ is correct. %
  More precisely, %
  the algorithm~$\dpa_\PROJ((\prog,P),\TTT)$ returns
  $\PROJ$-TTD~$(T,\chi,\pi$) such that $c=\sipmc(\pi(n), \cdot)$
  is the projected answer sets count of~$\prog$ with respect to the set~$P$ of
  projection atoms.
\end{theorem}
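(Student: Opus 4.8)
The plan is to prove, by induction on the nodes of $T$ in post-order, the invariant that for every node $t\in N$ the table $\pi(t)$ computed by $\dpa_\PROJ$ on the purged mapping $\nu$ is \emph{exactly} the set of \PROJ-row solutions at $t$; the projected count is then read off from the unique row at the root via Lemma~\ref{lem:local}. Since Observation~\ref{obs:unique} provides a bijection between \PROJ-solutions up to $t$ and \PROJ-row solutions at $t$, it suffices to reason about the latter.

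For the base case let $\type(t)=\leaf$. As the TD is nice, $\chi(t)=\emptyset$, so $\nu(t)$ contains at most the row $\langle\emptyset,\emptyset,\langle\rangle\rangle$ and $\subbuckets_P(\nu(t))$ has at most one element; Listing~\ref{fig:dpontd3} stores $\langle\rho,\ipmc(t,\rho,\langle\rangle)\rangle$ with $\ipmc(t,\rho,\langle\rangle)=1$, which by Definition~\ref{def:loctab} together with Lemma~\ref{lem:main_incl_excl} is precisely the \PROJ-row solution at $t$. For the inductive step I would assume the invariant for all children $t_1,\ldots,t_\ell$ of $t$, that is, $\pi(t_i)$ equals the set of \PROJ-row solutions at $t_i$. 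Soundness (Lemma~\ref{lem:correct}) shows that every pair $\langle\rho,c\rangle$ placed into $\pi(t)$ by $\PROJ$ is a \PROJ-row solution at $t$, while completeness (Lemma~\ref{lem:complete}) shows the converse, namely that each \PROJ-row solution at $t$ is produced by $\PROJ$ from the child tables, which are correct by the hypothesis. Hence $\pi(t)$ coincides with the set of \PROJ-row solutions at $t$, closing the induction.

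Specializing the invariant to the root $n$ yields that $\pi(n)$ is exactly the set of \PROJ-row solutions at $n$. Since $\chi(n)=\emptyset$, condition~4 of Definition~\ref{def:asplocalsol} forces at most one such row: if $\prog$ admits no projected answer set then $\tau(n)$, and therefore $\pi(n)$, is empty and $\sipmc(\pi(n),\cdot)=0$; otherwise a single \PROJ-row solution $\langle\rho,c\rangle$ at $n$ remains, and by Lemma~\ref{lem:local} (equivalently Corollary~\ref{cor:psat} and Lemma~\ref{lem:global}) its stored count $c=\sipmc(\pi(n),\cdot)$ equals the projected answer sets count of $\prog$ with respect to $P$, which is the assertion.

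I expect the genuine difficulty to lie not in this assembly but in the two ingredients it invokes. The delicate step is Lemma~\ref{lem:main_incl_excl}, which identifies the recursively computed $\ipmc(t,\cdot)$ and $\pmc(t,\cdot)$ with the semantic quantities $\ipmc_{\leq t}$ and $\pmc_{\leq t}$; there the inclusion-exclusion principle (Observation~\ref{obs:main_incl_excl}) must be reconciled, case by case, with the merging of buckets at $\rem$ nodes that remove a projection atom and with the cross-product of child tables at $\join$ nodes. The remaining care is bookkeeping: arguing that purging (the passage from $\tau$ to $\nu$) deletes only rows irrelevant to answer sets, so that counting over $\nu$ neither overcounts nor undercounts, and that the empty bags at the leaves and the root make the base and concluding cases degenerate cleanly.
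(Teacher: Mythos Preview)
Your proposal is correct and mirrors the paper's own proof: both assemble Lemma~\ref{lem:correct} (soundness), Lemma~\ref{lem:complete} (completeness), and Lemma~\ref{lem:local} at the root, and both acknowledge that the substantive work resides in Lemma~\ref{lem:main_incl_excl}. The only cosmetic difference is that the paper runs soundness bottom-up and then completeness top-down (from the root toward the leaves), whereas you fold both into a single post-order induction establishing $\pi(t)=\{\text{\PROJ-row solutions at }t\}$; the ingredients and logic are otherwise identical.
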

\begin{proof}
  By Lemma~\ref{lem:correct} we have soundness for every
  node~$t \in N$ and hence only valid rows as output of table
  algorithm~$\PROJ$ when traversing the tree decomposition in
  post-order up to the root~$n$.
  By Lemma~\ref{lem:local} we know that the projected answer sets count~$c$
  of~$\prog$ is larger than zero if and only if there exists a
  certain~\PROJ-row solution for~$n$.
  This~\PROJ-row solution at node~$n$ is of the
  form~$\langle \{\langle\emptyset, \ldots\rangle\} ,c\rangle$. If
  there is no \PROJ-row solution at node~$n$,
  then~$\tau(n)=\emptyset$ since the table algorithm~$\AlgA$
  is admissible (c.f., Proposition~\ref{prop:sat}). Consequently, we have
  $c=0$. Therefore, $c=\sipmc(\pi(n), \cdot)$ is the
  projected answer sets count of~$\prog$ with respect to~$P$ in both cases.

  Next, we establish completeness by induction starting from the
  root~$n$. Let therefore, $\langle \hat\rho \rangle$ be
  the~\PROJ-solution up to node~$n$, where for each row
  in~$\vec u\in \hat\rho$, $\mathcal{I}(\vec u)$ corresponds to an
  answer set of~$\prog$.  By Definition~\ref{def:loctab}, we know that
  for the root~$n$ we can construct a \PROJ-row solution at~$n$ of the
  form~$\langle \{\langle\emptyset, \ldots\rangle\} ,c\rangle$
  for~$\hat\rho$.  We already established the induction step in
  Lemma~\ref{lem:complete}.
  Hence, we obtain some (corresponding) rows for every
  node~$t$. Finally, we stop at the leaves.

  In consequence, we have shown both soundness and completeness. As a
  result, Theorem~\ref{thm:correctness} is sustains.
\end{proof}

\begin{corollary}\label{cor:correctness}
  The algorithm $\mdpa{\AlgA}$ is correct and outputs for any instance
  of \PASP its projected answer sets count.
\end{corollary}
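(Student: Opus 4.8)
The plan is to derive this corollary by composing the correctness of the three phases of $\mdpa{\AlgA}$, reducing the entire claim to Theorem~\ref{thm:correctness}. First I would recall the pipeline: $\mdpa{\AlgA}$ runs $\dpa_\AlgA$ to obtain $\TTT_{\text{cons}}=(T,\chi,\tau)$, then purges to obtain $\TTT_{\text{purged}}=(T,\chi,\nu)$, and finally runs $\dpa_\PROJ$ on $\TTT_{\text{purged}}$ to obtain $\TTT_{\text{proj}}=(T,\chi,\pi)$, reading off the count at the root. The statement then amounts to verifying that the three phases interlock: (i) the first pass computes a TTD whose tables capture, via $\PExt$, exactly the answer sets; (ii) purging correctly restricts these tables to $\nu(t)=\local(t,\PExt_{\leq t}[\tau(t)])$; and (iii) the second pass reads off the projected count from $\pi(n)$.

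For (i), since $\AlgA$ is assumed admissible, Property~3 of Definition~\ref{def:asplocalsol} gives precisely that $\mathcal{I}(\PExt_{\leq t}[\tau(t)])$ equals the set of partial interpretations below $t$ that extend to an answer set of $\prog$; at the root this specializes, by Proposition~\ref{prop:sat}, to $\mathcal{I}(\Exts)$ being exactly the set of answer sets of $\prog$. For (ii), the purged table mapping is \emph{defined} as $\nu(t)=\local(t,\PExt_{\leq t}[\tau(t)])$, so that a row of $\tau(t)$ survives iff it lies on some satisfiable extension reachable from the root; purging therefore discards only rows that cannot contribute to any answer set while retaining every row that can, preserving the $\PExt$-semantics used throughout the second pass.

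For (iii) I would simply invoke Theorem~\ref{thm:correctness}: applied to the purged TTD $\TTT_{\text{purged}}$ of the admissible algorithm $\AlgA$, it states that $\dpa_\PROJ((\prog,P),\TTT_{\text{purged}})$ returns a $\PROJ$-TTD $(T,\chi,\pi)$ with $\sipmc(\pi(n),\cdot)$ equal to the projected answer sets count of $\prog$ with respect to $P$. Since $\mdpa{\AlgA}$ returns exactly this value, the corollary follows. I would close by observing that the two instantiations relevant to the paper---$\PRIM$ for head-cycle-free programs and $\algo{PRIM}$ for disjunctive programs---are admissible (the Observation following Definition~\ref{def:asplocalsol}), so $\mdpa{\PRIM}$ and $\mdpa{\algo{PRIM}}$ inherit correctness as immediate special cases.

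The main obstacle I anticipate is verifying that purging neither destroys soundness nor completeness of the downstream count, i.e.\ that restricting from $\tau$ to $\nu$ leaves all the $\ipmc$ and $\pmc$ bookkeeping of $\dpa_\PROJ$ intact. This hinges on the fact that every discarded row has empty $\PExt_{\leq t}$ contribution, hence contributes nothing to any $\mathcal{I}_P(\PExt_{\leq t}(\cdot))$; consequently the projected-interpretation sets---and therefore all $\pmc_{\leq t}$ and $\ipmc_{\leq t}$ values underlying Theorem~\ref{thm:correctness}---coincide whether computed over $\tau$ or over $\nu$, so the inclusion--exclusion overlaps are unaffected and the preconditions of Theorem~\ref{thm:correctness} carry over to the purged mapping.
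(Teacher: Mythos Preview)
Your proposal is correct and follows essentially the same approach as the paper: decompose $\mdpa{\AlgA}$ into its three phases, invoke the correctness of the first pass via admissibility (respectively, prior work), argue that purging preserves soundness and completeness (the paper cites Proposition~\ref{prop:sat} for this), and then appeal to Theorem~\ref{thm:correctness} for the second pass. Your version is considerably more detailed than the paper's three-sentence proof---in particular, you spell out why discarded rows have empty $\PExt$ contribution and hence do not affect the inclusion--exclusion bookkeeping---but the underlying structure is identical.
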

\begin{proof}
  The result follows immediately, since~$\mdpa{\AlgA}$ consists of two
  dynamic programming passes~$\dpa_\AlgA$, a purging step, and~$\dpa_\PROJ$. For the
  soundness and completeness of~$\dpa_\algo{PRIM}$ we refer to other
  sources~\cite{FichteEtAl17a}. By Proposition~\ref{prop:sat}, the
  ``purging'' step does neither destroy soundness nor completeness
  of~$\dpa_\AlgA$.
\end{proof}

\begin{restateproposition}[prop:phcworks]
\begin{proposition}
  The algorithm $\mdpa{\PRIM}$ is correct and outputs for any instance
  of \PASP its projected answer sets count.
\end{proposition}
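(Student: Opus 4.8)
The plan is to reduce Proposition~\ref{prop:phcworks} to the general correctness result already established for the meta-algorithm, namely Corollary~\ref{cor:correctness}. That corollary states that $\mdpa{\AlgA}$ outputs the projected answer sets count for every \PASP instance whenever $\AlgA$ is an \emph{admissible} table algorithm in the sense of Definition~\ref{def:asplocalsol}. Since $\mdpa{\PRIM}$ is exactly $\mdpa{\AlgA}$ instantiated with $\AlgA=\PRIM$ (the first pass $\dpa_\PRIM$, followed by purging, followed by $\dpa_\PROJ$), the entire task collapses to verifying that $\PRIM$ is admissible. Once this is done, the proposition is an immediate specialisation of Corollary~\ref{cor:correctness}.

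First I would discharge the two syntactic admissibility conditions. Condition~1 (every row's interpretation part is contained in the current bag $\chi(t)$) and Condition~2 (any originating child row agrees with the parent row on the shared bag atoms) follow directly from the construction of $\PRIM$ in Listing~\ref{fig:prim}: interpretations are always stored restricted to $\chi(t)$, and at introduce, remove, and join nodes the table is built so that child and parent interpretations coincide on the common atoms, which relies only on the connectedness property~(ii) of tree decompositions. Condition~4 (the local table at the root and at any leaf has at most one row) holds because nice tree decompositions have empty bags at the root and at all leaves, so the only possible row is $\langle\emptyset,\emptyset,\langle\rangle\rangle$.

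The substantial step is Condition~3, which demands that the projected interpretations of the satisfiable extensions of $\tau(t)$ be precisely the models of $\progt{t}$ that extend to answer sets of~$\prog$. This is exactly the content of Corollary~\ref{cor:primcorrectness}, which I would invoke rather than reprove; it is itself a consequence of the soundness and completeness of $\dpa_\PRIM$ recorded in Theorem~\ref{thm:primcorrectness}, via the $\PRIM$-solution and $\PRIM$-row solution correspondence (Definitions~\ref{def:globalhcf}--\ref{def:localhcf} together with Lemmas~\ref{lem:paspcorrect} and~\ref{lem:primcomplete}). Thus all four conditions hold and $\PRIM$ is admissible.

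The main obstacle is therefore not in the present proof but already resolved upstream: it is Condition~3, whose verification amounts to the Lin--Zhao and Ben-Eliyahu--Dechter characterisation of answer sets carried along the decomposition and proved correct in Theorem~\ref{thm:primcorrectness}. Granting that, the proof finishes in one line: instantiating Corollary~\ref{cor:correctness} with the admissible algorithm $\PRIM$ shows that $\mdpa{\PRIM}$ is correct and outputs the projected answer sets count for every \PASP instance restricted to head-cycle-free programs.
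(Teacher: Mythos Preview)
Your proposal is correct and follows essentially the same route as the paper: the paper's proof of the proposition is the single line ``This is a direct consequence of Corollary~\ref{cor:correctness}'', and the admissibility of $\PRIM$ that you spell out is established separately in the paper as an Observation, with exactly the breakdown you give (Conditions~1,~2,~4 by construction and nice TDs; Condition~3 by Theorem~\ref{thm:primcorrectness} and Corollary~\ref{cor:primcorrectness}). The only difference is organisational: you inline the admissibility verification, whereas the paper factors it out.
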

\end{restateproposition}
\begin{proof}
This is a direct consequence of Corollary~\ref{cor:correctness}.
\end{proof}

\begin{restateproposition}[prop:disjworks]
\begin{proposition}
  The algorithm $\mdpa{\algo{PRIM}}$ is correct and outputs for any instance
  of \PDASP its projected answer sets count.
\end{proposition}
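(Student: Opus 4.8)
\end{restateproposition}
\begin{proof}
The plan is to obtain this statement as a special case of the general correctness result already available for the framework. Observe first that $\mdpa{\algo{PRIM}}$ is exactly the instantiation of $\mdpa{\AlgA}$ in which the table algorithm $\AlgA$ is fixed to $\algo{PRIM}$, the dynamic-programming algorithm for \emph{disjunctive} ASP. Since Corollary~\ref{cor:correctness} already guarantees that $\mdpa{\AlgA}$ is correct and returns the projected answer sets count for \emph{every} admissible $\AlgA$, the only thing left to establish is that $\algo{PRIM}$ is admissible in the sense of Definition~\ref{def:asplocalsol}.

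First I would verify the four admissibility conditions for $\algo{PRIM}$. Conditions~1 and~2 --- that each stored interpretation part lies inside the current bag, and that origin rows agree with their successor on the shared bag atoms --- are immediate from the construction of the table algorithm together with the connectedness property of tree decompositions. Condition~4 is settled because we work with nice tree decompositions, whose root and leaf bags are empty, so $\local(t,\PExt_{\leq t}[\tau(t)])$ is trivially a singleton at those nodes. The only substantial requirement is Condition~3, which asks that $\mathcal{I}(\PExt_{\leq t}[\tau(t)])$ be \emph{exactly} the set of interpretations over $\att{t}$ that model $\progt{t}$ and extend to a full answer set of $\prog$.

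I expect Condition~3 to be the main obstacle, since it is precisely the soundness-and-completeness guarantee of the first dynamic-programming pass on disjunctive programs. I would discharge it by invoking the established correctness of the disjunctive DP table algorithm from prior work~\cite{FichteEtAl17a}, which shows that at every node the stored bag-restricted rows characterise exactly those partial witnesses that are extendable to an answer set below $t$; rephrasing that guarantee in the $\PExt$-language of Condition~3 is routine and parallels Corollary~\ref{cor:primcorrectness} for the head-cycle-free case. With admissibility in hand, the proposition follows in one line: applying Corollary~\ref{cor:correctness} with $\AlgA=\algo{PRIM}$ shows that, after the purging step, $\dpa_{\PROJ}$ reads off the correct projected answer sets count at the root $n$, exactly as in the proof of Proposition~\ref{prop:phcworks}.
\end{proof}
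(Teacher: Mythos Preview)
Your proposal is correct and follows essentially the same route as the paper: the paper's proof is the single line ``This is a direct consequence of Corollary~\ref{cor:correctness},'' and the admissibility of $\algo{PRIM}$ that you spell out is exactly what the paper establishes separately in the Observation following Definition~\ref{def:asplocalsol} (with Condition~3 discharged by citing~\cite{FichteEtAl17a}, just as you do). Your write-up simply unpacks that chain of implications in more detail than the paper does.
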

\end{restateproposition}
\begin{proof}
This is a direct consequence of Corollary~\ref{cor:correctness}.
\end{proof}

\longerversion{
\section{Correctness of QBF lower bound}

\begin{definition}[\cite{MarxMitsou16}]
  Given graph~$G=(V,E)$, integers~$i,j,r$ where~$i\leq j$ and total list-capacity function~$f: V \rightarrow \{i,\ldots,j\}$. Then an instance~$(G,r,f)$ of~$(i,j)$-Choosability Deletion asks for the existence of a set of vertices~$V'\subseteq V$ with~$\Card{V'}\leq r$ and~$V_1\eqdef V\setminus V'$, such that for all assignments~$\mathcal{L}: V_1 \rightarrow 2^{\{i,\ldots,j\}}$ with $\Card{\mathcal{L}(v)} = f(v)$ for all~$v\in V_1$, there is a coloring~$c: V_1 \rightarrow \mathcal{L}(v)$ such that for every edge~$(u,v)\in E\setminus (V_1\times V_1): c(u) \neq c(v)$.
\end{definition}

\begin{proposition}[\cite{MarxMitsou16}]
  Instances~$(G,r,f)$ of~$(1,4)$-Choosability Deletion where~$k$
  is the treewidth of~$G$, cannot be
  solved in time~${2^{2^{2^{o(k)}}}}\cdot \CCard{G}^{o(k)}$.
\end{proposition}

\begin{restateproposition}[prop:lampis3]
\begin{proposition}
  QBFs of the form $\forall V_1.\exists V_2.\forall V_3. E$ where~$k$
  is the treewidth of the primal graph of DNF formula~$E$, cannot be
  solved in time~${2^{2^{2^{o(k)}}}}\cdot \CCard{E}^{o(k)}$.
\end{proposition}
\end{restateproposition}
\begin{proof}
We proof the result by reducing from the problem~$(1,4)$-Choosability Deletion,
\end{proof}}

}

\end{document}

